\title{Typed $\lambda$-calculi and superclasses of regular transductions}
\titlerunning{}
\author{Lê Thành D\~{u}ng Nguy\~{ê}n}{LIPN,
  UMR 7030 CNRS, Université Paris 13, France \and \url{https://nguyentito.eu/}}{nltd@nguyentito.eu}{https://orcid.org/0000-0002-6900-5577}{}
\authorrunning{L.~T.~D.~Nguy\~{ê}n}
\keywords{streaming string transducers, simply typed
  $\lambda$-calculus, linear logic}
\theoremstyle{definition}
\newtheorem{notation}[theorem]{Notation}
\newtheorem{question}[theorem]{Open question}
\newcommand{\Str}{\mathtt{Str}}
\newcommand{\BT}{\mathtt{BT}}
\newcommand{\dBT}{{\partial\mathtt{BT}}}
\newcommand{\Bool}{\mathtt{Bool}}
\newcommand{\Nat}{\mathtt{Nat}}
\newcommand{\Fin}{\mathtt{Fin}}
\newcommand{\STlam}{{\mathrm{ST}\lambda}}
\newcommand{\EAlam}{{\mathrm{EA}\lambda}}
\newcommand{\muEAlam}{{\mu\mathrm{EA}\lambda}}
\newcommand{\naturalN}{\mathbb{N}}
\newcommand{\Lang}{\mathcal{L}}
\newcommand{\Compile}{\mathcal{C}}
\newcommand{\CbS}{\mathrm{CbS}}
\newcommand{\BinTree}{\mathrm{BinTree}}
\newcommand{\dBinTree}{\partial\mathrm{BinTree}}
\newcommand{\ExprBT}{\mathrm{ExprBT}}
\newcommand{\ExprdBT}{\mathrm{Expr}\partial\mathrm{BT}}
\newcommand{\ttl}{\triangleleft}
\newcommand{\ttr}{\triangleright}
\begin{document}
\maketitle

\begin{abstract}
  We propose to use Church encodings in typed $\lambda$-calculi as the basis for
  an automata-theoretic counterpart of implicit computational complexity, in the
  same way that monadic second-order logic provides a counterpart to descriptive
  complexity. Specifically, we look at transductions i.e.\ string-to-string (or
  tree-to-tree) functions -- in particular those with superlinear growth, such
  as polyregular functions, HDT0L transductions and Sénizergues's
  \enquote{$k$-computable mappings}.

  Our first results towards this aim consist showing the inclusion of some
  transduction classes in some classes defined by $\lambda$-calculi. In
  particular, this sheds light on a basic open question on the expressivity of
  the simply typed $\lambda$-calculus. We also encode regular functions (and, by
  changing the type of programs considered, we get a larger subclass of
  polyregular functions) in the elementary affine $\lambda$-calculus, a variant
  of linear logic originally designed for implicit computational complexity.
\end{abstract}

\section{Introduction}

The main goal of this paper is to provide some evidence for connections between:
\begin{itemize}
\item automata theory, in particular transducers (loosely defined as devices
  which compute string-to-string (or tree-to-tree) functions and are
  \enquote{finite-state} in some way);
\item programming language theory, in particular the expressive power of some
  \emph{typed $\lambda$-calculi}, i.e.\ some (minimalistic) statically typed
  functional programming languages.
\end{itemize}
Our first concrete result is:
\begin{theorem}
  \label{thm:stlam-intro}
  The functions from strings to strings that can be expressed (in a certain way)
  in the \emph{simply-typed $\lambda$-calculus} ($\STlam$) -- we shall call
  these the \emph{$\lambda$-definable string functions}
  (\cref{def:lambda-definable}) -- enjoy the following properties:
  \begin{itemize}
  \item they are closed under composition;
  \item they are \emph{regularity-preserving}: the inverse image of a \emph{regular
      language} is regular;
  \item they contain all the transductions defined by \emph{HDT0L systems
      (see~\cite{Senizergues,FiliotReynier})}, a variant of the \emph{L-systems}
    originally introduced by Lindenmayer for mathematical
    biology~\cite{Lindenmayer}.
  \end{itemize}
\end{theorem}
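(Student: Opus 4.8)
The plan is to treat the three assertions separately, relying throughout on the Church encoding of strings: a word $w = a_1\cdots a_n \in \Sigma^*$ is represented by the term $\overline{w} = \lambda (f_b)_{b\in\Sigma}.\,\lambda x.\, f_{a_1}(\cdots(f_{a_n}\,x))$ of type $\Str_\Sigma = (o\to o)^{|\Sigma|} \to o \to o$, where the base type $o$ may be instantiated by an arbitrary type -- this latitude being exactly what \cref{def:lambda-definable} grants. A $\lambda$-definable function $f : \Sigma^* \to \Gamma^*$ is then given by a closed term $t_f : \Str_\Sigma[o := A] \to \Str_\Gamma$ for some type $A$, with $t_f\,\overline{w}$ normalizing to $\overline{f(w)}$. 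The single fact I would isolate first is that syntactic substitution of the base type commutes with $\beta$-reduction and leaves Church encodings unchanged (since $\overline{w}$ is uniform in $o$); this underlies all three parts.

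For closure under composition, suppose $f$ is defined by $t_f : \Str_\Sigma[o:=A] \to \Str_\Gamma$ and $g$ by $t_g : \Str_\Gamma[o:=B] \to \Str_\Delta$. I would substitute $o := B$ throughout $t_f$, obtaining $t_f[o:=B] : \Str_\Sigma[o:=A[o:=B]] \to \Str_\Gamma[o:=B]$, whose output type now matches the input type of $t_g$; the composite $\lambda s.\, t_g\,(t_f[o:=B]\,s)$ then defines $g\circ f$, with correctness following from the commutation fact above. This step is essentially bookkeeping.

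For the HDT0L case, recall that such a system is given by a working alphabet $A$, a morphism $h_a : A^* \to A^*$ for each $a\in\Sigma$, an initial word $w_0\in A^*$ and a final morphism $h : A^* \to \Gamma^*$, with $f(a_1\cdots a_n) = h(h_{a_1}(\cdots h_{a_n}(w_0)))$. I would first check that any alphabet morphism is $\lambda$-definable as a term of type $\Str_A \to \Str_B$ at the base type, by replacing each letter-argument of the input Church string with the iterated action prescribed by the morphism. Then I would take the input string $\overline{a_1\cdots a_n}$ at the instantiated base type $o := \Str_A$ and apply it to the family of morphism-terms $\{H_a : \Str_A\to\Str_A\}$ and to $\overline{w_0} : \Str_A$; by the computation rule of Church encodings this yields $\overline{h_{a_1}(\cdots h_{a_n}(w_0))}$, and post-composing with the term for $h$ produces a term $t_f : \Str_\Sigma[o:=\Str_A] \to \Str_\Gamma$ defining $f$. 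The crux here is simply that the iteration built into the Church encoding, instantiated at the higher type $\Str_A$, coincides with the iterated morphism application of the HDT0L system.

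The regularity-preserving property is where I expect the real work to lie. Let $L\subseteq\Gamma^*$ be regular, recognized by a morphism $e : \Gamma^* \to M$ into a finite monoid with $L = e^{-1}(P)$. I would encode $M$ as a finite type $\underline M$ and build a term $\mathrm{run} : \Str_\Gamma[o:=\underline M] \to \underline M$ that feeds the left-multiplications $m\mapsto e(b)\cdot m$ and the unit into its argument, so that $\mathrm{run}\,\overline{u}$ computes $e(u)$. Instantiating $o:=\underline M$ in $t_f$ and composing gives a term $\Str_\Sigma[o:=C]\to\underline M$, where $C := A[o:=\underline M]$ is a \emph{finite} closed type, computing $w\mapsto e(f(w))$. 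The key lemma I would then prove is that, interpreting the calculus in the standard finite set-theoretic model, the map $w \mapsto \denot{\overline{w}}$ at the finite type $\Str_\Sigma[o:=C]$ is a homomorphism into a finite monoid: its image records, for each choice of letter-actions on the finite set $\denot{C}$, the induced transformation, and concatenation of words composes these transformations. Since $f^{-1}(L)$ is then the preimage of a subset of this finite monoid under a recognizing morphism, it is regular. The main obstacle is making this last lemma precise: verifying that type substitution is sound for the finite-model semantics and that the induced word map is genuinely a homomorphism, which is the step that ties the $\lambda$-calculus to recognition by finite monoids.
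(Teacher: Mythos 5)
Your proposal is correct, and for two of the three items it follows the paper's own route: your closure-under-composition argument is exactly the paper's substitution-lemma bookkeeping, and your HDT0L encoding (each morphism $h_c$ as a term of type $\Str_\Delta \to \Str_\Delta$ with no substitution needed, which is \cref{lem:stlam-morphism}; then the input instantiated at $o := \Str_\Delta$, applied to those terms and to $\overline{d}$, and post-composed with the term for the final morphism) is literally the term $\lambda z.\; u'\,(z\;u_1\;\ldots\;u_k\;\overline{d})$ of \cref{sec:stlam-string}.

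Where you genuinely diverge is regularity preservation. The paper obtains it as a corollary of the Hillebrand--Kanellakis theorem~\cite{HillebrandKanellakis}, used as a black box in both directions: regularity of $L$ yields a recognizer $u : \Str_\Sigma[B]\to\Bool$, the substitution lemma gives $\lambda x.\;u\,(t\,x) : \Str_\Gamma[A[B]]\to\Bool$, and the \enquote{definable implies regular} direction of the theorem then shows $f^{-1}(L) = \Lang(\lambda x.\;u\,(t\,x))$ is regular. You instead inline the semantic engine behind that direction: you compile the recognizing monoid $M$ into the calculus, form a term computing $w \mapsto e(f(w))$ into the encoded type $\underline{M}$, and argue directly that in the finite standard model the evaluation map $w \mapsto \denot{\overline{w}}$ at the instantiated string type is a homomorphism into a finite monoid (letter-actions to transformations, concatenation to pointwise composition), so $f^{-1}(L)$ is recognized by a finite monoid. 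This is sound, and it is essentially a self-contained proof of the hard direction of Hillebrand--Kanellakis specialized to your term; the cost is the semantic groundwork you rightly flag (soundness of the model under $=_\beta$ and under type substitution), plus one point left implicit: to pass from $\denot{v}(\denot{\overline{w}}) \in \denot{P}$ back to $e(f(w)) \in P$ you need distinct elements of $M$ to have distinct denotations, which holds as soon as $\denot{o}$ has at least two elements. In short, the paper's route is shorter given the citation, while yours makes explicit exactly the finite-model mechanism that the citation hides.
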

We believe that this is conceptually interesting for the study of both
$\lambda$-calculi and automata:
\begin{itemize}
\item It is directly relevant to a basic and natural open problem about the
  functions $\naturalN \to \naturalN$ definable (in some way) in $\STlam$. This
  problem is simple enough to be presented without assuming any background in
  programming language theory; we shall do this in
  \S\ref{sec:intro-lambda-definable}.
\item Another corollary is that the simply typed $\lambda$-calculus subsumes all
  the natural classes of regularity-preserving functions that we know of. We
  indeed prove in this paper that the \emph{closure by composition of HDT0L
    transductions} (a class that we shall abbreviate as
  \enquote{HDT0L+composition}) contains the \emph{polyregular functions}
  recently introduced by Bojańczyk~\cite{polyregular}; therefore, it includes
  \emph{a fortiori} the well-known classes of \emph{regular}, \emph{rational}
  and \emph{sequential} string functions (see e.g.~\cite{siglog,MuschollPuppis},
  or the introduction to~\cite{polyregular}).
\end{itemize}

The above-mentioned classes can be defined using transducers, and they admit
alternative characterizations which attest to their robustness. For instance,
regular functions can be also characterized by Monadic Second-Order
Logic~\cite{EngelfrietHoogeboom}.

\subparagraph{The general pattern: encoding transductions}

More generally, several results in this paper consist in considering, on one
hand, some class $\mathcal{A}$ of automata with output, and on the other hand,
some typed $\lambda$-calculus $\mathcal{P}$ with a type $T$. The programs of
type $T$ in $\mathcal{P}$ must be able to take (encodings of) strings as inputs
and output (encodings of) strings. Then \enquote{compiling} the automata in
$\mathcal{A}$ into programs in $\mathcal{P}$, we get:
\[
  \begin{Bmatrix}
    \text{functions computed}\\
    \text{by transducers in $\mathcal{A}$}  
  \end{Bmatrix}
  \subseteq
  \begin{Bmatrix}
    \text{functions computed}\\
    \text{by programs in $\mathcal{P}$ of type $T$}
  \end{Bmatrix}
\]
Of course, an equality sign here would be more satisfying. But we do not know
whether all the $\lambda$-definable string functions (in $\STlam$) are in
HDT0L+composition. In contrast, for our next result, even though we only claim
and prove an inclusion in this paper, we are actually fairly confident that the
converse holds. But it appears to be significantly more difficult than the
direction treated here: our tentative proof\footnote{A joint work with Paolo
  Pistone, Thomas Seiller and Lorenzo Tortora de Falco. We intend to present
  this work in a future paper -- hence the numbering in the title.} for this
converse -- which has not been thoroughly checked -- requires the development of
new tools in denotational semantics.

\subparagraph{Linear logic vs streaming string transducers}

This next result involves the \emph{elementary affine $\lambda$-calculus}
($\EAlam$) introduced by Baillot, De~Benedetti and Ronchi Della
Rocca~\cite{Benedetti}.
\begin{theorem}
  The programs of a certain type in $\EAlam$ compute all \emph{regular
    functions}, and compute only \emph{linear time} and
  \emph{regularity-preserving} functions.
\end{theorem}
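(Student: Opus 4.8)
The plan is to fix the \enquote{certain type} to be the linear/affine analogue of the type of transductions, namely $\Str_\Sigma \multimap \oc\Str_\Gamma$ (up to a fixed number of modalities), where $\Str_\Sigma = \forall \alpha.\, \oc(\alpha \multimap \alpha) \multimap \cdots \multimap \oc(\alpha \multimap \alpha) \multimap \alpha \multimap \alpha$ carries one exponential letter-action per element of $\Sigma$, and then to establish the two halves of the statement separately: \emph{completeness} (every regular function is computed by such a program) and \emph{soundness} (every such program is linear-time and regularity-preserving).

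For completeness, I would route through \emph{copyless streaming string transducers} (SSTs), which compute exactly the regular functions. The decisive observation is that the copyless restriction --- each register occurs at most once across the right-hand sides of an update --- is precisely a \emph{linearity} (affineness) condition, so the per-letter register update is an affine map on the tuple of register contents. Concretely, I would represent a configuration by the type $C = \Fin_{|Q|} \otimes \Str_\Gamma \otimes \cdots \otimes \Str_\Gamma$ (state together with the $k$ register values), and encode for each letter $a$ the transition $\tau_a : C \multimap C$ as an affine term that case-splits on the finite state (an additive $\with$-choice, so each linear register variable is consumed exactly once in every branch) and rearranges the registers copylessly. The transduction is then obtained by feeding the family $(\tau_a)_a$ to the Church encoding of the input and extracting the output register from the resulting configuration: since each $\tau_a$ is closed, it can be promoted to type $\oc(C \multimap C)$ and used once per occurrence of $a$, which is exactly what the $\EAlam$ typing of $\Str_\Sigma$ demands.

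For soundness, I would reduce regularity-preservation to a finiteness theorem in the style of Hillebrand--Kanellakis. Given a regular $L \subseteq \Gamma^*$ recognized by a morphism $h : \Gamma^* \to M$ into a finite monoid with $L = h^{-1}(F)$, the evaluation $w \mapsto h(w)$ is obtained by instantiating the Church-string type variable of $\Str_\Gamma$ with the finite set $M$; composing it with the program $P$ yields a closed term of type $\Str_\Sigma \multimap (\text{finite type})$. Interpreting such a term in a finite denotational model (finite sets, with the type variable instantiated finitely) shows that the induced map $\Sigma^* \to M$ depends only on a bounded amount of information carried by a finite automaton, so $f^{-1}(L) = (h \circ f)^{-1}(F)$ is regular. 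Linear time would then follow from the bounded $\oc$-depth enforced by the chosen type: the program performs a single iteration over the input, each step doing bounded work on the configuration, and I would make this precise through a reduction/abstract-machine analysis in the spirit of the complexity results for $\EAlam$ of Baillot, De~Benedetti and Ronchi Della Rocca.

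The main obstacle I expect is the linear-time bound. Elementary affine logic is by default associated with the class \textsc{Elementary}, so extracting a genuinely \emph{linear} time bound requires exploiting the severe restriction imposed by the type --- a single modality and a single level of iteration --- together with a careful accounting, via a suitable abstract machine, that each input letter is visited a bounded number of times during normalization. Reconciling this tight bound with the encoding produced in the completeness direction, so that one and the same type witnesses both halves of the theorem, is the delicate point.
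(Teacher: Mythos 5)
Your plan founders on the choice of type, and this is not a presentational detail but the crux of the matter. The type you propose, $\Str_\Gamma \multimap \oc\Str_\Sigma$ (one exponential on the output side only), is \emph{not} sound for the second half of the statement: at this type one can already define functions of exponential growth. Indeed, $\Nat$ is a legal instantiation of the input's quantified type variable (it is a strictly linear type), and doubling, $\mathtt{double} = \lambda u.\; \lambda\oc f.\; u\; \oc(\lambda x.\; f\,(f\,x))$, is a \emph{closed} term of type $\Nat \multimap \Nat$; hence $\lambda z.\; (\lambda\oc h.\; \oc(h\;\overline{1}))\; (z\; \oc{\mathtt{double}})$ has type $\Nat \multimap \oc\Nat$ and computes $n \mapsto 2^n$. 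More generally, every monoid endomorphism of $\Delta^*$ is a closed term of type $\Str_\Delta \multimap \Str_\Delta$, so at your type one can iterate such substitutions and encode arbitrary copyful HDT0L systems; no linear-time bound can hold. Your hedge \enquote{up to a fixed number of modalities} is exactly where the difficulty lives: the theorem concerns the \emph{balanced} types $\Str_\Gamma \multimap \Str_\Sigma$ and $\oc\Str_\Gamma \multimap \oc\Str_\Sigma$, at which such iterations are blocked, and for which the soundness half (linear/polynomial time via a stratified reduction strategy, regularity preservation by composing with terms of type $\Str_\Sigma \multimap \oc\Bool$ recognizing regular languages) is available from the paper's earlier results. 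Your alternative finite-semantics route to regularity preservation is secondary to this issue, and would anyway have to handle the unbalanced type $\Str_\Gamma \multimap \oc\oc\Fin(m)$ that your composition produces, which the cited characterizations do not cover.

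The completeness half then inherits the problem. Your configuration type $C = \Fin(|Q|) \otimes \Str_\Sigma \otimes \cdots \otimes \Str_\Sigma$ contains exponentials (inside each $\Str_\Sigma$), so after feeding the promoted transitions $\oc(\tau_a)$ to the input and unpacking the fold with a $\lambda\oc h$, the extracted output sits one exponential depth too low: you obtain $\oc\Str_\Sigma$, and by stratification (no dereliction or digging in $\EAlam$) there is no way to lower it to an output of type $\Str_\Sigma[\alpha]$. This is precisely the \enquote{naive} attempt that the paper singles out and discards. The paper's fix is to iterate over the \emph{exponential-free} type $A = \Fin(|Q|) \multimap (\alpha \multimap \alpha)^{|R|} \multimap (\alpha \multimap \alpha)$, representing register contents relatively to ambient non-linear variables $f_i : \alpha \multimap \alpha$ supplied by the output's own $\lambda\oc f_i$ binders; and, because the Church fold processes the word right-to-left while an SST runs left-to-right, what gets iterated is not the configuration update but the backwards action $\delta^O$ on \emph{output functions} $Q \to (\Sigma \cup R)^*$ (your encoding, read literally, also runs the SST on the reversed input --- fixable, but unaddressed). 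Your central intuition --- copylessness equals $\EAlam$-linearity --- is indeed the paper's, and it survives in the corrected encoding, where each register variable occurs at most once in the terms $\widetilde{s_{a,q}(r)}$; but the encoding must be the relative, exponential-free one for the theorem's type to be attainable at all.
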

See \cref{thm:ealam} for a precise statement. $\EAlam$ is mainly inspired by
Girard's \emph{linear logic}~\cite{girardLL}, a \enquote{resource-sensitive}
constructive logic that has already been used to characterize complexity classes
(see \S\ref{sec:intro-other}). In programming languages, \emph{linearity} refers
to the prohibition of \emph{duplication}: a function is linear if it uses its
argument at most\footnote{Strictly speaking, such a function is \emph{affine}; a
  linear function uses its argument \emph{exactly} once. But we follow here a
  widespread abuse of language.} once. Linearity appears in automata theory
under the name\footnote{The term \enquote{linearity} itself has also been used,
  e.g.\ in~\cite{siglog}: \enquote{updates should make a linear use of
    registers}.} \enquote{copyless assignment}. This refers to a technical
condition in the definition of \emph{streaming string transducers} (SSTs), a
machine model introduced by Alur and Černý~\cite{SST}. Hence the relevance of
the elementary affine $\lambda$-calculus to regular functions:
\begin{itemize}
\item the functions computed by SSTs are exactly the regular functions;
\item without the linearity condition, the class obtained is instead the HDT0L
  transductions, as proved recently by Filiot and Reynier~\cite{FiliotReynier}.
\end{itemize}
Thus, our work gives a precise technical contents to this analogy between
linearity in $\lambda$-calculi and copyless assignments in automata theory: what
makes \enquote{copyful SSTs} impossible to encode in $\EAlam$ is their
non-linearity.

\subparagraph{String functions of superlinear growth}

In the above theorem, instead of the converse inclusion, we have merely stated
an upper bound on the definable string functions in $\EAlam$, in terms of time
complexity. This already means that we capture a much smaller class of functions
than in our previous result on $\STlam$. Indeed, since HDT0L transductions can
grow exponentially, when one composes them, the rates of growth can become
towers of exponentials. However, the other classes that we mentioned contain
only tractable functions: not only do they grow polynomially, they are also
computable in polynomial time. In particular the regular functions are computed
in linear time -- and we match this bound.

Another instance of our pattern takes place in the same language $\EAlam$. By
changing the type of programs considered, we manage to code a larger subclass of
polyregular functions -- it contains functions whose output length may grow
polynomially with arbitrary exponent. As an added benefit, this partially
answers a natural question concerning $\EAlam$ (we discuss this further in
\S\ref{sec:intro-other}).

With this last result, together with \cref{thm:stlam-intro}, we hope to
contribute to the recent surge of interest in superlinear transductions,
exemplified by the introduction of polyregular functions~\cite{polyregular} --
whose slogan is \enquote{the polynomial growth finite state transducers} -- and
the study of non-linear streaming string transducers. Concerning the latter,
HDT0L systems were mostly used to describe \emph{languages} previously; in fact,
before Filiot and Reynier's work~\cite{FiliotReynier}, their semantics as
transductions seems to have been considered only once: in an invited paper
without proofs~\cite{Senizergues}, Sénizergues claims to characterize the
HDT0L+composition class using iterated pushdown automata.

\subparagraph{Tree transductions}

Finally, we shall also see that in all the above programming languages, there is
a type of functions from binary trees to binary trees, and all \emph{regular
  tree functions} can be encoded as programs of this type. This relies on their
characterization by \emph{bottom-up ranked tree transducers}~\cite{BRTT}
generalizing SSTs with a relaxed and more subtle linearity condition -- closely
related, as we shall see, to the \emph{additive conjunction} of linear logic
(while the linearity of SSTs is purely \emph{multiplicative}). We do not
investigate superlinear tree transducers here.

\subparagraph{Plan of the paper}

In the remainder of this introduction, we first briefly present the simply typed
$\lambda$-calculus, and state our motivating problem on the functions $\naturalN
\to \naturalN$ that it can express (\S\ref{sec:intro-lambda-definable}). The
other introductory subsection (\S\ref{sec:intro-other}) situates our work in the
conceptual landscape, and surveys related work and inspirations.
\Cref{sec:transducers} introduces the automata-theoretic classes of functions
studied here, and proves some inclusions between them. \Cref{sec:stlam} and
\ref{sec:ealam} are dedicated respectively to $\STlam$ and $\EAlam$.

\subparagraph{Intended audience}

We have attempted to make the parts involving the simply typed
$\lambda$-calculus accessible to a broad audience, since the arguments involved
are rather elementary. However, the exposition of the results on $\EAlam$
assumes some familiarity with linear logic.

\subsection{Motivation: $\lambda$-definable numeric functions}
\label{sec:intro-lambda-definable}

\subsubsection{Introduction to the $\lambda$-calculus and to Church encodings}

The untyped $\lambda$-calculus is a naive syntactic theory of functions. Its
terms are generated by the grammar\footnote{The terms must actually be
  considered up to renaming of bound variables, just as usual mathematical
  practice dictates that $x$ is bound in $t$ in the expression $x \mapsto t$.
  The details of this renaming equivalence, called
  \enquote{$\alpha$-conversion}, are uninteresting and can be found in any
  textbook on the $\lambda$-calculus. Similarly, in the substitution $t\{x:=u\}$
  introduced later, only the free occurrences of $x$, i.e.\ not appearing under
  a $\lambda x.$, must be substituted.} $t,u ::= x \mid t\; u \mid \lambda x.\;
t$ (where $x$ is taken in a countable set of \enquote{variables}), which mirrors
the basic operations of function application ($t\; u \approx t(u)$) and function
formation ($\lambda x.\; t \approx x \mapsto t$). The equational theory on these
$\lambda$-terms is the congruence generated by
\[ (\lambda x.\; t)\; u =_\beta t\{x := u\} \qquad \text{where $t\{x:=u\}$ is
    the substitution of $x$ by $u$ in $t$}\]
which corresponds to the usual way of computing a function, e.g.\ $(x \mapsto
x^2 + 1)(42) = 42^2 + 1$.

This example cannot be directly expressed in the $\lambda$-calculus since it
does not have primitive integers among its terms. Instead, we use \emph{Church
  encodings} to represent natural numbers: morally, $n \in \naturalN$ is encoded
as the $n$-fold iteration functional $\overline{n}: f \mapsto f^{\,n} = f \circ
\ldots \circ f$. For instance, $\overline{2} = \lambda f.\; (\lambda x.\; f\;
(f\; x))$. Using this encoding, the untyped $\lambda$-calculus can represent any
computable function $f : \naturalN \to \naturalN$: for some term $t$, $t\;
\overline{n} =_\beta \overline{f(n)}$ for all $n \in \naturalN$.

To avoid the pitfalls of Turing-completeness (e.g.\ to obtain only total
functions), one technique is to add a \emph{type system}: a way of annotating
terms with \emph{types} specifying some of their behavior. In the simply typed
$\lambda$-calculus ($\STlam$), we use the \emph{simple types} defined as $A, B
:= o \mid A \to B$, where $o$ is the single \emph{base type}. We write $t : A$
when the term $t$ can be given the type $A$. The meaning of $t : A \to B$ is
morally that $t$ is a function taking inputs of type $A$ and returning outputs
of type $B$.

The rules of $\STlam$ allow us for example to show that assuming $f : o \to o$
and $x : o$, we have $f\; (f\; x) : o$; from this, one can then deduce that
$\overline{2} = \lambda f.\; (\lambda x.\; f\; (f\; x)) : (o \to o) \to (o \to
o)$ (without assumption). In general, one can show that the terms of type $\Nat
= (o \to o) \to (o \to o)$, quotiented by $=_\beta$, are in
bijection\footnote{Except for the term $\lambda f.\; f$, but it may be
  identified with $\overline{1} = \lambda f.\; (\lambda x.\; f\; x)$ by
  extending the equational theory with the innocuous \enquote{$\eta$-rule}: if
  $t : A \to B$ for some $A,B$ then $t =_\eta \lambda y.\; t\; y$.} with
$\naturalN$ via $n \mapsto \overline{n}$. So $\Nat$ can legitimately be seen as
the type of natural numbers in $\STlam$.

\subsubsection{A question: expressible functions in the simply typed
  $\lambda$-calculus}

At this point, we may ask: \emph{what are the functions $\naturalN \to
  \naturalN$ definable in $\STlam$?} As hinted in the introduction, this kind of
question depends heavily on the \emph{type} of the programs (i.e.\
$\lambda$-terms) that we use to code these functions. A classical result is:
\begin{theorem}[Schwichtenberg 1975~{\cite{Schwichtenberg}}]
  \label{thm:schwichtenberg}
  Let $f : \naturalN \to \naturalN$. There exists $t : \Nat \to \Nat$ such that
  $t\; \overline{n} =_\beta \overline{f(n)}$ for all $n \in \naturalN$ if and
  only if $f$ is an \emph{extended polynomial}, i.e.\ a function generated from
  0, 1, $+$, $\times$ and a conditional $\mathtt{if}\ n = 0\ \mathtt{then}\ p\
  \mathtt{else}\ q$.
\end{theorem}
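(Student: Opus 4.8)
The plan is to prove the two implications separately. For the \enquote{if} direction (every extended polynomial is $\lambda$-definable), I would proceed by structural induction on the expression defining the extended polynomial, exhibiting for each one a closed term of type $\Nat \to \Nat$ denoting it as a function of the variable $n$. A constant $c$ is $\lambda n.\, \overline{c}$, the variable is the identity $\lambda n.\, n$, and sum and product use the standard combinators: $p+q$ is realised by $\lambda n.\, (\lambda m\, k\, f\, x.\, m\, f\, (k\, f\, x))\,(t_p\, n)\,(t_q\, n)$ and $p \times q$ by $\lambda n.\, (\lambda m\, k\, f.\, m\,(k\, f))\,(t_p\, n)\,(t_q\, n)$, where $t_p, t_q$ are the terms already built for $p$ and $q$. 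The only point requiring a little care is the conditional, for which
\[ \lambda n\, f\, x.\; n\; (\lambda y.\, t_q\, n\, f\, x)\; (t_p\, n\, f\, x) \]
computes \enquote{$\mathtt{if}\ n = 0\ \mathtt{then}\ p\ \mathtt{else}\ q$}, since $\overline{0}$ returns its base argument (yielding $p$) whereas any $\overline{k}$ with $k \geq 1$ iterates a constant function at least once (yielding $q$). One checks all these terms are simply typable at the claimed types, so this direction is routine.

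The substance of the theorem is the \enquote{only if} direction, which I would attack through a normal-form analysis. Since every $\STlam$ term has a unique $\beta\eta$-normal form, I may assume $t$ is in long normal form, so $t = \lambda m\, f\, x.\, M$ with $m : \Nat$, $f : o \to o$, $x : o$ and $M : o$ normal. The subformula property forces a rigid shape on normal terms of base type over these variables: such a term is an $f$-stack $f^{a}(H)$ whose head $H$ is either $x$, or a bound base variable, or an \emph{iteration} $m\, P\, Q$ with $P : o \to o$ and $Q : o$ again normal. Reading $m$ as $\overline{n}$ and keeping $f, x$ generic, each normal term of type $o$ then denotes $f^{c}(x)$ for some \enquote{count} $c$, and the goal becomes: \emph{every such count, viewed as a function of $n$, is an extended polynomial}. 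I would prove this by induction on the normal form, generalised to an invariant that also constrains subterms of type $o \to o$ (each denotes either an affine shift $z \mapsto f^{e(n)}(z)$ or a constant map $z \mapsto f^{e(n)}(x)$, with $e$ an extended polynomial).

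The two recursive operations of extended polynomials arise from exactly the two ways counts combine. Stacking $f$'s onto a head adds counts, accounting for $+$, while the iteration $m\, P\, Q$ is where both $\times$ and the conditional appear. When $P$ is an affine shift by $e(n)$, iterating it $n$ times multiplies, giving count $n \cdot e(n) + (\text{count of } Q)$, a formula valid uniformly \emph{including} at $n = 0$. When instead $P$ is a constant map (its body ignores the bound variable), iterating it yields $Q$ if $n = 0$ and $P$'s constant value otherwise: precisely an \enquote{$\mathtt{if}\ n = 0$} conditional between two extended polynomials. Since extended polynomials are closed under $+$, $\times$ and this zero-test conditional, the invariant is preserved, and at the top level the count of $M$ is the extended polynomial computing $f$.

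The main obstacle is formulating the induction invariant strongly enough to survive the iteration case: the step function $P$ fed to $m$ is itself an arbitrary normal term of type $o \to o$ that may contain further nested occurrences of $m$, so the invariant must pin down the semantics of \emph{every} higher-type subterm rather than just the top-level count, and must be stable under the \enquote{iterate $n$ times} operation. Getting this bookkeeping right — in particular isolating the dichotomy affine-shift versus constant-map, which cleanly yields multiplication in one case and the zero-test in the other — is the crux of the argument; everything else reduces to routine verification supported by normalisation and the subformula property.
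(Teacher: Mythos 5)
The paper itself offers no proof of this theorem -- it is quoted as a classical result of Schwichtenberg -- so your proposal has to stand on its own. The \enquote{if} direction is fine: your constants, identity, Church addition and multiplication, and the conditional term $\lambda n\, f\, x.\; n\,(\lambda y.\, t_q\, n\, f\, x)\,(t_p\, n\, f\, x)$ all type-check and compute the right functions.

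The gap is in the \enquote{only if} direction, at exactly the spot you flag as the crux: the invariant you state is not correct, and its natural weakening is too weak. You claim every normal subterm of type $o \to o$ denotes \emph{either} an affine shift $z \mapsto f^{e(n)}(z)$ \emph{or} a constant map $z \mapsto f^{e(n)}(x)$, with $e$ an extended polynomial. Read uniformly in $n$, this is false: $\lambda z.\; m\;(\lambda y.\; x)\;z$ denotes the identity (shift by $0$) when $n = 0$ and the constant map $z \mapsto x$ when $n \geq 1$, so it is neither. This is precisely the phenomenon you describe for $m\,P\,Q$ with $P$ a constant map, but you never fold it back into the invariant, so the induction does not close. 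Read pointwise in $n$ instead, the invariant survives iteration but no longer gives the conclusion: if the set $A$ of parameters $n$ at which a step function $P$ is affine were arbitrary, the count of $m\,P\,Q$ would be $n\,e(n)+d(n)$ on $A$ and roughly $e(n)$ off $A$, which for $A$ the even numbers is not an extended polynomial. The missing idea is that the invariant must carry the zero-test in its own statement: the \emph{mode} (affine versus constant) and the \emph{head variable} of a denotation may depend on $n$, but only by being constant on $\{0\}$ and on $\{n \geq 1\}$ -- this is what iteration preserves, since for $n \geq 1$ the mode of $P^n$ equals that of $P$ -- while all exponents are extended polynomials; closure of extended polynomials under the $n=0$ conditional then absorbs the case split. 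A second, smaller omission: under nested abstractions the head of a subterm can be \emph{any} enclosing bound base variable, not only the innermost $z$ or $x$, so the invariant must be formulated for terms with several free base variables. With these repairs your normal-form strategy is the standard route to Schwichtenberg's theorem and goes through.
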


So the $\lambda$-terms of type $\Nat \to \Nat$ have a rather low expressivity.
One trick to allow more functions to be defined is to perform a substitution of
the input type.
\begin{notation}
  For types $A$ and $B$, we abbreviate the substitution $A\{o:=B\}$ as $A[B]$.
\end{notation}

We shall consider $\lambda$-terms of type $\Nat[A] \to \Nat$ (by expanding the
definitions, $\Nat[A] = (A \to A) \to (A \to A)$) where $A$ is an arbitrary
simple type. Terms of this type still define numeric functions, thanks to a
simple \enquote{substitution lemma}: $\overline{n} : \Nat$ entails that
$\overline{n}$ can also be given the type $\Nat[A]$ for all types $A$ and all $n
\in \naturalN$. Typically, one can check that $(\lambda x.\; x\; \overline{2}) :
\Nat[o \to o] \to \Nat$ represents the function $n \mapsto 2^n$.

To our knowledge, there is only one characterization of the class of functions
$\naturalN \to \naturalN$ thus obtained, due to Joly~\cite{Joly}. It is
formulated in terms of untyped $\lambda$-terms subject to a kind of complexity
constraint in an unrealistic (by Joly's own admission) cost model. Therefore, it
would be of obvious interest to describe this class without reference to the
$\lambda$-calculus.

\begin{question}
  Characterize the functions $f : \naturalN \to \naturalN$ definable in $\STlam$
  in the following way: there exists a type $A$ and a term $t : \Nat[A] \to
  \Nat$ such that for all $n \in \naturalN$, $t\;\overline{n} =_\beta
  \overline{f(n)}$.
\end{question}

It might seem surprising that this problem is still open despite the central
role that the simply typed $\lambda$-calculus has played in programming language
theory and in proof theory for the past few decades. We believe that this is due
in part by some well-known facts (cf.~\cite{FortuneLeivant}) that suggest that
there might be no satisfying answer: while any tower of exponentials $2
\uparrow^h n$ of fixed height $h$ can be expressed by a term of type $\Nat[A_h]
\to \Nat$ ($A_h$ becoming increasingly complicated as $h \to +\infty$), many
simple functions of tame growth are inexpressible. If we look at functions of
two variables, there is a striking example: \emph{subtraction} cannot be defined
by any term of type\footnote{We see a function $f : \naturalN \times \naturalN
  \to \naturalN$ as the function $x \in \naturalN \mapsto (y \mapsto f(x,y)) \in
  \naturalN^\naturalN$; cf.\ \S\ref{sec:stlam}.} $\Nat[A] \to (\Nat[B] \to
\Nat)$, no matter what simple types $A,B$ are chosen.

One aim of the present paper -- starting with the subsection below, which lends
a new significance to old results -- is to argue that this pessimism is perhaps
unwarranted.

\subsubsection{The relevance of automata to $\lambda$-definability}
\label{sec:hillebrand}

To gain some insight on this problem, let us both generalize and (temporarily)
specialize it:
\begin{itemize}
\item We replace natural numbers by \emph{strings} over a finite alphabet
  $\Sigma$. There exists a simple type $\Str_\Sigma$ of \emph{Church-encoded
    strings} and an encoding $t \in \Sigma^* \rightsquigarrow \overline{t} :
  \Str_\Sigma$ inducing a bijection $\Sigma^* \cong (\{t \mid t :
  \Str_\Sigma\}/=_\beta)$. We recover Church numerals as the special case of
  Church-encoded strings over \emph{unary alphabets}: $\Nat = \Str_{\{1\}}$.
  Schwichtenberg's result on $\Nat \to \Nat$ (\cref{thm:schwichtenberg}) can be
  suitably generalized to $\Str_\Gamma \to \Str_\Sigma$
  (see~\cite{Zaionc,LeivantFA}).
\item We shall start by looking at predicates $\Str[A] \to \Bool$ -- i.e.\ at
  \emph{languages} -- instead of functions $\Str[A] \to \Str$, with the usual
  type  $\Bool = o \to (o \to o)$ of booleans in $\STlam$.
\end{itemize}
Fortunately, the languages definable in $\STlam$ are known:

\begin{theorem}[Hillebrand \& Kanellakis 1995 {\cite{HillebrandKanellakis}}]
  A language $L \subseteq \Sigma^*$ can be expressed as $L = \Lang(t) = \{w \in
  \Sigma^* \mid t\;\overline{w} =_\beta \mathtt{true} \}$ for some
  $\lambda$-term $t : \Str_\Sigma[A] \to \Bool$ and some simple type $A$ if and
  only if it is a \emph{regular language}.
\end{theorem}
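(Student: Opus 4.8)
The plan is to prove both inclusions. The \enquote{if} direction (regular languages are $\STlam$-definable) is an explicit encoding of a finite automaton, while the \enquote{only if} direction (definable languages are regular) rests on a finite denotational model together with a Myhill--Nerode argument; the latter is where the real work lies.

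For the \enquote{if} direction, I would start from a deterministic finite automaton $\mathcal{A} = (Q,\delta,q_0,F)$ recognising $L$, with $|Q| = k$. The key is to represent the state space as a simple type with exactly $k$ inhabitants: take $\Fin_k = \underbrace{o \to \cdots \to o}_{k} \to o$, whose closed $\beta\eta$-normal forms are precisely the $k$ projections $e_1, \dots, e_k$, which I identify with the states of $\mathcal{A}$. I would then invoke the standard fact that every function between such finite types is $\lambda$-definable (it can be written by case analysis, the inhabitants being pairwise distinguishable) to obtain, for each letter $a \in \Sigma$, a term $\hat\delta_a : \Fin_k \to \Fin_k$ implementing the transition $\delta(-,a)$, and a term $\chi_F : \Fin_k \to \Bool$ testing membership in $F$. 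Setting $A = \Fin_k$ and
\[ t \;=\; \lambda s.\ \chi_F\,\bigl(s\ \hat\delta_{a_1}\ \cdots\ \hat\delta_{a_m}\ e_{q_0}\bigr) \;:\; \Str_\Sigma[\Fin_k] \to \Bool, \]
the encoding $\overline{w}$ instantiated at $\Fin_k$ iterates the transitions along $w$ starting from $e_{q_0}$, so that $t\,\overline{w} =_\beta \mathtt{true}$ iff the run of $\mathcal{A}$ ends in $F$. A minor orientation point is that $\overline{w}$ composes its letter-functions right-to-left; I would absorb this by running instead an automaton for the reverse language, which is again regular.

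For the \enquote{only if} direction, suppose $t : \Str_\Sigma[A] \to \Bool$. I would fix the finite full set-theoretic model of $\STlam$ over a two-element base, $\denot{o} = \{0,1\}$; then every $\denot{B}$ is a finite set and the interpretation is sound for $=_\beta$ (and $=_\eta$). Two facts make this model adequate here. First, by strong normalisation the closed term $t\,\overline{w} : \Bool$ reduces to one of the only two closed normal forms of $\Bool$, namely $\mathtt{true}$ or $\mathtt{false}$, which the model distinguishes (as $\denot{\mathtt{true}} \neq \denot{\mathtt{false}}$ whenever the base has at least two elements); hence $t\,\overline{w} =_\beta \mathtt{true}$ iff $\denot{t}(\denot{\overline{w}}) = \denot{\mathtt{true}}$. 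Second, the value $\denot{\overline{w}}$ ranges over the finite set $\denot{\Str_\Sigma[A]}$ as $w$ varies. It then remains to turn this finiteness into regularity via Myhill--Nerode: define $w \sim w'$ iff $\denot{\overline{w}} = \denot{\overline{w'}}$, which has index at most $|\denot{\Str_\Sigma[A]}|$. Because Church encoding sends concatenation to composition of iterators, $\denot{\overline{wa}}$ is a fixed function of $\denot{\overline{w}}$ and the letter $a$, so $\sim$ is a right congruence; and membership in $\Lang(t)$ depends only on $\denot{\overline{w}}$, hence only on the $\sim$-class. A language saturated by a finite-index right congruence is regular, so $\Lang(t)$ is regular.

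The main obstacle is the converse direction, and within it the design of an adequate finite semantics: one must argue that equality in the finite model at type $\Bool$ coincides with $\beta$-convertibility to $\mathtt{true}$ (this is where normalisation and the classification of $\Bool$-normal forms enter) and that the interpretation of the Church encoding is compositional with respect to concatenation in exactly the way needed for the right-congruence property. By contrast, the forward direction becomes routine once the $\lambda$-definability of all functions on the finite types $\Fin_k$ is in hand.
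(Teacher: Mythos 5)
The paper does not prove this theorem: it is imported verbatim from Hillebrand and Kanellakis, so there is no internal proof to compare yours against. Judged on its own, your argument is correct, and it is in substance the classical one for this result: encode a DFA for the easy direction, and for the converse use semantic evaluation in the full finite set-theoretic model over a finite base, turning finiteness of $\denot{\Str_\Sigma[A]}$ into regularity via a Myhill--Nerode right congruence. All the delicate points are handled: the right-to-left orientation of the Church encoding (fixed by running an automaton for the reverse language), the need for strong normalisation plus the classification of closed normal forms of type $\Bool$ to get that $t\,\overline{w}$ is convertible to $\mathtt{true}$ or $\mathtt{false}$ and that the model separates the two, and the compositionality $\denot{\overline{wa}} = \denot{\overline{w}}(-) \circ (-)_a$ that makes the congruence a right congruence.

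One phrase deserves a caveat. You invoke ``the standard fact that every function between such finite types is $\lambda$-definable.'' As a general statement about the full finite type hierarchy this is false -- most elements of the full model are not denotations of closed terms, and definability is even undecidable by Loader's theorem -- so the word \emph{such} is carrying real weight. For the types you actually use, whose closed normal inhabitants are exactly the $k$ projections, the explicit case-analysis terms do the job: for $g : \{1,\dots,k\} \to \{1,\dots,k\}$ one takes $\lambda s.\,\lambda x_1.\ldots\lambda x_k.\; s\; x_{g(1)} \cdots\, x_{g(k)}$, and similarly for the acceptance test valued in $\Bool$. Since this is precisely the justification you sketch in parentheses, the proof stands; just phrase it as a property of these particular types rather than of finite types at large.
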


Furthermore, Joly stated his result~\cite{Joly} for arbitrary free algebras, and
the above theorem also generalizes to a characterization of \emph{regular tree
  languages} for such free algebras (using the right definition of Church
encoding). As for the specialization to $\naturalN$, it tells us that a subset
of $\naturalN$ can be decided by a term of type $\Nat[A] \to \Bool$ if and only
if it is \emph{ultimately periodic} -- this fact is generalized in Joly's paper
to ultimately periodic subsets of $\naturalN^k$.

We deduce from the above theorem the regularity preservation claimed in
\cref{thm:stlam-intro}:
\begin{definition}
  \label{def:lambda-definable}
  A \emph{$\lambda$-definable string function} is a function $f : \Gamma^* \to
  \Sigma^*$ that can be expressed by some term $t : \Str_\Gamma[A] \to
  \Str_\Sigma$, in the sense that $t\;\overline{w} = \overline{f(w)}$ for all $w
  \in \Sigma^*$.
\end{definition}
\begin{corollary}
  The preimage of a regular language by a $\lambda$-definable function is regular.
\end{corollary}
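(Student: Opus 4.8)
The plan is to reduce regularity preservation to Hillebrand–Kanellakis via composition of $\lambda$-terms. Suppose $f : \Gamma^* \to \Sigma^*$ is $\lambda$-definable, witnessed by $t : \Str_\Gamma[A] \to \Str_\Sigma$, and let $L \subseteq \Sigma^*$ be a regular language. I want to show $f^{-1}(L)$ is regular. By the Hillebrand–Kanellakis theorem, $L$ being regular means there is a simple type $B$ and a term $d : \Str_\Sigma[B] \to \Bool$ with $\Lang(d) = L$, i.e.\ $d\,\overline{v} =_\beta \mathtt{true}$ iff $v \in L$. The natural idea is to form the composite $\lambda w.\; d\,(t\,w)$ and argue that it decides $f^{-1}(L)$; then, applying Hillebrand–Kanellakis in the other direction, the language it decides is regular, which is exactly $f^{-1}(L)$.

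The key steps, in order, are as follows. First I would set up the composition at the level of terms: given $t : \Str_\Gamma[A] \to \Str_\Sigma$ and $d : \Str_\Sigma[B] \to \Bool$, I want a single term of type $\Str_\Gamma[A'] \to \Bool$ for some simple type $A'$. The subtlety is that $t$ outputs a term of type $\Str_\Sigma$ (i.e.\ $\Str_\Sigma[o]$), whereas $d$ expects an argument of type $\Str_\Sigma[B]$. Here I would invoke the substitution lemma already mentioned in the excerpt: a Church-encoded string $\overline{v} : \Str_\Sigma$ can be assigned the type $\Str_\Sigma[B]$ as well, for any simple type $B$. At the level of types this is handled by substituting $B$ for the base type $o$ throughout the derivation of $t$: the term $t$ can equally be typed as $\Str_\Gamma[A[B]] \to \Str_\Sigma[B]$, since simple typing is preserved under substitution of a type for the base type $o$. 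Taking $A' = A[B]$, the composite $e := \lambda w.\; d\,(t\,w) : \Str_\Gamma[A'] \to \Bool$ is then well-typed.

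Next I would verify the semantic correctness: for every $w \in \Gamma^*$, we have $e\,\overline{w} =_\beta d\,(t\,\overline{w}) =_\beta d\,\overline{f(w)}$, using the defining property of $t$. Hence $e\,\overline{w} =_\beta \mathtt{true}$ iff $f(w) \in L$, so $\Lang(e) = f^{-1}(L)$. Finally, since $e : \Str_\Gamma[A'] \to \Bool$, the Hillebrand–Kanellakis theorem (now read from left to right) tells us $\Lang(e)$ is regular, and therefore $f^{-1}(L)$ is regular.

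I expect the only genuine point requiring care to be the type-substitution step that reconciles the output type $\Str_\Sigma$ of $t$ with the input type $\Str_\Sigma[B]$ demanded by $d$. Everything else is a routine $\beta$-computation. The substitution lemma is explicitly stated in the excerpt for Church numerals, and its generalization to Church-encoded strings, together with the stability of $\STlam$ typing derivations under the substitution $o := B$, is exactly what makes the composition typecheck; I would make sure to phrase this as an instance of that lemma rather than reproving it. No obstacle beyond this bookkeeping is anticipated, so the corollary follows immediately by composing the two directions of Hillebrand–Kanellakis around the definable function $f$.
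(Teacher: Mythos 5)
Your proof is correct and is essentially identical to the paper's: both use the Hillebrand--Kanellakis theorem in both directions, apply the substitution lemma to retype $t$ as $\Str_\Gamma[A[B]] \to \Str_\Sigma[B]$, and conclude by observing that the composite $\lambda w.\; d\,(t\,w) : \Str_\Gamma[A[B]] \to \Bool$ decides exactly $f^{-1}(L)$. The extra care you take in spelling out the type-substitution step and the $\beta$-computation is just a more explicit rendering of the paper's one-line argument.
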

\begin{proof}
  Let $f : \Gamma^* \to \Sigma^*$ be defined by some term $t : \Str_\Gamma[A]
  \to \Str_\Sigma$, and let $L \subseteq \Sigma^*$ be a regular language. Then
  $L = \Lang(u)$ for some $u : \Str_\Sigma[B] \to \Bool$. By the substitution
  lemma, $t$ can be given the type $\Str_\Gamma[A[B]] \to \Str_\Sigma[B]$, so
  one can define the term $\lambda x.\; u\;(t\;x) : \Str[A[B]] \to \Bool$ in
  $\STlam$. To conclude, observe that $f^{-1}(L) = \Lang(\lambda x.\;
  u\;(t\;x))$.
\end{proof}

The same substitution lemma can be used to establish that the
$\lambda$-definable string functions are closed under composition. To prove
\cref{thm:stlam-intro}, it remains only to show that HDT0L transductions are
$\lambda$-definable -- which is the subject of \cref{sec:stlam-string}.

A final word about the relevance of the HDT0L+composition class to numeric
functions (i.e.\ the unary case). We have already mentioned that Sénizergues
claims (without giving a proof) this class to be equivalent to his
\emph{$k$-computable mappings}~\cite{Senizergues}, defined in terms of a variant
of iterated pushdown automata. The unary version of these mappings, called the
\emph{$k$-computable sequences}, had been previously studied in detail by
Fratani and Sénizergues~\cite{Fratani}, who showed that they generalized some
integer sequences of interest in number theory. Thus, an optimistic scenario
could be: $\Nat[A] \to \Nat$ in $\STlam$, unary HDT0L+composition and
$k$-computable sequences all define the same class of functions $\naturalN \to
\naturalN$, making this class a canonical mathematical object.

Generalizing this to strings, we propose a concrete question related to our open
problem:
\begin{question}
  Are the $\lambda$-definable string functions of \cref{def:lambda-definable},
  the closure by composition of HDT0L transductions, and Sénizergues's
  $k$-computable mappings all the same class of functions from strings to
  strings?
\end{question}

\subsection{Other motivations and related work}
\label{sec:intro-other}

\subparagraph{An analogy with machine-free complexity}

Beyond the very concrete goal stated above, the present work is an attempt to
transpose to the context of automata some ideas from \emph{implicit
  computational complexity} (ICC) -- a field whose aim is to characterize
complexity classes without reference to a particular machine model. Another
field which fits the description just given is \emph{descriptive complexity},
which establishes correspondences of the form \enquote{the predicates in the
  complexity class $\mathcal{C}$ are exactly those expressible in the logic
  $\mathcal{L}_{\mathcal{C}}$}. Its very successful automata-theoretic
counterpart is the use of \emph{Monadic Second-Order Logic} (MSO) over various
structures, ranging from finite words to graphs, infinite trees, ordinals\ldots
Concerning transductions in MSO, see~\cite{EngelfrietHoogeboom,polyregularMSO}.
In contrast, the methods of ICC -- e.g.\ term rewriting, function algebras, or
$\lambda$-calculi -- have a more computational flavor. To sum up:

\begin{center}
  \begin{tabular}{l|c|c|}
    & declarative programming & functional programming \\
    \hline
    complexity & Descriptive Complexity & Implicit Complexity (ICC) \\
    \hline
    automata & Monadic Second-Order Logic (MSO) & \textbf{this paper: Church encodings} \\
    \hline
  \end{tabular}
\end{center}

We should mention that there already exists some ICC-like work on transduction
classes, for example function algebras for regular functions using
combinators~\cite{regularcombinators,RTE,Daviaud}. The closest to ours is
perhaps Bojańczyk's characterization of polyregular functions by a variant of
$\STlam$~\cite{polyregular}, which we discuss in \S\ref{sec:polyseq}. The main
difference is that both these works use primitive data types for strings,
whereas we encode strings as higher-order functions (i.e.\ functions taking
functions as arguments) inside \enquote{purely logical} calculi.

\subparagraph{A few words about verification (and linear logic)}

The bottom row of the above table is also related to the field of \emph{formal
  verification}. For instance, MSO over infinite words -- whose decidability was
proved by Büchi using automata -- subsumes Linear Temporal Logic. The relevance
of Church encodings in typed $\lambda$-calculi has been demonstrated in the
context of \emph{higher-order model checking}, an active field of research
concerned with verifying functional programs: see Grellois's PhD
thesis~\cite{grellois} and references therein. By generalizing this use of
Church encodings, Melliès was led to introduce higher-order parity
automata~\cite{HOParity}. The introduction to~\cite{HOParity} is particularly
instructive: it proposes a \enquote{dictionary between automata theory and the
  simply typed $\lambda$-calculus} via Church encodings.

We should mention that linear logic plays an important role in some of Grellois
and Melliès's work (e.g.~\cite{GrelloisMellies}). For MSO over infinite words,
there is also a recent application of linear logic, namely Pradic and Riba's
approach to the synthesis problem~\cite{LMSO}.

\subparagraph{Implicit complexity in $\EAlam$}

Linear logic and its byproducts have also been used for ICC: one of the first
works of this kind is the characterization of elementary recursive functions in
Girard's Elementary Linear Logic (ELL)~\cite{girardELL}. ELL later inspired the
elementary affine $\lambda$-calculus~\cite{Benedetti} -- or rather a variant
that we baptized \emph{a posteriori} $\muEAlam$ in~\cite{ealreg} -- which
refines this by giving types of programs corresponding to each level of the
$k$-EXPTIME hierarchy:
\begin{theorem}[{Baillot et al.~\cite{Benedetti}}]
  In $\muEAlam$, a predicate can be decided by a term of type $\oc\Str_\Sigma
  \multimap \oc^{k+2}\Bool$ iff it is in \emph{$k$-EXPTIME}. In particular,
  $\oc\Str_\Sigma \multimap \oc\oc\Bool$ corresponds to \emph{P}.
\end{theorem}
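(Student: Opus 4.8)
The plan is to establish the two inclusions separately: \textbf{soundness}, that every term of type $\oc\Str_\Sigma \multimap \oc^{k+2}\Bool$ decides a predicate lying in $k$-EXPTIME, and \textbf{completeness}, the converse.

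For soundness, the central device is a stratification of terms by \emph{depth} --- the number of $\oc$ modalities under which a given subterm sits. I would first check subject reduction, so that normalizing a well-typed term of the given type indeed yields (the encoding of) a boolean. The quantitative core is then a depth-indexed normalization bound, proved by induction on the depth $d$: the subterms lying at depth at most $d$ can be normalized with at most one exponential blow-up in size per level, so that a term of maximal depth $d$ reaches its normal form in a number of reduction steps bounded by a tower of exponentials of height $d$ applied to a polynomial in the input size. Affineness is what makes this invariant hold: forbidding uncontrolled duplication confines the size explosion to exactly one exponential as one crosses each $\oc$-boundary. Applied to our type, the input word of length $n$ is supplied at depth $1$ while the output bit is extracted at depth $k+2$; the gap of $k+1$ levels, the first of which already suffices for polynomial running time and each of the remaining $k$ of which charges precisely one exponentiation, yields a $k$-fold iterated exponential of a polynomial --- that is, $k$-EXPTIME. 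For $k=0$ the tower degenerates and we recover \emph{P}, matching $\oc\Str_\Sigma \multimap \oc\oc\Bool$.

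For completeness, I would simulate an arbitrary $k$-EXPTIME Turing machine inside $\muEAlam$. Configurations are encoded as (tuples of) Church-encoded strings, and the one-step transition relation is expressed as an affine term acting on a single configuration. The crux is to iterate this transition as many times as the running-time bound, itself a $k$-fold iterated exponential of a polynomial in $n$: I would assemble a clock by iterating an exponential-growth gadget $k$ times on top of a polynomial clock read off from the depth-$1$ input, where each exponentiation is realized by a Church-style iteration whose typing consumes one further $\oc$. Composing these $k$ gadgets with the transition simulator and a final read-out of the accepting state produces a term of exactly type $\oc\Str_\Sigma \multimap \oc^{k+2}\Bool$.

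The step I expect to be the main obstacle is the soundness bound: pinning down a depth-indexed measure that behaves well under \emph{every} reduction rule of $\muEAlam$, and making the accounting exact so that the output depth $k+2$ contributes precisely $k$ exponentials rather than $k+1$. Isolating the single \enquote{free} polynomial level at depth $1$ and charging exactly one exponential to each higher level is the delicate point, and it is here that both affineness and the stratified discipline are indispensable. Completeness is comparatively routine given standard encodings of Turing machines into linear $\lambda$-calculi; the only genuine care needed is to verify that each gadget respects the affine constraint and the intended depth structure.
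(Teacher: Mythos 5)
A preliminary remark: the paper does not prove this statement at all --- it is quoted as a known theorem of Baillot, De~Benedetti and Ronchi Della Rocca~\cite{Benedetti}, so your attempt can only be measured against the proof in that reference. Your high-level skeleton (soundness via a depth-stratified normalization bound, completeness via Turing-machine simulation) does match theirs, but both halves have a genuine gap at exactly the point where the real work lies. For soundness, the induction you describe --- \enquote{at most one exponential blow-up per $\oc$-level} --- is the classical Elementary Linear Logic bound, and applied to a term of overall depth $k+2$ it yields a tower of height about $k+2$, not $k$: the naive accounting gives at least $(k+1)$-EXPTIME for the type $\oc\Str_\Sigma \multimap \oc^{k+2}\Bool$, one exponential too many. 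Declaring that \enquote{the first level already suffices for polynomial running time} is precisely the statement that needs proof, and affineness plus stratification alone do not give it --- they give the tower. What actually makes the low rounds polynomial is the shape of the Church encodings: the payload of $\oc\Str_\Sigma$ (the string body, of size $n$) is exponential-free and sits under a constant number of boxes, so at every round below the payload's depth the number of $\oc$-redexes --- hence the length of any duplication cascade --- is bounded by the fixed program, independently of $n$. Only from the payload's depth onward does the input size enter the cascade length, which is why the tower height is the depth \emph{difference} between input payload and output, namely $k$. This refinement is the main technical contribution of~\cite{Benedetti}; your plan treats it as bookkeeping to be filled in later.

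For completeness, you call the simulation \enquote{comparatively routine}, but the theorem is about $\muEAlam$ and the type fixpoints are essential, not decorative. To iterate a transition function affinely you need configurations supporting constant-time, affine case analysis; Church encodings do not provide this (affine decomposition of a Church string is exactly what fails), and the standard remedy --- Scott-style encodings --- requires recursive types. Indeed, the paper's whole point in distinguishing $\EAlam$ from $\muEAlam$ is that without fixpoints the very same type $\oc\Str_\Sigma \multimap \oc\oc\Bool$ captures only the \emph{regular languages}~\cite{ealreg}; so a Turing-machine simulation that never invokes $\mu$, as yours is sketched, would show that every polynomial-time predicate is regular --- it proves too much, hence it cannot be carried out as described. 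A correct completeness argument must identify where the fixpoints enter (the typing of configurations and of the step function), and that is again a non-trivial part of the cited proof rather than a routine verification.
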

We overload notation: here $\Str_\Sigma$ and $\Bool$ are respectively the
$\EAlam$ types of strings over $\Sigma$ and of booleans; they differ from the
$\STlam$ types of the same name. The unary connective `$\oc$' is the
\emph{exponential modality} of linear logic, which marks a duplicable resource,
and plays a role in controlling complexity in $\muEAlam$; $\multimap$ is the
linear function arrow.

We recently showed~\cite{ealreg} that by replacing $\muEAlam$ by $\EAlam$ (i.e.\
by removing \emph{type fixpoints} from $\muEAlam$) we get \emph{regular
  languages} instead of polynomial time for the case $k=0$. Just as we saw for
$\STlam$ in the previous section, that means that $\oc\Str_\Gamma \multimap
\oc\Str_\Sigma$ is a type of regularity-preserving functions, closed under
composition (whereas in $\muEAlam$, it corresponds to the class FP of polynomial
time functions, cf.~\cite{ealreg}). Another type for regular languages in
$\EAlam$ is $\Str_\Sigma \multimap \oc\Bool$, so functions of type $\Str_\Gamma
\multimap \Str_\Sigma$ are also of interest. We prove:
\begin{theorem}
  \label{thm:ealam}
  The terms of type $\Str_\Gamma \multimap \Str_\Sigma$ (resp.\ $\oc\Str_\Gamma
  \multimap \oc\Str_\Sigma$) $\EAlam$ can only express \emph{linear (resp.\
    polynomial) time} and \emph{regularity-preserving} functions; on the other
  hand:
  \begin{itemize}
  \item all \emph{regular functions} can be defined by $\EAlam$ terms of both
    types;
  \item furthermore, the functions expressible with $\oc\Str_\Gamma \multimap
    \oc\Str_\Sigma$ are closed under \emph{composition by substitution}
    (\cref{def:cbs}), and therefore may have $\Omega(n^k)$ growth for any $k\in
    \mathbb{N}$.
  \end{itemize}
\end{theorem}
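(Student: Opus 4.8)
The plan is to treat the two halves — the upper bounds and the regular-function lower bound — by quite different means, and then to handle composition by substitution separately. For the \emph{upper bounds} (linear/polynomial time and regularity preservation), I would exploit that $\EAlam$ is a calculus designed with complexity control in mind. Regularity preservation should follow by the same substitution-lemma argument as in the $\STlam$ corollary above: given $f : \Str_\Gamma \multimap \Str_\Sigma$ and a regular $L \subseteq \Sigma^*$, one uses an $\EAlam$ predicate $u$ for $L$ (of type $\Str_\Sigma \multimap \oc\Bool$, the type mentioned in the text as deciding exactly the regular languages) and forms the composite $\lambda x.\ u\,(f\,x)$, which decides $f^{-1}(L)$ and hence witnesses its regularity. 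The subtlety is that the linear arrow $\multimap$ and the $\oc$ modality must be threaded correctly so the composite is well-typed; this is where the resource discipline of $\EAlam$ matters rather than being a pure formality. For the \emph{time} bounds I would invoke the standard complexity-soundness machinery of elementary affine calculi: a term of type $\Str_\Gamma \multimap \Str_\Sigma$ has no duplicating $\oc$ in the relevant position, so its normalization is bounded linearly in the input size, whereas $\oc\Str_\Gamma \multimap \oc\Str_\Sigma$ permits one controlled layer of duplication, yielding a polynomial bound. These should be adaptations of the soundness theorems already underlying the cited result of Baillot et al.

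For the \emph{lower bound} — that all regular functions are $\EAlam$-definable at both types — I would proceed by \emph{compiling streaming string transducers} into $\EAlam$, following the general encoding pattern advertised in the paper's introduction. Since SSTs with copyless register updates compute exactly the regular functions, the natural strategy is to represent each register's current string value as an $\EAlam$ term and to realize one transition step as a linear $\lambda$-term operating on the tuple of registers. The \emph{copyless} condition is precisely what aligns with $\EAlam$'s linearity: because each register is read at most once in an update, the simultaneous register update can be expressed without duplicating any register variable, so it typechecks with $\multimap$ rather than requiring $\oc$. I would then iterate this step over the input string — this is exactly what the Church encoding $\Str_\Gamma$ does, since it is an iterator — initialize the registers, run the iteration, and finally apply the output function to the terminal register contents. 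The bookkeeping that makes this clean is encoding the register tuple at a suitable simple type (products built from $\Str$) and checking that the per-letter transition, the initialization, and the final read-off are each linear.

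I expect the \emph{main obstacle} to be making the copyless-update encoding genuinely linear while still typable in the exponential-modality discipline of $\EAlam$, rather than merely in a naive linear $\lambda$-calculus. Concretely, the Church iterator consumes its step function under a $\oc$, so there is real tension between the linearity one wants for the register update and the duplication the iteration scheme seems to demand; resolving this is what distinguishes $\EAlam$ from ordinary simple types and is the technical heart of the construction. I anticipate that string concatenation — the basic operation out of which copyless updates are built — is itself definable as a linear $\EAlam$ term on $\Str_\Sigma$, and that the whole transition can be assembled from concatenation and constant-string insertion without ever copying a register; verifying the exponential boxing is consistent across the iteration is the delicate part. Finally, for \emph{composition by substitution} (\cref{def:cbs}) at the type $\oc\Str_\Gamma \multimap \oc\Str_\Sigma$, the plan is to show the definable functions are closed under this operation by exhibiting an $\EAlam$ combinator that substitutes one definable function's outputs into another, using the extra $\oc$ layer to license the controlled duplication that substitution requires; iterating this from a base regular function such as squaring then yields the claimed $\Omega(n^k)$ growth for every $k$.
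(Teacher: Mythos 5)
Your handling of the upper bounds and of composition by substitution matches the paper's. Regularity preservation is indeed obtained by composing with an $\EAlam$ term deciding the regular language -- of type $\Str_\Sigma \multimap \oc\Bool$ for the first type, and $\oc\Str_\Sigma \multimap \oc\oc\Bool$ for the second -- exactly as in the $\STlam$ corollary; the time bounds are inherited, as you say, from the complexity-soundness results of \cite{Benedetti,ealreg} (although your justification that a term of type $\Str_\Gamma \multimap \Str_\Sigma$ \enquote{has no duplicating $\oc$ in the relevant position} is not literally right: the paper's own SST encoding uses $\lambda\oc$-abstractions and non-linear variables; what the paper invokes is a stratified reduction strategy from the cited works); and closure under composition by substitution is proved by essentially the combinator you describe, duplicating the $\oc$-boxed input and feeding it to $t$ and to each $u_i$ (\cref{sec:appendix-ealam-polyseq}).

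The genuine gap is in the compilation of SSTs, which is the technical heart of the theorem. You propose to store register contents at \enquote{a suitable simple type (products built from $\Str$)} and to assemble the copyless update from concatenation on $\Str_\Sigma$. You correctly sense a tension with the iterator's exponential, but the resolution you sketch is exactly the naive attempt that the paper rules out, and the obstruction is not linearity but \emph{stratification}. Since $\Str_\Sigma[\alpha] = (\oc(\alpha \multimap \alpha))^{|\Sigma|} \multimap \oc(\alpha \multimap \alpha)$ contains exponentials, string data stored at such a type lives at depth at least $1$; iterating over any type $A$ built from it produces a result of type $\oc(A \multimap A)$, one box deeper, so after the iteration the register data sits at depth at least $2$. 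But the output $\Str_\Sigma[\alpha]$ needs its data at depth exactly $1$ (inside a single $\oc$, with $\alpha$ exponential-free), and $\EAlam$ has neither dereliction nor digging to adjust depths. Your construction therefore types at best at $\Str_\Gamma \multimap \oc\Str_\Sigma$ (resp.\ $\oc\Str_\Gamma \multimap \oc\oc\Str_\Sigma$), with a spurious extra exponential that cannot be removed -- not at the stated types. The paper's key idea, missing from your sketch, is to represent register contents at the \emph{exponential-free} type $\alpha \multimap \alpha$, relatively to non-linear free variables $f_i : \alpha \multimap \alpha$ bound by the outer $\lambda\oc f_i$ of the output's Church encoding, and to iterate over the purely linear type $A = \Fin(|Q|) \multimap (\alpha \multimap \alpha)^{|R|} \multimap (\alpha \multimap \alpha)$ of copyless output functions. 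A second point you leave unaddressed: Church encodings fold strings right-to-left while SSTs read left-to-right, so one cannot simply \enquote{iterate the transition step}; the paper instead iterates the backwards action $\delta^O$ of letters on output functions (\cref{def:delta-o} and \cref{prop:delta-o}), and it is there that copylessness translates into linearity of the step terms, as you anticipated. See \cref{sec:appendix-ealam-regular} for the full construction.
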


\section{Several classes of transductions}
\label{sec:transducers}

This section recalls the HDT0L, regular, and polyregular transductions, and
introduces our new \enquote{composition by substitution} operation. Along the
way, we prove the inclusions
\[ \{\text{regular + comp.\ by subst.}\} \subseteq \{\text{polyregular
    functions}\} \subsetneq \{\text{HDT0L + composition}\} \] as we promised in
the introduction. Finally, we (almost) define the regular \emph{tree} functions.

A preliminary remark for this section on automata: recall that for a finite
alphabet $\Sigma$, the set of words over $\Sigma$, denoted by $\Sigma^*$, is the
free monoid over the set of generators $\Sigma$. Therefore, any function $\Sigma
\to M$ to a monoid $M$ uniquely extends to a morphism $\Sigma^* \to M$.

\subsection{Register transducers and HDT0L systems}
\label{sec:register-transducer}

Our first machine model for string-to-string functions has been mentioned in the
introduction: it is the non-linear version of streaming string transducers.
Basically, we enrich finite automata with some memory: a finite number of
\emph{string-valued registers}. At each transition, the contents of the
registers can be recombined by concatenation. After the input has been entirely
read, an output function is invoked to determine the final result from the
registers.

\begin{definition}
  A \emph{register transducer} over input and output alphabets $\Gamma$ and
  $\Sigma$ consists of:
  \begin{itemize}
  \item a finite set $Q$ of \emph{states}, with an initial state $q_I \in Q$
  \item a finite set $R$ of \emph{registers} (or variable names), disjoint from
    $\Gamma$ and $\Sigma$
  \item a transition function $\delta : Q \times \Gamma \to Q \times (R \to
    (\Sigma \cup R)^*)$ (i.e.\ $Q \times \Gamma \to Q \times ((\Sigma \cup
    R)^*)^R$)
  \item an output function $F : Q \to (\Sigma \cup R)^*$
  \end{itemize}
  A \emph{configuration} of this register transducer\footnote{We borrow the name
    from \url{https://www.mimuw.edu.pl/~bojan/papers/toolbox.pdf}.} is a pair
  $(q,s)$ with $q \in Q$ and $s : R \to \Sigma^*$. For $c \in \Gamma$, we write
  $(q,s) \longrightarrow_{c} (q',s')$ when $(q',u) = \delta(q,c)$ and $s' = s^*
  \circ u$, where $s^* : (\Sigma \cup R)^* \to \Sigma^*$ is the monoid morphism
  taking $a \in \Sigma$ to itself and $r \in R$ to $s(r)$. 

  The \emph{image} of a string $w = w_1 \ldots w_n \in \Gamma^*$ by this
  register transducer is $s^*(F(q))$ where $q \in Q$ and $s : R \to \Sigma^*$
  are uniquely determined by $(q_I,(x \in X \mapsto \varepsilon))
  \longrightarrow_{w_1} \ldots \longrightarrow_{w_n} (q,s)$. This defines a
  function $\Gamma^* \to \Sigma^*$.
\end{definition}

For example, let $Q = \{q\}$, $R = \{X,Y\}$ and $\delta(q,c) = (q,(X \mapsto
Xc,\, Y \mapsto cY))$ for all $c \in \Gamma = \Sigma = \{a,b\}$. Then for $w =
w_1 \ldots w_n \in \Gamma^*$,
\[ (q, (x \in X \mapsto \varepsilon)) \longrightarrow_{w_1} \ldots
  \longrightarrow_{w_n} (q,s) \qquad \text{with}\ s(X) = w\ \text{and}\ s(Y) =
  \mathtt{reverse}(w) \]
So, if we take as output function $F(q) = XY$, the function defined is $w
\mapsto w \cdot \mathtt{reverse}(w)$.

Alternatively, these functions can be specified using monoid morphisms:

\begin{definition}
  A \emph{HDT0L system} consists of:
  \begin{itemize}
  \item an input alphabet $\Gamma$, an output alphabet $\Sigma$, and a working
    alphabet $\Delta$;
  \item an initial word $d \in \Delta^*$;
  \item for each $c \in \Gamma$, a \emph{monoid morphism} $h_c : \Delta^* \to
    \Delta^*$;
  \item a final morphism $h' : \Delta^* \to \Sigma^*$.
  \end{itemize}
  It defines the transduction taking $w = w_1 \ldots w_n \in \Gamma^*$ to $h'
  \circ h_{w_1} \circ \ldots \circ h_{w_n}(d) \in \Sigma^*$.
\end{definition}
The family $(h_c)_{c \in \Gamma}$ may be equivalently given as a morphism $H :
\Gamma^* \to \mathrm{Hom}(\Delta^*,\Delta^*)$ (the latter is a monoid for
function composition); the image of the word $w$ is then $h'((H(w))(d))$.

\begin{theorem}[Filiot \& Reynier~\cite{FiliotReynier}]
  A string function $\Gamma^* \to \Sigma^*$ can be computed by a register
  transducer iff it can be specified by a HDT0L system.
\end{theorem}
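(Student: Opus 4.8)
The plan is to prove both inclusions by compiling each model directly into the other. The direction HDT0L $\Rightarrow$ register transducer is the easy one. I would use one register $R_a$ for every letter $a$ of the working alphabet $\Delta$, with the intended invariant that after reading a prefix $w_1 \ldots w_k$ the register $R_a$ holds $h'(g_k(a))$, where $g_k = h_{w_1} \circ \cdots \circ h_{w_k}$ is the composite of the morphisms read so far; applying $h'$ already inside the registers is what keeps their contents in $\Sigma^*$, as the definition demands. Since $g_k = g_{k-1} \circ h_{w_k}$, the update on reading $w_k$ should set $R_a$ to the concatenation $R_{b_1} \cdots R_{b_m}$ whenever $h_{w_k}(a) = b_1 \cdots b_m$ (a legal assignment, the right-hand side lying in $(\Sigma \cup R)^*$), and the output function returns $R_{d_1} \cdots R_{d_\ell}$ for $d = d_1 \cdots d_\ell$. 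The only wrinkle is initialisation: registers start at $\varepsilon$, but the invariant wants $R_a = h'(a)$ at $k=0$, so I would add a fresh initial state whose outgoing transitions substitute the constant words $h'(b_i) \in \Sigma^*$ for the $R_{b_i}$, and set that state's output function to the constant $h'(d)$ to handle the empty input.

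The converse (register transducer $\Rightarrow$ HDT0L) is where the genuine obstacle lies, and it is an \emph{orientation mismatch}: an HDT0L system applies $h_{w_1}$ \emph{outermost}, so when its composition $h_{w_1} \circ \cdots \circ h_{w_n}$ reaches the letter $w_k$ it has only seen the suffix $w_{k+1} \ldots w_n$, whereas the update $\delta(q_{k-1}, w_k)$ used by the transducer depends on the state $q_{k-1}$ reached by reading the \emph{prefix} $w_1 \ldots w_{k-1}$ -- information unavailable at that moment. My way around this is to track all entering states in parallel. I would take the working alphabet $\Delta = \{\langle p, x\rangle : p \in Q,\ x \in \Sigma \cup R\}$, tagging \emph{every} symbol (output letters included) with a state, and let $\langle p, -\rangle$ denote the induced renaming morphism $(\Sigma \cup R)^* \to \Delta^*$. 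Writing $\pi_p$ for the operation keeping only the symbols tagged by $p$, the invariant to maintain on the intermediate word $W_k = h_{w_k}(\cdots h_{w_n}(d))$ is $\pi_p(W_k) = \langle p, -\rangle(\widehat{\sigma_p}(F(p^{\mathrm{out}})))$, where $\sigma_p \colon R \to (\Sigma \cup R)^*$ and $p^{\mathrm{out}} \in Q$ are the register substitution and exit state induced by running the suffix $w_k \ldots w_n$ from entering state $p$, and $\widehat{\sigma_p}$ is its evident morphism extension.

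The morphism $h_c$ is then forced by this invariant: on a symbol $\langle p_1, x\rangle$ it outputs, in a fixed order over all targets $p$ with $\delta(p,c) = (p_1, u)$ for some $u$, the retagged-and-substituted pieces, namely $\langle p, a\rangle$ when $x = a \in \Sigma$ and $\langle p, -\rangle(u(r))$ when $x = r \in R$. Here copyfulness is essential, since one source block for $p_1$ must be duplicated once per target $p$ transitioning into it. The seed is $d = \prod_{p \in Q} \langle p, -\rangle(F(p))$, encoding the empty suffix with identity substitution, and the final morphism $h'$ keeps only the tag $q_I$, sending $\langle q_I, a\rangle$ to $a$ for $a \in \Sigma$, erasing every register symbol (the initial registers being empty) and every other tag. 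The delicate point to verify -- and the step I expect to require the most care -- is that this tagging survives the \emph{interleaving} that $h_c$ inevitably produces: applying $h_c$ mixes the blocks for different targets into a single word, but because every symbol carries its own tag, $\pi_p$ commutes with $h_c$ in the precise sense that $\pi_p(h_c(W)) = \langle p, -\rangle(\widehat{u})(\,\text{untag of }\pi_{p_1}(W)\,)$ for the unique $p_1$ with first component $\delta(p,c) = (p_1, u)$. An easy induction on the suffix then yields the invariant, so that $\pi_{q_I}(W_1)$ decodes under $h'$ to exactly the transducer's output $s^*(F(q_n))$ on the run from $q_I$; the empty-word case is handled directly by the seed.
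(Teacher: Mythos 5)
The paper does not actually prove this statement: it imports it as a black box, with attribution to Filiot and Reynier, and immediately moves on. So there is no in-paper proof to compare against, and your proposal has to be judged on its own terms --- on which it is correct. In the easy direction, the invariant $R_a = h'(g_k(a))$ with $g_k = h_{w_1} \circ \cdots \circ h_{w_k}$ is the right one: since $g_k = g_{k-1} \circ h_{w_k}$, the update $R_a := R_{b_1} \cdots R_{b_m}$ (for $h_{w_k}(a) = b_1 \cdots b_m$) reads only old register values, pushing $h'$ inside keeps contents in $\Sigma^*$ as the definition of configurations demands, and your fresh initial state (constant words on the first transition, output $h'(d)$ for the empty input) patches the initialisation correctly; note this construction is copyful whenever a letter repeats in some $h_c(a)$, consistent with the theorem pairing HDT0L with \emph{unrestricted} register transducers. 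In the hard direction, your diagnosis (updates depend on the prefix-determined state, while the intermediate HDT0L word only encodes the suffix) and your remedy (tag every symbol, output letters included, with a hypothetical entering state $p$, maintaining $\langle p,-\rangle\bigl(\widehat{\sigma_p}(F(p^{\mathrm{out}}))\bigr)$ for the current suffix) are sound. The step you flag as delicate is indeed the crux, but it goes through cleanly: both $\pi_p \circ h_c$ and $\langle p,-\rangle \circ \widehat{u} \circ \mathrm{untag} \circ \pi_{p_1}$ (where $\delta(p,c) = (p_1,u)$) are monoid morphisms on $\Delta^*$, so it suffices to check them on single letters, where your definition of $h_c$ makes them agree --- symbols tagged $p_2 \neq p_1$ contribute nothing to tag $p$ precisely because $\delta$ is deterministic, and the relative order of differently-tagged blocks is invisible to $\pi_p$ and to $h'$. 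Combined with the identity $\widehat{\sigma_p^{(k)}} = \widehat{u} \circ \widehat{\sigma_{p_1}^{(k+1)}}$ for composed register substitutions, the seed $\prod_p \langle p,-\rangle(F(p))$, and the fact that erasing register symbols in the final morphism is sound because the initial valuation is $\varepsilon$, the induction closes, including the empty-input case. The per-incoming-state duplication is exactly where copyfulness of the HDT0L morphisms is needed, as you observe. In short, this is a complete, self-contained proof of a result the paper only cites.
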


\subsection{(Poly)regular functions vs HDT0L(+composition)}

We shall take \emph{linear} register transducers as our definition of regular
functions. Enriching this class with a \enquote{squaring with underlining}
operation yields Bojańczyk's polyregular functions.

\begin{definition}[{Alur \& Černý~\cite{SST}}]
  A \emph{streaming string transducer (SST)} is a register transducer satisfying
  the \emph{copyless assignment} conditions: for all $r \in R$,
  \begin{itemize}
  \item for any register update in the transducer -- i.e.\ any $u : R \to
    (\Sigma \cup R)^*$ such that $(q',u) = \delta(q,c)$ for some $q,q' \in Q$
    and $c \in \Gamma$ -- $r$ appears at most once among all $u(r')$
    for $r' \in R$;
  \item for all $q \in Q$, $r$ appears at most once in the string $F(q)$.
  \end{itemize}
  A function $\Gamma \to \Sigma^*$ is \emph{regular} if it is computed by some
  SST.
\end{definition}

\begin{remark}
  The important part is the first item; the condition on output functions can be
  removed without increasing the expressivity of streaming string transducers.
\end{remark}

\begin{definition}
  Let $\Gamma$ be a finite alphabet. We write $\underline\Gamma =
  \{\underline{c} \mid c \in \Gamma\}$ for a disjoint copy of $\Gamma$ made of
  \enquote{underlined} letters. The function $\mathtt{squaring}_\Gamma :
  \Gamma^* \to (\Gamma \cup \underline\Gamma)^*$ is illustrated by the following
  example for $\Gamma = \{\mathtt{1},\mathtt{2},\mathtt{3},\mathtt{4}\}$:
  $\mathtt{squaring}_\Gamma(\mathtt{1234}) =
  \mathtt{\underline12341\underline23412\underline34123\underline4}$.
\end{definition}

\begin{definition}
  The class of \emph{polyregular functions} is the smallest class closed under
  composition containing the regular functions and the functions
  $\mathtt{squaring}_\Gamma$ for all finite $\Gamma$.
\end{definition}

Bojańczyk's original definition~\cite{polyregular} is the closure by composition
of \emph{sequential} functions, squaring and an additional \enquote{iterated
  reverse} function. Ours is equivalent because all regular functions are
polyregular, all sequential functions are regular, and iterated reverse is a
regular function (for this last point, we invite the reader to consult the
definition of iterated reverse in~\cite{polyregular} and check that a SST with
two registers suffices to compute it).

Let us now compare these classes to the HDT0L transductions (+ composition).

\begin{proposition}
  All regular functions can be specified by HDT0L systems.
\end{proposition}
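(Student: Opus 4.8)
The plan is to exhibit, for every SST, an equivalent HDT0L system. By the theorem of Filiot and Reynier cited above, it would in fact suffice to observe that every SST is in particular a register transducer, and then invoke the equivalence between register transducers and HDT0L systems; this gives a one-line proof. However, since this proposition is stated separately (and the reduction to Filiot--Reynier may be considered too heavy a hammer, or one may wish for a self-contained construction), I would give a direct translation, which is instructive and clarifies how copylessness interacts with the HDT0L formalism.

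First I would fix an SST with states $Q$, registers $R = \{r_1,\ldots,r_m\}$, transition function $\delta$, and output function $F$, computing a function $\Gamma^* \to \Sigma^*$. The key idea is that an SST configuration is a pair $(q,s)$, and the HDT0L system must simulate the evolution of both components using a single morphism per input letter acting on a working alphabet $\Delta$. The state component is finite, so it can be tracked by encoding it into the structure of the data rather than into $\Delta$ itself. A clean way to do this: take the working alphabet to be $\Delta = \Sigma \uplus (Q \times R)$, so that for each state $q$ and register $r$ there is a distinct working symbol $\langle q,r\rangle$. A word over $\Delta$ will represent, for a fixed "current state" that we maintain separately, the concatenation pattern of register contents; the plan is to let the initial word $d$ record the initial configuration and let each morphism $h_c$ implement the update $\delta(\cdot,c)$ on register expressions.

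The main subtlety — and the step I expect to require the most care — is that HDT0L morphisms read the input in the \emph{opposite order} to the natural left-to-right processing of an SST: the definition gives the image of $w_1\ldots w_n$ as $h' \circ h_{w_1} \circ \cdots \circ h_{w_n}(d)$, so $h_{w_n}$ is applied first. To reconcile the two, I would either (a) work with the reversal of the register-update semigroup, i.e.\ define $h_c$ to encode the transition read \emph{backwards}, using the standard fact that regular functions are closed under reversal, or (b) more robustly, sidestep the state-ordering issue by taking the working alphabet to range over \emph{functions} $Q \to Q \times (\text{register expressions})$, so that composing the $h_c$ in either order composes the corresponding transition functions. Since a transition from a fixed start state determines a unique run, the finitely many state-indexed copies can be bundled: for each letter $c$, the morphism $h_c$ sends the symbol tracking "register $r$ as seen from state $q$" to the working-alphabet word obtained by substituting, according to $\delta(q,c)=(q',u)$, the expression $u(r)$ with its register symbols re-tagged by the state $q'$ reached. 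The copyless condition is exactly what guarantees that no symbol is duplicated, but I would stress that HDT0L systems do \emph{not} require the morphisms to be copyless, so this condition is in fact unused here — it is only the ambient register-transducer structure that matters, which is precisely why the class of HDT0L transductions is larger than the regular functions.

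Finally I would define the final morphism $h'$ to erase the state-register tags down to their $\Sigma$-content according to the output function $F$ evaluated at the state reached, and verify by a straightforward induction on $|w|$ that $h'(h_{w_1}\cdots h_{w_n}(d))$ equals the SST output $s^*(F(q))$. The only place demanding genuine attention is making the state-threading coherent with the reversed composition order; once the "bundle all states into the working alphabet" device is in place, the induction is routine and I would not spell it out in full. I would remark at the end that this is a special case of the register-transducer-to-HDT0L direction, so readers content with invoking the cited theorem may skip the explicit construction.
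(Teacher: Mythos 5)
Your opening paragraph is exactly the paper's proof: the proposition sits immediately after the Filiot--Reynier theorem, and the intended argument is precisely the one-liner that streaming string transducers are special cases of register transducers. Had you stopped there, the proposal would be correct. The direct construction you choose to develop instead, however, is unsound, and it fails exactly at the point you flagged as the main subtlety. Since the image of $w_1 \ldots w_n$ is $h' \circ h_{w_1} \circ \cdots \circ h_{w_n}(d)$, the morphism of the \emph{first} letter is applied \emph{last}; your rule (send the symbol \enquote{register $r$ as seen from state $q$} to $u(r)$ re-tagged by $q'$, where $\delta(q,c) = (q',u)$) threads states \emph{forwards}, so the innermost application $h_{w_n}(d)$ fires the transition of $w_n$ from $q_I$. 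Consequently the state sequence follows the run on $\mathtt{reverse}(w)$, while the register substitutions nest in the order of $w$ itself: the expression created by $h_{w_n}$ is the one all later applications substitute \emph{into}, i.e.\ it ends up outermost. These two orientations are inconsistent, and the result is in general neither $f(w)$ nor $f(\mathtt{reverse}(w))$. Concretely, take $Q = \{1,2\}$, $q_I = 1$, a single register $X$, $F \equiv X$, and the copyless updates $\delta(1,a) = (2, X \mapsto Xa)$, $\delta(1,b) = (1, X \mapsto Xb)$, $\delta(2,a) = (2, X \mapsto Xaa)$, $\delta(2,b) = (1, X \mapsto Xbb)$. On input $ab$ this SST outputs $abb$, but with $d = \langle 1,X \rangle$ your morphisms give $h_a(h_b(\langle 1,X\rangle)) = h_a(\langle 1,X\rangle\, b) = \langle 2,X\rangle\, a\, b$, and since the inputs $\varepsilon$, $a$, $b$, $aa$ force $h'$ to erase register symbols and fix $a$ and $b$, the output is $ab$. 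No choice of $h'$ or of the fixed word $d$ repairs this; your scheme is sound only for single-state transducers (where there is nothing to thread), which is why it looks plausible on simple examples.

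Neither of your escape hatches addresses this defect. Reversal (option (a)) flips both orientations simultaneously -- the encoding then computes neither the straight nor the reversed semantics of the transducer for $f \circ \mathtt{reverse}$ -- so the mismatch persists; and option (b) is not available as stated, because the functions $Q \to Q \times (\text{register expressions})$ form an infinite set, whereas the working alphabet of an HDT0L system must be finite. The correct device is the opposite threading, which is in fact what the paper itself does (for the $\EAlam$ encoding) in \cref{def:delta-o} and \cref{prop:delta-o}: propagate \emph{output expressions backwards}. A tag $q$ must mean \enquote{the state from which the transducer reads the suffix already processed by the inner morphism applications}, so $h_c$ sends a symbol tagged $q'$ to one re-tagged copy for \emph{every} state $q$ with $\delta(q,c) = (q', \cdot)$; the initial word is $d = \prod_{q \in Q} F(q)$ with the letters of each $F(q)$ tagged by $q$; and $h'$ erases everything not tagged $q_I$. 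Since a state may now have several predecessors (or none), the blocks belonging to different states get duplicated and interleaved, so the letters of $\Sigma$ must be tagged as well -- the working alphabet is $Q \times (\Sigma \cup R)$, not $\Sigma \uplus (Q \times R)$ -- and the induction invariant must be stated about the subsequence of $q$-tagged letters rather than about contiguous blocks. Your closing remark that copylessness is never used is correct (the backward construction works for arbitrary copyful register transducers, which is the content of the Filiot--Reynier theorem), but the construction needs the backward orientation to be sound.
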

\begin{proof}
  Streaming string transducers are special cases of register transducers.
\end{proof}

\begin{theorem}
  \label{thm:polyreg-hdt0l}
  All polyregular functions are compositions of HDT0L transductions.
\end{theorem}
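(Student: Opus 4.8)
The plan is to exploit the generator-based definition of the polyregular functions. Since the class of compositions of HDT0L transductions (\enquote{HDT0L+composition}) is closed under composition by construction, and since it already contains every regular function (by the preceding proposition, together with the trivial inclusion of a single HDT0L transduction into its composition-closure), the minimality of the polyregular class reduces \cref{thm:polyreg-hdt0l} to a single claim: each squaring function $\mathtt{squaring}_\Gamma$ lies in HDT0L+composition. This reduction carries no real content; all of the difficulty is concentrated in handling squaring.

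To treat squaring I would first work out its behaviour under prepending a letter. Writing $S$ for $\mathtt{squaring}_\Gamma$ and reading the input from right to left — which matches the order $h_{w_1}\circ\cdots\circ h_{w_n}$ in which an HDT0L system applies its morphisms — one checks that $S(cv) = \underline{c}\,v \cdot \prod_B (c\cdot B)$, where $B$ ranges over the successive blocks of $S(v)$ in order. That is, reading one more letter $c$ creates a new leading block $\underline{c}\,v$, whose distinguished underlined letter is automatically in the correct \enquote{diagonal} position, and prepends $c$ to every block already produced. The operation \enquote{prepend $c$ to every block} is not a concatenation of register contents, but it is exactly a monoid morphism on the working alphabet: if blocks are separated by a marker $\#$, it is the relabelling $\#\mapsto \#c$. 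This is precisely the extra power that HDT0L systems have over naive copyless updates, and one may pass freely between the register-transducer and HDT0L presentations using Filiot and Reynier's theorem whenever convenient.

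The main obstacle is the creation of the new block $\underline{c}\,v$: it requires a fresh, \emph{separate} copy of the current suffix $v$, and a single morphism cannot produce one. A morphism rewrites each letter in place, so it can only duplicate material \emph{locally} — turning a letter into an adjacent run (\enquote{stuttering}), as in $a\mapsto aa$ — and can never copy a whole substring into a separate location while also preserving the original. Consequently squaring cannot be a single HDT0L transduction obtained from this recursion, and the data-dependent copying must be spread across a composition of layers. My plan is therefore to split $S$ as $\rho\circ\sigma$, where $\sigma$ is an HDT0L transduction that prepends letters to all blocks via the $\#$-relabelling and lays down, by local stuttering, the raw symbols of each new block together with its underline and boundedly many routing markers, and where $\rho$ is a \emph{regular} function (a copyless SST) that de-stutters, reorders the emitted material into the correct blocks, and erases the auxiliary markers; since both regular functions and single HDT0L transductions sit inside the composition-closed class, this yields that $S$ belongs to HDT0L+composition. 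The delicate point — the step I expect to require the most care — is to arrange $\sigma$ so that the residual rearrangement $\rho$ stays regular. In particular the underline must be fixed during generation, where it lands at the head of each freshly created block, rather than afterwards: performing it on a laid-out string $\#w\#w\cdots\#w$ would amount to underlining the $i$-th letter of the $i$-th block, a diagonal that compares two unbounded counters and is not a regular function. Verifying that the chosen $\rho$ is realisable by a copyless SST is thus the crux of the argument.
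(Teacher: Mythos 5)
Your reduction to the squaring functions is the same as the paper's, and your recursion $S(cv)=\underline{c}\,v\cdot\prod_B(c\cdot B)$ together with the $\#\mapsto\#c$ relabelling is correct. The gap is exactly at the point you yourself flag as the crux, and it is not just unverified -- it is fatal to the plan as stated. Your $\sigma$ creates the material of each new block by locally stuttering the letters of the currently active block, so the fresh copy is deposited letter-by-letter \emph{interleaved} with the block it was copied from; since a morphism rewrites letters in place and preserves the relative order of descendants, this interleaving can never be undone by later steps of $\sigma$. Concretely, after the whole input $w_1\cdots w_n$ is processed, the position corresponding to input letter $w_j$ carries a pile of $j$ letters which belong, in left-to-right order, to blocks $1,2,\ldots,j$, while only the bounded prefix segments of each block are contiguous. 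Turning this layout into $B_1B_2\cdots B_n$ requires gathering, for every $i$ simultaneously, the $i$-th member of every pile: all $n$ blocks are \enquote{open} at once, and a copyless SST has only a fixed number of registers (equivalently, this is the same transpose-like diagonal extraction you correctly identify as non-regular for the layout $\#w\#w\cdots\#w$). Fixing the underlines at generation time does not help, because the obstruction is not where the underlines go but that the scattered contents of unboundedly many blocks must be collated; your routing markers are drawn from a bounded alphabet and so cannot encode block indices. Hence $\rho$ is not regular for this $\sigma$, and no choice of stuttering-based $\sigma$ avoids the problem.

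The misconception driving your plan is the claim that an HDT0L transduction \enquote{can never copy a whole substring into a separate location while also preserving the original.} That is true of a single application of a single morphism, but false for the model itself, precisely via the Filiot--Reynier register presentation that you cite and then do not use: a \emph{copyful} register transducer may perform the update $O:=O\cdot P\cdot\underline{c}$, $P:=P\cdot c$, in which $P$ occurs twice, and this copies the entire current prefix into a fresh, contiguous location at every step. This is how the paper proceeds: a two-register copyful transducer maps $\mathtt{1234}$ to $\mathtt{\underline{1}1\underline{2}12\underline{3}123\underline{4}}$ (each block is a separate copy of the prefix, with the diagonal underline placed for free); then a reversal (regular); then a second copyful transducer with updates $O:=O\cdot U\cdot\underline{c}$, $U:=U\cdot c$ inserts before each underlined letter the accumulated non-underlined copies of the underlined letters seen so far; then a final reversal. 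Each factor is a single HDT0L transduction or a regular function, the intermediate words are honest strings, and no de-interleaving is ever required. So the repair is not to engineer a cleverer $\sigma$ and $\rho$, but to abandon the morphism-only view of HDT0L and exploit register copying, which dissolves your \enquote{main obstacle} entirely.
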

\begin{proof}[Proof sketch]
  Thanks to the previous proposition, it suffices to show that the functions
  $\mathtt{squaring}_\Gamma$ can be computed by composing register transducers.
  We decompose them as
  \[ \mathtt{1234} \mapsto
    \mathtt{\underline{1}1\underline{2}12\underline{3}123\underline{4}} \mapsto
    \mathtt{\underline{4}321\underline{3}21\underline{2}1\underline{1}} \mapsto
    \mathtt{\underline{4}321(4)\underline{3}21(43)\underline{2}1(432)\underline{1}}
    \mapsto
    \mathtt{\underline{1}2341\underline{2}3412\underline{3}4123\underline{4}}
  \]
  where the parentheses are not part of the string, they only serve to help
  readability.
  \begin{itemize}
  \item The 1st step uses two registers, one for the output and one containing
    the current prefix.
  \item The 3rd step uses one register for the output and another register
    keeping track of the underlined characters seen thus far, by concatenating
    their non-underlined counterparts.
  \item The 2nd and 4th steps just apply the reverse function, which is
    regular.\qedhere
  \end{itemize}
\end{proof}

\begin{proposition}
  There exists a HDT0L transduction which is not polyregular.
\end{proposition}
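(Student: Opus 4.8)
The plan is to exhibit a single HDT0L transduction whose output grows exponentially in the length of the input, and then to observe that every polyregular function is \emph{polynomially bounded}; since an exponentially-growing function cannot be polynomially bounded, it cannot be polyregular. For the example, I would take $\Gamma = \{a\}$ and $\Delta = \Sigma = \{b\}$, initial word $d = b$, the single morphism $h_a : \Delta^* \to \Delta^*$ determined by $b \mapsto bb$, and $h'$ the identity morphism. Reading the input $a^n$ then produces
\[ h' \circ h_a^n(d) = b^{2^n}, \]
so that this HDT0L system defines the transduction $a^n \mapsto b^{2^n}$, whose output has length $2^n$ on an input of length $n$.

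Next I would establish that every polyregular function $f : \Gamma^* \to \Sigma^*$ is polynomially bounded, meaning there is a (monotone) polynomial $P$ with $|f(w)| \le P(|w|)$ for all $w$. This I would prove by structural induction following the inductive definition of the polyregular class. The function $\mathtt{squaring}_\Gamma$ satisfies $|\mathtt{squaring}_\Gamma(w)| = |w|^2$, which is polynomial. Regular functions, being computed by SSTs, have \emph{linear} growth, and this is the one place where the copyless condition is essential: tracking the total register length $\sum_{r \in R} |s(r)|$ along a run, the copyless assignment condition — that each register name occurs at most once among all the right-hand sides $u(r')$ of a single update — forces this quantity to increase by at most an additive constant $C$ per input letter (where $C$ bounds the number of $\Sigma$-letters appearing in one update), so that after $n$ steps the total register length is $O(n)$; the copyless condition on $F$ then bounds the output length by $O(n)$ as well. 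Finally, composition preserves polynomial boundedness: if $|f(w)| \le P(|w|)$ and $|g(w)| \le Q(|w|)$ with $P, Q$ monotone polynomials, then $|g(f(w))| \le Q(P(|w|))$, and $Q \circ P$ is again a polynomial. Hence the whole polyregular class is polynomially bounded.

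Combining the two parts concludes the proof: the transduction $a^n \mapsto b^{2^n}$ has exponential, hence super-polynomial, growth, so it cannot belong to the polynomially bounded class of polyregular functions. The main obstacle is the linear-growth bound for regular functions; the square and composition cases are routine. Verifying that copylessness prevents the total register length from more than an additive-constant increase per step is the one point requiring genuine care — and it is illuminating that this bound fails precisely for \emph{non-linear} register transducers (equivalently, general HDT0L systems), where a register may be duplicated at each step, which is exactly the phenomenon exploited by the doubling example above.
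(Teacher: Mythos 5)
Your proposal is correct and follows essentially the same route as the paper: the very same doubling system ($h_a(b) = bb$, giving $a^n \mapsto b^{2^n}$) witnesses exponential growth, which is contrasted with the polynomial growth of polyregular functions. The only difference is that the paper simply cites~\cite{polyregular} for the polynomial-growth bound, whereas you re-prove it by structural induction (linear growth of SSTs via copylessness, quadratic squaring, closure of polynomial bounds under composition); that self-contained argument is sound, but it is an expansion of a cited fact rather than a different approach.
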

\begin{proof}
  Polyregular functions have polynomial growth~\cite{polyregular}, while HDT0L
  transductions may grow exponentially. Take e.g.\ a HDT0L system with $h_a(b) =
  bb$ for all $a \in \Gamma, b \in \Sigma$.
\end{proof}

\subsection{Composition by substitutions vs polynomial list functions}
\label{sec:polyseq}

We come to our new operation on functions which allows increasing the exponent
of polynomial growth. It preserves polyregular functions, but this is not easy
to establish from the definition using the squaring function. We shall instead
rely on another characterization: an enriched variant of the simply typed
$\lambda$-calculus, called the \enquote{polynomial list functions} formalism
in~\cite{polyregular}.

\begin{definition}
  \label{def:cbs}
  Let $f : \Gamma^* \to I^*$, and for each $i \in I$, let $g_i : \Gamma^* \to
  \Sigma^*$. The \emph{composition by substitutions} of $f$ with the family
  $(g_i)_{i\in I}$ is the function
  \[\CbS(f,(g_i)_{i\in I}) : w \mapsto
  g_{i_1}(w) \ldots g_{i_n}(w)\ \text{where}\ f(w) = i_1 \ldots i_k\]
  That is, we first apply $f$ to the input, then every letter $i$ in the result
  of $f$ is substituted by the image of the original input by $g_i$.
  Thus, $\CbS(f,(g_i)_{i\in I})$ is a function $\Gamma^* \to \Sigma^*$.
\end{definition}

As an example, this can be used to define the \enquote{squaring without
  underlining} function\footnote{It is a classic exercise in formal languages to
  prove that if $L$ is a regular language, then $\{w \mid w^{|w|} \in L\}$ is
  also regular. Our study of superlinear transduction classes provides a wider
  context for this fact.} $w \mapsto w^{|w|}$, which can be expressed as $\CbS(f
: w \mapsto a^{|w|}, (g_a : w \mapsto w))$ with $f$ and $g_a$ regular. Its
growth rate is quadratic, while regular functions have at most linear growth.

\begin{remark}
  More generally, the smallest class containing regular functions and closed by
  both $\CbS$ and usual function composition contains, for all $k \in
  \naturalN$, some $f$ with $|f(w)| = \Theta(|w|^k)$. However, we conjecture
  that $\mathtt{squaring}_{\{1\}}$ (with underlining) is not in this class.
\end{remark}

We now recall how polynomial list functions are defined. They enrich the grammar
of $\lambda$-terms with constants whose meaning can be specified by extending
the $\beta$-rule of \S\ref{sec:intro-lambda-definable}, e.g.\
\[ \mathtt{is}_a\;b =_\beta \mathtt{true}\quad\text{if $a=b$}\qquad
  \mathtt{is}_a\;b =_\beta \mathtt{false}\quad\text{if $a\neq b$}\]
The grammar of types is also extended accordingly. For instance, any finite set
$\tau$ induces a type also written $\tau$, such that the elements $a \in \tau$
correspond to the terms $a : \tau$ of this type. There are also operations
expressing the cartesian product ($\times$) and disjoint union ($+$) of two
types; and a type of lists ($A^*$ is the type of lists over the type $A$). So
we actually consider
\[ \mathtt{is}_a^\tau : \tau \to
  \{\mathtt{true}\}+\{\mathtt{false}\}\qquad\text{for any finite set $\tau$} \]
and in the expression $\mathtt{is}^\tau_a\;b$, one therefore requires $b$ to be
part of a finite set $\tau$ specified in advance which also contains $a$.
See~\cite[Section~4]{polyregular} for the other primitive operations that are
added to $\STlam$; we make use of $\mathtt{is}$, $\mathtt{case}$, $\mathtt{map}$
and $\mathtt{concat}$ here. Bojańczyk's result is that if $\Gamma$ and $\Sigma$
are finite sets, then the polynomial list functions of type $\Gamma^* \to
\Sigma^*$ correspond exactly the polyregular functions.

\begin{remark}
  There is no substitution in the input type, and this is why our
  $\lambda$-definable string functions are still more expressive than polynomial
  list functions. On the other hand, this shows that primitive data types
  provide an alternative way of going beyond the poor expressive power
  (cf.~\cite{Zaionc,LeivantFA}) of the functions defined by $\Str_\Gamma \to
  \Str_\Sigma$ in $\STlam$.
\end{remark}

\begin{lemma}
  Let $I = \{i_1, \ldots, i_{|I|}\}$. Then the function $\mathtt{match}^{I,\tau}
  : I \to \tau \to \ldots \to \tau \to \tau$ which returns its $(k+1)$-th
  argument\footnote{See the beginning of \S\ref{sec:stlam} for an explanation of
    functions with multiple arguments in $\STlam$.} when its 1st argument is
  $i_k$ is a polynomial list function.
\end{lemma}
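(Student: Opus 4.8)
The plan is to exhibit an explicit term of the polynomial list function formalism, built only from the primitives $\mathtt{is}$ and $\mathtt{case}$, that realizes this finite dispatch. The key observation is that, since $I$ is a fixed finite set, the selection among the arguments $x_1,\dots,x_{|I|}$ can be encoded as a chain of equality tests of fixed depth $|I|-1$: test whether the first argument equals $i_1$, returning $x_1$ if so and otherwise recursing on the remaining cases. Each individual equality test against an element $i_k \in I$ is provided directly by $\mathtt{is}^I_{i_k} : I \to \{\mathtt{true}\}+\{\mathtt{false}\}$, and the branching on its boolean result is provided by $\mathtt{case}$.

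Concretely, I would first package the branching into an if-then-else combinator. Recalling that $\mathtt{case}$ dispatches on a value of a sum type $A+B$ by applying one of two supplied functions of types $A \to C$ and $B \to C$, I set, for $b : \{\mathtt{true}\}+\{\mathtt{false}\}$ and $u,v : \tau$,
\[ \mathtt{ite}(b,u,v) \;=\; \mathtt{case}\; b\; (\lambda z.\,u)\; (\lambda z.\,v), \]
where $z$ is a fresh variable discarding the trivial singleton-type argument of each branch; by the reduction rule for $\mathtt{case}$ this satisfies $\mathtt{ite}(\mathtt{true},u,v) =_\beta u$ and $\mathtt{ite}(\mathtt{false},u,v) =_\beta v$. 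Then, writing the first argument as $j$, I define the nested conditional by downward induction:
\[ M_{|I|} = x_{|I|}, \qquad M_k = \mathtt{ite}\bigl(\mathtt{is}^I_{i_k}\,j,\; x_k,\; M_{k+1}\bigr)\quad\text{for } 1 \le k < |I|, \]
and finally set $\mathtt{match}^{I,\tau} = \lambda j.\,\lambda x_1.\,\cdots\,\lambda x_{|I|}.\,M_1$. Typing is immediate: each $\mathtt{is}^I_{i_k}\,j$ has type $\{\mathtt{true}\}+\{\mathtt{false}\}$ and both branches of every $\mathtt{ite}$ have type $\tau$, so $M_1 : \tau$ and the whole term has the announced type $I \to \tau \to \cdots \to \tau \to \tau$.

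For correctness I would simply trace the reductions on an input $i_k$ for the first argument: the tests $\mathtt{is}^I_{i_1}\,i_k,\dots,\mathtt{is}^I_{i_{k-1}}\,i_k$ all reduce to $\mathtt{false}$, so evaluation descends to $M_k$; if $k<|I|$ then $\mathtt{is}^I_{i_k}\,i_k =_\beta \mathtt{true}$ and $M_k =_\beta x_k$, while if $k=|I|$ all previous tests fail and the innermost default $M_{|I|}=x_{|I|}$ is returned — in both cases the result is $x_k$, as required. I do not expect any genuine obstacle here: the entire content of the lemma is the recognition that a constant-size dispatch over a finite set reduces to a fixed-depth cascade of the equality tests and sum-type case analyses that are already primitive in the formalism. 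The only points deserving care are purely bookkeeping — that $|I|$ is a constant so the term has fixed size, and that the singleton-typed arguments fed to the $\mathtt{case}$ branches are correctly discarded.
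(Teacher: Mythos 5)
Your proposal is correct and follows essentially the same route as the paper: the paper's proof is a one-line sketch (\enquote{by induction on $|I|$, definable from $\mathtt{is}^I_i$ and $\mathtt{case}$}), and your cascade of $\mathtt{ite}$'s built from $\mathtt{is}$ and $\mathtt{case}$ is exactly the natural unfolding of that induction. The extra details you supply (the if-then-else combinator, the downward induction $M_k$, and the correctness trace) are a faithful elaboration rather than a different argument.
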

\begin{proof}[Proof sketch]
  By induction on $|I|$, it is definable from $\mathtt{is}^I_i$ ($i \in I$) \&
  $\mathtt{case}^{\{\mathtt{true}\},\{\mathtt{false}\},\tau}$.
\end{proof}

\begin{theorem}
  Polyregular functions are closed under composition by substitutions.
\end{theorem}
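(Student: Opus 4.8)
The plan is to avoid reasoning directly about the combinator presentation of polyregular functions — where, as noted above, closure under $\CbS$ is awkward to see through the squaring function — and instead work entirely inside the polynomial list functions formalism, invoking Bojańczyk's theorem that these coincide with the polyregular functions of type $\Gamma^* \to \Sigma^*$. So I would start by assuming $f : \Gamma^* \to I^*$ and $g_i : \Gamma^* \to \Sigma^*$ (for each $i \in I$) are polyregular, and fixing polynomial list function terms $t_f : \Gamma^* \to I^*$ and $t_{g_i} : \Gamma^* \to \Sigma^*$ representing them. Writing $I = \{i_1, \ldots, i_{|I|}\}$, the goal is then to assemble these finitely many terms into a single term of type $\Gamma^* \to \Sigma^*$ computing $\CbS(f,(g_i)_{i\in I})$.

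The key observation is that, although the definition of $\CbS$ looks like it substitutes a letter-indexed family of functions into the output of $f$, the index set $I$ is \emph{finite}, so for a fixed input $w$ only the $|I|$ values $g_{i_1}(w), \ldots, g_{i_{|I|}}(w)$ are ever needed. I therefore select among them, driven by the current letter, using the $\mathtt{match}^{I,\Sigma^*}$ function supplied by the previous lemma (instantiated at $\tau = \Sigma^*$, which the lemma allows). Concretely, I would define
\[ t \;=\; \lambda w.\; \mathtt{concat}\bigl(\,\mathtt{map}\;(\lambda i.\; \mathtt{match}^{I,\Sigma^*}\; i\; (t_{g_{i_1}}\, w)\, \cdots\, (t_{g_{i_{|I|}}}\, w))\;(t_f\, w)\,\bigr). \]
The inner abstraction $\lambda i.\; \cdots$ has type $I \to \Sigma^*$; note that $w$ occurs free in its body, which is harmless since the polynomial list functions are an enrichment of $\STlam$ with no linearity constraint, so $w$ may be freely duplicated across $t_f$ and all the $t_{g_{i_k}}$.

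It then remains to verify that $t$ computes $\CbS(f,(g_i)_{i\in I})$. On input $w$, the subterm $t_f\, w$ reduces to the list $f(w) = c_1 \ldots c_n$ with each $c_j \in I$; mapping the abstraction over it replaces each $c_j$ by $\mathtt{match}^{I,\Sigma^*}\; c_j\; (t_{g_{i_1}}\, w)\cdots(t_{g_{i_{|I|}}}\, w)$, which by the defining equation of $\mathtt{match}$ — if $c_j = i_k$, the result is its $(k{+}1)$-th argument $t_{g_{i_k}}\, w$ — evaluates to $t_{g_{c_j}}\, w =_\beta g_{c_j}(w)$. Finally $\mathtt{concat}$ flattens the resulting list of strings into $g_{c_1}(w) \cdots g_{c_n}(w)$, which is exactly $\CbS(f,(g_i)_{i\in I})(w)$. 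Since $\mathtt{match}^{I,\Sigma^*}$, $\mathtt{map}$ and $\mathtt{concat}$ are all polynomial list functions and the class is closed under the term-forming operations of the calculus, $t$ is a polynomial list function, hence polyregular.

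I do not expect a serious obstacle here: this is precisely the payoff of routing the argument through polynomial list functions rather than the squaring combinator. The only real content is the finiteness insight that lets one replace the family $(g_i)_{i\in I}$ by a finite tuple selected through $\mathtt{match}$; the remaining subtleties are purely matters of typing and scope — checking that the free input variable $w$ may legitimately appear under the $\lambda i$ abstraction, and that $\mathtt{match}$ may be instantiated at the list type $\Sigma^*$ rather than at a finite type — both of which are granted by the previous lemma and by the absence of linearity restrictions in the formalism.
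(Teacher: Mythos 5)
Your proposal is correct and takes essentially the same approach as the paper: the paper's proof expresses $\CbS(f,(g_i)_{i\in I})$ by the very same term $\lambda w.\; \mathtt{concat}^\Sigma\;(\mathtt{map}^{I,\Sigma^*}\;(\lambda i.\; \mathtt{match}^{I,\Sigma^*}\; i\; (g_{i_1}\; w)\; \ldots\;(g_{i_{|I|}}\;w))\;(f\;w))$, invoking the same $\mathtt{match}$ lemma and Bojańczyk's equivalence with polynomial list functions. Your explicit verification of the reduction behaviour is a harmless elaboration of what the paper leaves implicit.
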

\begin{proof}
  Let $f : \Gamma^* \to I^*$, and for $i \in I$, $g_i : \Gamma^* \to \Sigma^*$
  be polyregular functions. Assuming that $f$ and $g_i$ ($i \in I$) are defined
  by polynomial list functions of the same name, $\CbS(f,(g_i)_{i\in I})$ can be
  expressed as $\lambda w.\;
  \mathtt{concat}^\Sigma\;(\mathtt{map}^{I,\Sigma^*}\;(\lambda i.\;
  \mathtt{match}^{I,\Sigma^*}\; i\; (g_{i_1}\; w)\;
  \ldots\;(g_{i_{|I|}}\;w))\;(f\;w))$.
\end{proof}

\subsection{Register tree transducers}

To define the regular tree functions, the first step is to consider the tree
version of register transducers. We shall restrict ourselves to binary trees, as
in~\cite[\S3.7]{BRTT}.
\begin{definition}
  The set $\BinTree(\Sigma)$ of \emph{binary trees} over the alphabet $\Sigma$,
  and the set $\dBinTree(\Sigma)$ of \emph{one-hole binary trees}\footnote{Our
    choice of notation is motivated by the fact that in enumerative
    combinatorics, the derivative of a generating function or species of
    structures corresponds to taking one-hole contexts.}, are generated by the
  respective grammars
  \[ T,U ::= \langle \rangle \mid a\langle T,U \rangle\quad (a \in \Sigma)
    \qquad\qquad T' ::= \square \mid a\langle T',T \rangle \mid a\langle T,T'
    \rangle \quad a \in \Sigma) \]

  That is, $\BinTree(\Sigma)$ consists of binary trees whose leaves are all
  equal to $\langle \rangle$ and whose nodes are labeled with letters in
  $\Sigma$. As for $\dBinTree(\Sigma)$, it contains trees with exactly one leaf
  labeled $\square$ instead of $\langle \rangle$. This \enquote{hole} $\square$
  is intended to be substituted by a tree: for $T' \in \dBinTree(\Sigma)$ and $U
  \in \BinTree(\Sigma)$, $T'[U]$ denotes $T'$ where $\square$ has been replaced
  by $U$.
\end{definition}

\begin{definition}
  The \emph{binary tree (resp.\ one-hole binary tree) expressions} over the
  variable sets $V$ and $V'$ are generated by the grammar (with $x \in V$, $x'
  \in V'$ and $a \in \Sigma$)
  \[ E,F ::= \langle \rangle \mid x \mid a\langle E,F \rangle \mid E'[E] \qquad
    \emph{(resp.}\ E',F' := \square \mid x' \mid a\langle E',E \rangle \mid
    a\langle E,E' \rangle \mid E'[F']\emph{)} \]
  The sets of such expressions is denoted by \emph{$\ExprBT(\Sigma,V,V')$
    (resp.\ $\ExprdBT(\Sigma,V,V')$)}.

  Given $\rho : V \to \BinTree(\Sigma)$ and $\rho' : V' \to \dBinTree(\Sigma)$,
  one defines $E(\rho,\rho') \in \BinTree(\Sigma)$ for $E \in \ExprBT(\Sigma)$
  and $E'(\rho,\rho') \in \dBinTree(\Sigma)$ for $E \in \ExprdBT(\Sigma)$ in the
  obvious way.
\end{definition}

\begin{definition}
  A \emph{register tree transducer (RTT)} $\Gamma^* \to \Sigma^*$ consists of: a
  finite set $Q$ of \emph{states} with an initial state $q_I \in Q$; two
  disjoint finite sets $R,R'$ of \emph{registers}; an output function $F : Q \to
  \ExprBT(\Sigma,R,R')$; and a transition function (where $R_{\ttl\ttr} =
  R\times\{\ttl,\ttr\}$)
  \[\delta : Q \times Q \times \Gamma \to Q \times (R \to
    \ExprBT(\Sigma,R_{\ttl\ttr},R'_{\ttl\ttr})) \times (R' \to
    \ExprdBT(\Sigma,R_{\ttl\ttr},R'_{\ttl\ttr})) \]
\end{definition}

The set of configurations of a RTT is $Q \times \BinTree(\Sigma)^R \times
\dBinTree(\Sigma)^{R'}$. It processes its input tree in a single bottom-up
traversal, computing for each subtree a configuration, starting with $(q_I, (r
\mapsto \langle \rangle), (r' \mapsto \square))$ at the leaves. The
configuration at $a\langle T,U \rangle$ is obtained from the one at $T$ and the
one at $U$ by applying $\delta$ to the pair of states and to $a$, and using each
expression $E$ in the image to determine the value $E(\rho,\rho')$ of the
corresponding register, where $\rho$ maps $(r,\ttl)$ (resp.\
$(r,\ttr)$) to the value of $r$ in the configuration of the left subtree
$T$ (resp.\ right subtree $U$), and similarly for $\rho'$.
See~\cite[\S3.7]{BRTT} for a more precise definition.

Regular tree functions are actually characterized by Alur and
D'Antoni's~\emph{bottom-up ranked tree transducers}~\cite{BRTT}. They are
register tree transducers with a kind of linearity condition, whose statement is
more complicated than in the case of SSTs. We give the full definition -- which
involves a \enquote{conflict relation} over registers -- in
\cref{sec:appendix-ealam-tree}.

\section{Transductions in the simply typed $\lambda$-calculus}
\label{sec:stlam}

\subsection{HDT0L+composition string functions are $\lambda$-definable}
\label{sec:stlam-string}

After these long preliminaries, at last, it is time to encode transductions in
$\STlam$.

First, we need to state precisely the definitions of Church encodings beyond
$\Nat$. For a finite alphabet $\Sigma$, we take $\Str_\Sigma = (o \to
o)^{|\Sigma|} \to o \to o$. This requires some explanations:
\begin{itemize}
\item The function arrow is left-associative, so this is the same as $(o \to
  o)^{|\Sigma|} \to (o \to o)$. In general, a term of type $A_1 \to \ldots \to
  A_n \to B = A_1 \to ( \ldots (A_{n-1} \to (A_n \to B)) \ldots)$ should be
  thought of as a function with $n$ inputs of type $A_1,\ldots,A_n$ and one
  output of type $B$ (this is analogous to the set-theoretic isomorphism $B^{A_1
    \times A_2} \cong (B^{A_2})^{A_1}$).
\item For the same reasons we abbreviate $A \to \ldots \to A \to B$ with $n$
  times $A$ as $A^n \to B$ (and at the level of terms, $(\ldots((f\; x_1)\; x_2)
  \ldots)\;x_n$ as $f\;x_1\;\ldots\;x_n$).
\end{itemize}
Observe that $\Str_{\{1\}} = (o \to o) \to o \to o = \Nat$ as we claimed in the
introduction.

Given an enumeration $\Sigma = \{a_1, \ldots, a_{|\Sigma|}\}$, a string $w =
a_{i_1} \ldots a_{i_n} \in \Sigma^*$ is encoded as
\[ \overline{w} = \lambda f_1.\; \ldots\; \lambda f_{|\Sigma|}.\; \lambda x.\;
  f_{i_1}\;(\ldots\; (f_{i_n}\;x)\ldots)\quad \text{(morally},\;
  \overline{w}(f_1,\ldots,f_{|\Sigma|}) = f_{i_1} \circ \ldots \circ
  f_{i_n}\text{)} \]
With this, the definition of $\lambda$-definable string functions
(\cref{def:lambda-definable}) is now fully rigorous.

The first two items in the statement of \cref{thm:stlam-intro} have already been
established in the introduction (more precisely \S\ref{sec:hillebrand}), so let
us prove the last one.

\begin{lemma}
  \label{lem:stlam-morphism}
  Any monoid morphism $\Gamma^* \to \Sigma^*$ can be defined by a term in
  $\STlam$ of type $\Str_\Gamma \to \Str_\Sigma$ -- there is no need for a
  substitution in the input type.
\end{lemma}
\begin{proof}
  For $\Gamma = \{g_1,\ldots,g_k\}$, $\Sigma = \{s_1, \ldots, s_l\}$, let
  $\varphi : \Gamma^* \to \Sigma^*$ be a morphism. We define
  \[ t = \lambda z.\; \left(\lambda f_1.\; \ldots\;\lambda f_l.\; z\;
      \left(\overline{\varphi(g_1)}\;f_1\;\ldots\;f_l\right) \;\ldots\;
      \left(\overline{\varphi(g_k)}\;f_1\;\ldots\;f_l\right)\right) \]
  One can check that $t : \Str_\Gamma \to \Str_\Sigma$ represents $\varphi$.
  Morally, the reason is that for $u = g_{i_1} \ldots g_{i_n} \in \Gamma^*$,
  $\overline{\varphi(g_{i_1})}(f_1,\ldots,f_n) \circ\ldots\circ
  \overline{\varphi(g_n)}(f_1,\ldots,f_n) = \overline{\varphi(g_{i_1}) \ldots
    \varphi(g_{i_n})}(f_1,\ldots,f_n)$.
\end{proof}

\begin{theorem}
  HDT0L transductions are $\lambda$-definable string functions.
\end{theorem}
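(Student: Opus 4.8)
The plan is to exploit the fact that a Church-encoded string $\overline{w}$, when supplied with functions in place of its base-type arguments, computes their iterated composition. Here the functions to iterate will be the morphisms $h_c$ of the HDT0L system, acting on Church-encoded working strings. Concretely, I take the substituted input type to be $A = \Str_\Delta$, where $\Delta$ is the working alphabet, so that the defining term will have type $\Str_\Gamma[\Str_\Delta] \to \Str_\Sigma$.

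First, by \cref{lem:stlam-morphism}, every morphism $h_c : \Delta^* \to \Delta^*$ is represented by a term of type $\Str_\Delta \to \Str_\Delta$, which I denote by the same name $h_c$; likewise the final morphism $h' : \Delta^* \to \Sigma^*$ is represented by a term $h' : \Str_\Delta \to \Str_\Sigma$. The initial word is encoded as $\overline{d} : \Str_\Delta$. Writing $\Gamma = \{c_1, \ldots, c_{|\Gamma|}\}$, I then define
\[ t = \lambda z.\; h'\; \bigl(z\; h_{c_1}\; \ldots\; h_{c_{|\Gamma|}}\; \overline{d}\bigr). \]
By the substitution lemma, $\overline{w} : \Str_\Gamma$ may be retyped as $\Str_\Gamma[\Str_\Delta] = (\Str_\Delta \to \Str_\Delta)^{|\Gamma|} \to \Str_\Delta \to \Str_\Delta$, so that $z$ can consume the $|\Gamma|$ morphisms followed by the initial value $\overline{d}$; hence $t$ is well-typed of type $\Str_\Gamma[\Str_\Delta] \to \Str_\Sigma$.

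The correctness computation is then routine. For $w = w_1 \ldots w_n$, the definition of the Church encoding gives $\overline{w}\; h_{c_1}\; \ldots\; h_{c_{|\Gamma|}}\; \overline{d} =_\beta h_{w_1}\;(\ldots\;(h_{w_n}\;\overline{d})\ldots)$. Because each term $h_c$ represents the morphism $h_c$ --- i.e.\ $h_c\;\overline{u} =_\beta \overline{h_c(u)}$ for all $u \in \Delta^*$ --- applying these terms in succession realizes composition in $\mathrm{Hom}(\Delta^*,\Delta^*)$, so the expression reduces to $\overline{(h_{w_1} \circ \ldots \circ h_{w_n})(d)} = \overline{H(w)(d)}$. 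Applying $h'$ yields $\overline{h'(H(w)(d))}$, which is exactly the Church encoding of the HDT0L output on $w$; thus $t$ defines the transduction.

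There is no serious obstacle here: the construction is a direct reading of the Church encoding as \enquote{compose these maps and apply them to the seed}, and \cref{lem:stlam-morphism} already supplies all the required morphisms as $\STlam$ terms. The only point requiring care is the type bookkeeping --- choosing $\Str_\Delta$ as the substituted base type and invoking the substitution lemma to retype $\overline{w}$ accordingly --- together with the observation that composition of the representing terms matches composition of the morphisms they represent.
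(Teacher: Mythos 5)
Your proof is correct and is essentially the paper's own argument: both choose the substituted type $A = \Str_\Delta$, use \cref{lem:stlam-morphism} to turn the morphisms $h_c$ and $h'$ into terms of types $\Str_\Delta \to \Str_\Delta$ and $\Str_\Delta \to \Str_\Sigma$, and define $t = \lambda z.\; h'\,(z\; h_{c_1}\; \ldots\; h_{c_{|\Gamma|}}\; \overline{d})$, relying on the key observation that the $h_c$ have equal input and output types so the Church encoding can iterate them. Your write-up merely spells out the correctness computation in slightly more detail than the paper does.
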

\begin{proof}
  Consider a HDT0L system defining a function $f : \Gamma^* \to \Sigma^*$ with
  alphabets $\Gamma = \{g_1,\ldots,g_k\}$, $\Sigma$ and $\Delta$, initial word
  $d \in \Delta^*$, morphisms $h_i : \Delta^* \to \Delta^*$ for $i \in
  \{1,\ldots,k\}$ corresponding to $c_i \in \Gamma$, and final morphism $h' :
  \Delta^* \to \Sigma^*$. By the above lemma, each $h_i$ (resp.\ $h'$) can be
  represented by $u_i : \Str_\Delta \to \Str_\Delta$ (resp.\ $u' : \Str_\Delta
  \to \Str_\Sigma$).

  It is important to note that the input and output types of the $u_i$ are
  equal. This allows us to define the term $t = \lambda z.\; u'\;(z\; u_1\;
  \ldots\; u_k \; \overline{d}) : \Str_\Gamma[\Str_\Delta] \to \Str_\Sigma$
  which expresses $f$.
\end{proof}

\subsection{Regular tree functions are $\lambda$-definable}
\label{sec:stlam-tree}

In the case of tree-to-tree functions, we also prove that register tree
transducers (RTTs) can be encoded in $\STlam$ -- and consequently, their closure
under composition also can. However, we are not aware of any alternative
characterization of this class -- we only know that it it is a (strict)
superclass of the regular tree functions. So we must work directly with RTTs.

The type of \emph{Church encodings of binary trees} is $\BT_\Sigma = (o \to o
\to o)^{|\Sigma|} \to o \to o$ (where $\Sigma$ is the alphabet of node labels).
Given an enumeration $\Sigma = \{a_1,\ldots,a_n\}$, each $T \in
\BinTree(\Sigma)$ is encoded as a $\lambda$-term $\overline{T} = \lambda f_1.\;
\ldots\; \lambda f_n.\; \lambda x.\; \widehat{T} : \BT_\Sigma$, where we define
inductively $\widehat{\langle \rangle} = x$ and $\widehat{a_i \langle T,U
  \rangle} = f_i\; \widehat{T}\;\widehat{U}$. Morally, $\overline{T}(f_1,
\ldots, f_n, x)$ is the result of a single-pass bottom-up traversal of $T$,
starting with $x$ at the leaves and combining the results of subtrees with $T$.
\begin{remark}
  Analogously, $\Str_\Sigma$ can be seen as an encoding of \enquote{unary trees}
  whose bottom-up traversals correspond to right-to-left traversals of the
  corresponding strings (think of the \texttt{fold\_right} / \texttt{foldr}
  functions in some functional programming languages).
\end{remark}

\begin{lemma}
  \label{lem:stlam-rtt}
  Any $T' \in \dBinTree(\Sigma)$ can be compiled to a term $\Compile(T')$ in
  $\STlam$ of type $\dBT_\Sigma = \BT_\Sigma \to \BT_\Sigma$ such that for all
  $U \in \BinTree(\Sigma)$, $\Compile(T')\;\overline{U} =_\beta
  \overline{T'[U]}$. Similarly:
  \begin{itemize}
  \item any $E \in \ExprBT(\Sigma, V = \{x_1, \ldots, x_n\}, V' =
    \{x'_1,\ldots,x'_m\})$ can be compiled to a $\lambda$-term $\Compile(E) :
    (\BT_\Sigma)^n \to (\dBT_\Sigma)^m \to \dBT_\Sigma \to \BT_\Sigma$ such that
    for all $\rho : V \to \BinTree(\Sigma)$ and $\rho' : V' \to
    \dBinTree(\Sigma)$, $\overline{E(\rho,\rho')} =_\beta
    \Compile(E)\;\overline{\rho(x_1)}\;\ldots\;\overline{\rho(x_n)}\;
    \Compile(\rho'(x'_1))\;\ldots\;\Compile(\rho'(x'_m))$.
  \item any $E' \in \ExprdBT(\Sigma)$ can be compiled to a term $\Compile(E') :
    (\BT_\Sigma)^n \to (\dBT_\Sigma)^m \to \dBT_\Sigma$ enjoying the analogous
    property.
  \end{itemize}
\end{lemma}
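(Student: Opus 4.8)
The plan is to define $\Compile$ and prove all three statements \emph{by a single structural induction}, with the first bullet (concrete $T' \in \dBinTree(\Sigma)$) appearing as the variable-free case $V = V' = \emptyset$ of the $\ExprdBT$ clause. Two observations about the encoding $\BT_\Sigma = (o \to o \to o)^{|\Sigma|} \to o \to o$ make everything fall into place, and I would establish them first. (i) For each $a = a_i \in \Sigma$ the combinator $\mathrm{node}_a = \lambda \ell\, r.\, \lambda f_1 \ldots f_{|\Sigma|}\, x.\; f_i\,(\ell\, f_1 \ldots f_{|\Sigma|}\, x)\,(r\, f_1 \ldots f_{|\Sigma|}\, x) : \BT_\Sigma \to \BT_\Sigma \to \BT_\Sigma$ merges two bottom-up traversals into that of a single node, so that $\mathrm{node}_a\,\overline{L}\,\overline{R} =_\beta \overline{a\langle L, R\rangle}$. (ii) The choice $\dBT_\Sigma = \BT_\Sigma \to \BT_\Sigma$ is exactly what turns \enquote{plugging a tree into a hole} into \emph{function application}: a one-hole context is encoded as the map that splices the fold of its argument at the hole position. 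The first bullet is then proved by induction on $T'$, using $\mathrm{node}_a$ for the two node productions and $\lambda v.\, v$ for $\square$; this base result is what licenses treating compiled contexts as genuine functions $\BT_\Sigma \to \BT_\Sigma$ in the sequel.

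I would then set up the simultaneous induction so as to maintain the invariant that every $\ExprdBT$ expression compiles to a \emph{context-valued} term (type $\dBT_\Sigma$ after its $n+m$ variable arguments), while every $\ExprBT$ expression compiles to a term of the stated type, the final $\dBT_\Sigma$ playing the role of a \emph{continuation}: $\Compile(E)$ applied to the variable arguments and a context $k$ returns $k$ applied to $\overline{E(\rho,\rho')}$. The atomic cases are immediate (the leaf $\langle\rangle$ and the tree/hole variables, returned through the continuation resp.\ read off from the corresponding argument, and $\Compile(\square) = \lambda v.\, v$). The node productions are handled by $\mathrm{node}_a$: for $\ExprBT$ both branches are first evaluated at the identity context to recover their encoded trees before being merged; for the two $\ExprdBT$ node productions $\mathrm{node}_a$ combines the branch still carrying the hole with the fully evaluated one, and the outcome is abstracted over the hole so as to remain context-valued. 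The substitution productions are where (ii) does the work: $E'[E] \in \ExprBT$ is compiled by \emph{applying} the context from $E'$ to the tree from $E$, while $E'_1[E'_2] \in \ExprdBT$ is compiled by \emph{composing} the two contexts. In every case the required $\beta$-equality drops out of the induction hypotheses together with (i), (ii) and the evident identities $(a\langle E,F\rangle)(\rho,\rho') = a\langle E(\rho,\rho'), F(\rho,\rho')\rangle$ and $(E'[E])(\rho,\rho') = (E'(\rho,\rho'))[E(\rho,\rho')]$.

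The step I expect to demand the most care is checking the $\beta$-equalities for the \emph{node-with-a-hole} productions, where a context-valued compilation is fed into one branch of $\mathrm{node}_a$: one must verify that splicing a subtree's fold into a single branch really does commute with forming the node, i.e.\ that $\mathrm{node}_a\,(C\,\overline U)\,\overline R =_\beta \overline{a\langle T'[U], R\rangle}$ for $C = \Compile(T')$ and symmetrically on the right. Because $\BT_\Sigma$ is a \emph{fold}, these identities can only be read off once the abstracted traversal parameters $f_1, \ldots, f_{|\Sigma|}, x$ have been reintroduced, and threading these bound variables uniformly through all the clauses — so that the context invariant is preserved across the mutual recursion — is the only genuinely delicate bookkeeping; the remaining cases are mechanical. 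Finally, the displayed equation for $\ExprBT$ omits the final context argument present in the type: it is recovered as the instance at the trivial context, $\Compile(E)\,\overline{\rho(x_1)} \cdots \Compile(\rho'(x'_m))\,(\lambda v.\, v) =_\beta \overline{E(\rho,\rho')}$, the identity leaving the computed tree unchanged.
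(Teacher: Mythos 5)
Your construction is correct and is essentially the one in the paper, which only sketches this lemma (\cref{sec:appendix-stlam-rtt}): one-hole contexts become terms of type $\dBT_\Sigma = \BT_\Sigma \to \BT_\Sigma$ so that plugging is function application, $\Compile(\square) = \lambda z.\,z$, nodes are compiled by merging the folds of the two subtrees (the paper writes your $\mathrm{node}_a$ combinator inline in its clauses for $\Compile(a_i\langle T',U\rangle)$ and $\Compile(a_i\langle U,T'\rangle)$), and the substitution productions $E'[E]$ and $E'[F']$ are translated by application and composition of compiled contexts, exactly as in your sketch. The one genuine divergence is your handling of the trailing $\dBT_\Sigma$ in the stated type of $\Compile(E)$ for $E \in \ExprBT$: you take that type literally and compile in continuation-passing style, so that the displayed equation holds only after feeding the identity context as a last argument. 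The paper makes the opposite repair of this internally inconsistent statement: the extra \enquote{$\dBT_\Sigma \to$} is a typo, the intended type is $(\BT_\Sigma)^n \to (\dBT_\Sigma)^m \to \BT_\Sigma$, and the displayed equation is exact with no extra argument. One can see this in the paper's own application of the lemma in the proof of \cref{thm:stlam-rtt}, where a compiled $\ExprBT$ expression applied to exactly the $n+m$ register contents is passed to a variable $f_q$ expecting an argument of type $\BT_\Sigma$; under your CPS typing that term would instead have type $\dBT_\Sigma \to \BT_\Sigma$, and the encoding of the transducer would not typecheck without inserting identity arguments at every such site. Your wrapper is mathematically harmless --- it differs from the direct compilation only by $\Compile_{\mathrm{cps}}(E)\,\vec{v}\,k =_\beta k\,(\Compile(E)\,\vec{v})$ --- but the direct, non-CPS compilation is the reading that matches both the lemma's equation and its downstream use.
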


\begin{theorem}
  \label{thm:stlam-rtt}
  Any function from $\BinTree(\Gamma)$ to $\BinTree(\Sigma)$ computed by a
  register tree transducer can be expressed by a $\lambda$-term of type
  $\BT_\Gamma[A] \to \BT_\Sigma$ for some simple type $A$.
\end{theorem}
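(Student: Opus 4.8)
The plan is to observe that a register tree transducer computes its output by a single bottom-up fold over the input tree, and that this is precisely the computation that a Church-encoded tree performs. The carrier of the fold is the set of configurations $Q \times \BinTree(\Sigma)^R \times \dBinTree(\Sigma)^{R'}$: the leaves carry the initial configuration $(q_I,\, r \mapsto \langle\rangle,\, r' \mapsto \square)$, and each node label $a \in \Gamma$ induces a binary operation on configurations, namely applying $\delta(-,-,a)$ to the pair of child states and then evaluating the resulting register-update expressions on the child register values; the root configuration is finally sent to an output tree by the output function $F$. So I would build four $\STlam$ ingredients -- a simple type $A$ representing configurations, terms $\mathrm{step}_a : A \to A \to A$ for each $a \in \Gamma$, a term $\mathrm{init} : A$, and a term $\mathrm{extract} : A \to \BT_\Sigma$ -- and set $t = \lambda z.\ \mathrm{extract}\,(z\ \mathrm{step}_{a_1}\,\cdots\,\mathrm{step}_{a_{|\Gamma|}}\ \mathrm{init})$. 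By the ``morally bottom-up traversal'' reading of $\overline{T}$, this term has type $\BT_\Gamma[A] \to \BT_\Sigma$ and computes the transducer.

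To represent a configuration I would store the register contents directly as the encodings supplied by \cref{lem:stlam-rtt}: a tree register as a term of type $\BT_\Sigma$, a one-hole register as a term of type $\dBT_\Sigma$. The state, living in the finite set $Q$, I would encode as a selector, and the whole configuration as a Church-style tuple, i.e.\ a term that hands its components to a continuation. The one genuine design decision is the return type $C$ of these encodings: taking $C = \BT_\Sigma$ (rather than the base type $o$) makes the type rich enough for \emph{both} recombination and extraction. Concretely I set $A = \bigl(\, ((\BT_\Sigma)^{|Q|} \to \BT_\Sigma) \to (\BT_\Sigma)^{|R|} \to (\dBT_\Sigma)^{|R'|} \to \BT_\Sigma \,\bigr) \to \BT_\Sigma$, where a state is encoded as a term $\overline{q} = \lambda c_1.\,\ldots\,\lambda c_{|Q|}.\ c_q$ of type $(\BT_\Sigma)^{|Q|} \to \BT_\Sigma$ selecting its own branch.

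With this $A$, the term $\mathrm{init}$ bundles $\overline{q_I}$ with the encodings of $\langle\rangle$ and $\square$. For $\mathrm{step}_a$ I would open both argument configurations (binding their state selectors and register terms through continuations), then perform a nested case analysis on the pair of child states by applying the two selectors; in the branch for $(q_i,q_j)$ the datum $\delta(q_i,q_j,a) = (q',u,u')$ is concrete, so I can directly emit the new state $\overline{q'}$ together with each new register value, obtained by applying the compiled term $\Compile(u(r))$ (resp.\ $\Compile(u'(r'))$) of \cref{lem:stlam-rtt} to the child register terms bound for the $\ttl$ and $\ttr$ copies. Finally $\mathrm{extract}$ opens the root configuration and, branching on its state selector, applies $\Compile(F(q))$ to the root registers -- and here the choice $C = \BT_\Sigma$ is exactly what lets the selector return an output tree. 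Correctness then follows by a routine induction on the input tree, showing that $\overline{T}\ \mathrm{step}_{a_1}\,\cdots\,\mathrm{init}$ reduces to the tuple encoding the transducer's configuration at $T$, the inductive step being discharged by the compilation identities of \cref{lem:stlam-rtt}.

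The main obstacle is not the induction but the bookkeeping forced by working inside \emph{pure} $\STlam$, which has neither products nor finite types: configurations must be threaded as continuation-passing tuples, and the finite transition table must be unfolded into an explicit $|Q|^2$-way case analysis per letter, realised through the state selectors. The subtle point within this is the elimination type. The naive choice $C = o$ suffices for the recombination step but breaks down at extraction, since a selector returning base-type values cannot produce an output of type $\BT_\Sigma$; committing to $C = \BT_\Sigma$ throughout repairs this while keeping $A$ a bona fide simple type, so that a single type $A$ serves for the whole fold.
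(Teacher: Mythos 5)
Your proposal is correct and follows essentially the same route as the paper's proof: a continuation-passing-style encoding of the configuration set with return type $\BT_\Sigma$ (the paper's footnote describes exactly this relativized double negation $\lnot'\lnot'$ with $\lnot' D = D \to \BT_\Sigma$), the compiled register expressions of \cref{lem:stlam-rtt} for the updates, and an unfolded $|Q|^2$-way case analysis per letter. The only cosmetic difference is that you keep the state as an explicit selector of type $(\BT_\Sigma)^{|Q|} \to \BT_\Sigma$ inside a single-continuation tuple, whereas the paper fuses the state into the choice among $|Q|$ continuations via $A = B^{|Q|} \to \BT_\Sigma$ with $B = \BT_\Sigma^{|R|} \to (\BT_\Sigma \to \BT_\Sigma)^{|R'|} \to \BT_\Sigma$; the two encodings are interchangeable.
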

\begin{proof}[Proof sketch]
  As discussed above, the kind of bottom-up traversal done by a register tree
  transducer corresponds exactly to the \enquote{fold function} embodied by the
  Church encoding of a tree. One would want to directly encode the RTT by
  setting $A$ to be its type of configurations; the main obstacle to defining
  such an $A$ is the lack of product and sum types in $\STlam$ (unlike in
  polynomial list functions, cf.~\S\ref{sec:polyseq}). To overcome this, we use
  a continuation-passing-style transformation with return type $\BT_\Sigma$.
  Cf.~\cref{sec:appendix-stlam-rtt} for details of the proof.
\end{proof}

\begin{corollary}
  Any regular tree function is definable in $\STlam$.
\end{corollary}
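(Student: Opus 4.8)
The plan is to deduce this directly from \cref{thm:stlam-rtt} by exhibiting the regular tree functions as a subclass of the functions computed by register tree transducers. First I would recall the characterization stated in \cref{sec:transducers}: following Alur and D'Antoni~\cite{BRTT}, the regular tree functions are exactly those computed by \emph{bottom-up ranked tree transducers}. The crucial structural observation is that a bottom-up ranked tree transducer is, by definition, nothing more than a register tree transducer subject to an additional linearity restriction (the \enquote{conflict relation} condition recalled in \cref{sec:appendix-ealam-tree}). Forgetting this restriction can only enlarge the class of computable functions, so every regular tree function is \emph{a fortiori} computed by some register tree transducer.

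With this inclusion in hand, the corollary is immediate: given a regular tree function from $\BinTree(\Gamma)$ to $\BinTree(\Sigma)$, pick a register tree transducer computing it, and apply \cref{thm:stlam-rtt} to obtain a simple type $A$ and a $\lambda$-term of type $\BT_\Gamma[A] \to \BT_\Sigma$ expressing it. Since this term lives in $\STlam$, the function is $\lambda$-definable, which is exactly the claim.

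I expect no real obstacle here, since all the genuine work has been absorbed into \cref{thm:stlam-rtt} (whose proof, via the continuation-passing-style encoding of configurations, is deferred to \cref{sec:appendix-stlam-rtt}). The only point that deserves a word of care is that we are deliberately \emph{discarding} the linearity condition: the simply typed $\lambda$-calculus imposes no constraint on how often a bound variable may be used, so the copyful recombination of registers permitted by a general register tree transducer causes no difficulty. In particular, we need not verify that the transducer arising from a regular tree function is linear in any sense; the weaker fact that it is a register tree transducer already suffices.
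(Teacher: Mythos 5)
Your proposal is correct and matches the paper's intended argument exactly: the corollary follows from \cref{thm:stlam-rtt} because regular tree functions are, by Alur and D'Antoni's characterization~\cite{BRTT}, computed by bottom-up ranked tree transducers, which are by definition register tree transducers satisfying an extra (conflict-relation) linearity condition that can simply be forgotten. The paper leaves this reasoning implicit, and your observation that $\STlam$ imposes no linearity constraint, so nothing needs to be verified about the discarded condition, is precisely the right point of care.
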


\section{Streaming transducers in the elementary affine
  $\lambda$-calculus}
\label{sec:ealam}

The grammar of terms of $\EAlam$ and its equational theory are given by
\[ t, u ::= x \mid \lambda x.\, t \mid \lambda\oc x.\, t \mid t\,u \mid \oc t
  \qquad
 (\lambda x.\, t)\, u =_\beta t\{x := u\} \quad
  (\lambda\oc x.\, t)\, (\oc u) =_\beta t\{x := u\}
\]
where $x$ is taken in a countable set of variables; we take $=_\beta$ to be the
smallest congruence generated by the two rules above. The type system of
$\EAlam$ is given in \cref{sec:appendix-ealam-type-system}. It enforces two
important constraints on terms. The first means that one must use $\lambda\oc$
to define non-linear functions -- in other words, a subterm must be marked by
`$\oc$' to be duplicable:
\begin{center}
  \emph{(linearity)} in any subterm of the form $\lambda x.\; t$, $x$ appears
  \emph{at most once} in $t$
\end{center}
An additional constraint related specifically to Elementary Linear
Logic~\cite{girardELL} is
\begin{center}
  \emph{(stratification)} in any subterm of the form $\lambda x.\; t$ (resp.\
  $\lambda\oc x.\; t$),\\
  the \emph{depth} of each occurrence of $x$ in $t$ is 0 (resp.\ 1)
\end{center}
By depth we mean the number of $\oc$'s in $t$ surrounding $x$. Stratification
entails that in the two rules above generating $=_\beta$, the depth of the
subterm $u$ is the same on both sides; thus, we have an invariant for $=_\beta$.
In particular one cannot define type-cast functions taking any $\oc{t}$ to $t$
(\emph{dereliction}) or to $\oc\oc{t}$ (\emph{digging}). (`$\oc$' is called the
\emph{exponential modality}.)

\subsection{Encoding streaming string transducers}
\label{sec:ealam-string}

The Church-encoded strings over $\Sigma = \{a_1, \ldots, a_n\}$ are defined in
$\EAlam$ as:
\[ \text{for $w = a_{i_1} \ldots a_{i_n} \in \Sigma^*$},\quad \overline{w} =
  \lambda\oc f_1.\; \ldots \lambda\oc f_n.\; \oc(\lambda x.\; f_{i_1}\, (\ldots
  (f_{i_n}\;x)\ldots))\]
and they are given the type $\Str_\Sigma = \forall \alpha.\;
\Str_\Sigma[\alpha]$ where $\Str_\Sigma[\alpha] = (\oc(\alpha \multimap
\alpha))^{|\Sigma|} \multimap \oc(\alpha \multimap \alpha)$. (As we did for
$\STlam$, we abbreviate $A \multimap \ldots \multimap A \multimap B$ with $k$
times $A$ as $A^k \multimap B$.) The $\forall\alpha$ is a second-order
quantifier -- the type system of $\EAlam$ indeed supports polymorphism.

Another encoding in $\EAlam$ is that of the finite set $\{1,\ldots,k\}$,
represented by the type $\Fin(n) = \forall \alpha.\; \alpha^n \multimap \alpha$:
the encoding of $i \in \{1,\ldots,k\}$ is $\lambda x_1.\; \ldots\; \lambda
x_k.\; x_i$. For instance the type $\Bool$ mentioned in~\S\ref{sec:intro-other}
is $\Fin(2) = \forall \alpha.\; \alpha \multimap \alpha \multimap \alpha$ --
this mirrors the $\STlam$ booleans.

As we discussed in \S\ref{sec:stlam-tree}, it is most natural to process a
string right-to-left using its Church encoding. But register transducers work in
a left-to-right fashion. To compensate for that, we shall \emph{propagate output
  functions backwards} instead.

\begin{definition}
  \label{def:delta-o}
  Let $(Q,q_I,R,\delta,F)$ be a register transducer with input alphabet
  $\Gamma$. We define $\delta^O : \Gamma \times (Q \to (\Sigma \cup R)^*) \to (Q
  \to (\Sigma \cup R)^*)$ by $\delta^O(a,G) = (q \in Q \mapsto
  s_{a,q}^*(G(q'_{a,q})))$, where $(q'_{a,q},s_{a,q}) = \delta(q,a)$ and
  $s_{a,q}^*$ is the unique extension of $s_{a,q} : R \to (\Sigma \cup R)^*$ to
  a monoid morphism $(\Sigma \cup R)^* \to (\Sigma \cup R)^*$ taking each letter
  of $\Sigma$ to itself.
\end{definition}
\begin{proposition}
  \label{prop:delta-o}
  Let $w = w_1 \ldots w_n \in \Gamma^*$. The image of $w$ by the register
  transducer $(Q,q_I,R,\delta,F)$ is $\varphi(G(q_I))$ where $G = \delta^O(w_1,
  (\ldots \delta^O(w_n, F) \ldots))$ and $\varphi : (\Sigma \cup R)^* \to
  \Sigma^*$ erases all letters from $R$ in its input.
\end{proposition}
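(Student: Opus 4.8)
The plan is to prove a statement slightly more general than the proposition by induction on the length of a processed suffix, allowing an arbitrary initial register valuation rather than only the empty one. Set $G_{n+1} = F$ and $G_i = \delta^O(w_i, G_{i+1})$ for $1 \le i \le n$, so that $G_1 = G$. I would then show that for every state $q \in Q$ and every $s : R \to \Sigma^*$, running the transducer on the suffix $w_i \ldots w_n$ from the configuration $(q,s)$ -- reaching some $(q^\dagger, s^\dagger)$ and producing the output $(s^\dagger)^*(F(q^\dagger))$ -- yields exactly $s^*(G_i(q))$. The proposition is then the instance $i = 1$, $q = q_I$, $s = (r \mapsto \varepsilon)$: for this particular $s$ the morphism $s^*$ sends every register to $\varepsilon$ and fixes each letter of $\Sigma$, which is precisely the erasing morphism $\varphi$, so $s^*(G(q_I)) = \varphi(G(q_I))$.

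The base case $i = n+1$ is immediate, since reading the empty suffix leaves $(q,s)$ unchanged and its output is $s^*(F(q)) = s^*(G_{n+1}(q))$ by definition of $G_{n+1}$. For the inductive step I would perform a single transition: writing $(q', u) = \delta(q, w_i)$, the definition of $\longrightarrow_{w_i}$ gives $(q, s) \longrightarrow_{w_i} (q', s^* \circ u)$, and the induction hypothesis applied to the shorter suffix $w_{i+1}\ldots w_n$ starting from $(q', s^* \circ u)$ yields the output $(s^* \circ u)^*(G_{i+1}(q'))$. On the other hand, unfolding the definition of $\delta^O$ gives $G_i(q) = u^*(G_{i+1}(q'))$ (here $q' = q'_{w_i,q}$ and $u = s_{w_i,q}$), whence $s^*(G_i(q)) = s^*(u^*(G_{i+1}(q')))$. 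The step therefore reduces to the identity $(s^* \circ u)^* = s^* \circ u^*$ between monoid morphisms $(\Sigma \cup R)^* \to \Sigma^*$.

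The one point that genuinely requires care -- and the crux of the whole argument -- is this last identity, which says that substituting the registers according to $u$ and then evaluating under $s$ coincides with evaluating the updated valuation $s^* \circ u$ directly. Since both sides are morphisms out of the free monoid $(\Sigma \cup R)^*$, it suffices to compare them on generators: on a letter $a \in \Sigma$ both return $a$ (as $u^*$ and $s^*$ fix $\Sigma$), and on a register $r \in R$ both return $s^*(u(r))$, using $u^*(r) = u(r)$ and $(s^* \circ u)^*(r) = (s^* \circ u)(r)$. This is nothing more than the functoriality of substitution, and it is the only nontrivial ingredient; everything else is bookkeeping. I do not expect any real obstacle beyond stating this naturality fact cleanly and keeping the index conventions for the backward iteration straight.
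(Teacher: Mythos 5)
Your proof is correct. Note that the paper itself states \cref{prop:delta-o} without any proof, treating it as immediate from \cref{def:delta-o}; your argument is precisely the routine verification being tacitly relied upon. The two ingredients you isolate are the right ones: strengthening the statement to an arbitrary starting valuation $s$ (so that the induction on the suffix goes through), and the identity $(s^* \circ u)^* = s^* \circ u^*$ of monoid morphisms out of the free monoid $(\Sigma \cup R)^*$, checked on generators, which is what lets the forward-running semantics $s' = s^* \circ u$ commute with the backward propagation $\delta^O$. The final specialization to $s = (r \mapsto \varepsilon)$, under which $s^*$ coincides with the erasing morphism $\varphi$, is also handled correctly.
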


We must now implement this idea as a term of type $\Str_\Gamma \multimap
\Str_\Sigma$ in $\EAlam$. \emph{This is the same thing as a term of type}
$\Str_\Gamma[A] \multimap \Str_\Sigma[\alpha]$, where $\alpha$ is a free type
variable and $A$ may contain $\alpha$: by linearity, the quantified type
variable in the input is instantiated only once.

To implement this, one would want to iterate over the type of output functions;
naively, one would set $A$ to be $\Fin(|Q|) \multimap \Str_{\Sigma \cup R}$.
However this type contains an exponential (inside $\Str_{\Sigma \cup R}$) and
so, \emph{because of the stratification property, it is useless to produce an
  output of type $\Str[\alpha]$ since $\alpha$ is exponential-free}. (This can
be made rigorous using the \emph{truncation} operation for $\EAlam$ introduced
in~\cite{ealreg}.) Instead, we shall iterate over the \emph{purely linear} type
$A = \Fin(|Q|) \multimap (\alpha \multimap \alpha)^{|R|} \multimap (\alpha
\multimap \alpha)$.
It differs from the previous candidate by the absence of exponentials and of
$|\Sigma|$ arguments of type $\alpha \multimap \alpha$. This reflects the fact
that, if $F$ is a \emph{copyless} output function, then for all $q \in Q$,
$\overline{F(q)}$ is \emph{linear} in all arguments corresponding to register
names. As for those corresponding to $\Sigma$, they will be somehow replaced
with non-linear variables provided by the context.

We illustrate the construction on the register transducer computing $w \mapsto w
\cdot \mathtt{reverse}(w)$ given in \S\ref{sec:register-transducer}, which is
actually a streaming string transducer (that is, it is copyless). The general
proof is given in \cref{sec:appendix-ealam-regular}. We make a further
simplication: since this transducer has a single state, we drop the $\Fin(|Q|)$
argument in the type $A$. There are 2 registers, so our term has type
$\Str_{\{a,b\}}[A] \multimap \Str_{\{a,b\}}[\alpha]$ for $A = (\alpha \multimap
\alpha) \multimap (\alpha \multimap \alpha) \multimap (\alpha \multimap
\alpha)$.

First, we define $\EAlam$ terms corresponding to each $\delta^O(c,-)$
(\cref{def:delta-o}) for $c \in \{a,b\}$:
\[ d_c = \lambda G.\; \lambda r_X.\; \lambda r_Y.\; G\; (\lambda z.\; r_X\;
  (f_c\; z)) \; (\lambda z.\; f_c\; (r_Y\; y)) : A \multimap A \] These terms
use non-linearly the free variables $f_a, f_b : \alpha \multimap \alpha$.
Observe that the linearity condition of $\EAlam$ ($r_X$ and $r_Y$ occur at most
once) is satisfied precisely because the corresponding register update is
copyless! Next, we define $t : \Str_{\{a,b\}}[A] \multimap
\Str_{\{a,b\}}[\alpha]$ as
\[ t = \lambda u.\; \lambda\oc f_a.\; \lambda\oc f_b.\; (\lambda\oc h.\; \oc(h\;
  (\lambda r_X.\; \lambda r_Y.\; (\lambda z.\; r_X\; (r_Y\; z)))\; (\lambda
  x.\; x)\; (\lambda y.\; y)))\; (u\; \oc{d_a}\; \oc{d_b}) \]

Note that $\oc{d_a}$ contains $f_a$ at depth 1, bound by $\lambda\oc f_a$. Let
$w = w_1 \ldots w_n \in \{a,b\}^*$. Then $\overline{w}\; \oc{d_a}\; \oc{d_b}
=_\beta \oc(\lambda x.\; d_{w_1}\; (\ldots (d_{w_n}\; x) \ldots))$. Passing this
as argument to $(\lambda\oc h.\; \ldots)$ unpacks this exponential: $h = \lambda
x.\; d_{w_1}\; (\ldots (d_{w_n}\; x) \ldots)$. Next, $h$ is applied to a
representation of the output function $F(q) = XY$; so what we obtain represents
$\delta^O(w_1,\ldots,\delta^O(w_n,F))$. Indeed,
\[ (d_{w_1}\circ \ldots \circ d_{w_n})\; (\lambda r_X.\; \lambda r_Y.\; r_X\circ r_Y) =_\beta \lambda r_X.\; \lambda r_Y.\; r_X \circ f_{w_1}
  \ldots \circ f_{w_n} \circ f_{w_n} \circ \ldots \circ f_{w_1} \circ r_Y \]
where $g_1 \circ \ldots \circ g_m$ is an abbreviation for $\lambda x.\; g_1\;
(\ldots (g_m\; x)\ldots)$. By applying the above to two identity functions, we
erase $r_X$ and $r_Y$; thus, in the end, we get $t\;\overline{w} =_\beta
\overline{w \cdot \mathtt{reverse}(w)}$.

In general, since streaming string transducers can compute all regular
functions:
\begin{theorem}[proved in \cref{sec:appendix-ealam-regular}]
  \label{thm:ealam-string}
  Any regular function $\Gamma^* \to \Sigma^*$ can be computed by an $\EAlam$
  term of type $\Str_\Gamma \multimap \Str_\Sigma$ or $\oc\Str_\Gamma \multimap
  \oc\Str_\Sigma$.
\end{theorem}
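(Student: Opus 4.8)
The plan is to reduce to streaming string transducers and then generalize the worked example preceding the theorem. Since the functions computed by SSTs are exactly the regular functions, it suffices to produce, for an arbitrary copyless SST $(Q,q_I,R,\delta,F)$, a term of type $\Str_\Gamma[A] \multimap \Str_\Sigma[\alpha]$; instantiating the input quantifier at $A$ and generalizing the output over $\alpha$ then yields the desired term of type $\Str_\Gamma \multimap \Str_\Sigma$. Following the example, I would take $A = \Fin(|Q|) \multimap (\alpha \multimap \alpha)^{|R|} \multimap (\alpha \multimap \alpha)$, so that an inhabitant of $A$ realizes a map $G : Q \to (\Sigma \cup R)^*$: applied to a state $q$ and to register interpretations $\rho_1, \ldots, \rho_{|R|} : \alpha \multimap \alpha$, it returns the composition obtained from the word $G(q)$ by substituting each register name $r_i$ with $\rho_i$ and each output letter $a \in \Sigma$ with the ambient non-linear variable $f_a$.

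The core of the construction is a family of terms $d_c : A \multimap A$ (for $c \in \Gamma$) implementing the backward step $\delta^O(c,-)$ of \cref{def:delta-o}, together with a term $\overline{F} : A$ encoding the output function by the same idiom. Given $G : A$, the term $d_c\,G$ should realize $q \mapsto s_{c,q}^*(G(q'_{c,q}))$. The delicate point is that both the successor state $q'_{c,q}$ and the register update $s_{c,q}$ depend on the current state $q : \Fin(|Q|)$, which must be consumed \emph{exactly once}. I would resolve this by eliminating $\Fin(|Q|)$ a single time, with $|Q|$ branches, where each branch $b_k$ abstracts over \emph{both} $G$ and the register interpretations: $b_k$ applies its copy of $G$ to the constant $\overline{q'_{c,q_k}}$ and to the realizers of $s_{c,q_k}(r_1), \ldots, s_{c,q_k}(r_{|R|})$. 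By pulling $G$ and $\rho_1, \ldots, \rho_{|R|}$ out of the branches — so that they are supplied to whichever branch $q$ selects, rather than appearing free inside each — no shared linear resource is duplicated across branches. Within a branch, each register interpretation is used at most once precisely because $s_{c,q_k}$ is copyless, so the linearity of $\EAlam$ is respected; one also checks that every $f_a$ occurs only at depth $1$, as stratification demands.

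I would then assemble the main term $t = \lambda u.\,(\ldots)$ as in the example: on input $\overline{w}$, it applies $u$ to $\oc d_{a_1}, \ldots, \oc d_{a_{|\Gamma|}}$, yielding $\oc(\lambda x.\; d_{w_1}(\ldots(d_{w_n}\,x)\ldots))$; a $\lambda\oc h$ unpacks this exponential; the composite $h$ is applied to $\overline{F}$, then to $\overline{q_I}$, and finally to $|R|$ copies of $\lambda x.\,x$ as register interpretations, realizing the erasure morphism $\varphi$ of \cref{prop:delta-o}; the result is repacked under $\oc$ to land in $\Str_\Sigma[\alpha]$. Correctness is then exactly \cref{prop:delta-o}: the term computes $\varphi(G(q_I))$ for $G = \delta^O(w_1, (\ldots \delta^O(w_n, F)\ldots))$, which is the image of $w$. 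The main obstacle is precisely the reconciliation of the state-dependent case analysis with linearity described above; once the abstraction-over-branches idiom is in place, the remaining checks (typing, stratification, the $\beta$-equalities) are routine. The second type is obtained for free by functoriality of the exponential: from $t : \Str_\Gamma \multimap \Str_\Sigma$, the term $\lambda\oc v.\; \oc(t\,v)$ has type $\oc\Str_\Gamma \multimap \oc\Str_\Sigma$.
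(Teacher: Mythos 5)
Your proposal is correct and follows essentially the same route as the paper's proof in \cref{sec:appendix-ealam-regular}: the same relative type $A = \Fin(|Q|) \multimap (\alpha \multimap \alpha)^{|R|} \multimap (\alpha \multimap \alpha)$ of copyless output functions, the same letter-wise terms realizing $\delta^O(c,-)$ of \cref{def:delta-o}, the same final assembly (unpacking with $\lambda\oc h$, applying to the encoded $F$, the initial state and $|R|$ identities, invoking \cref{prop:delta-o}), and the same promotion $\lambda\oc v.\; \oc(t\;v)$ for the second type. One remark is worth making: on the single point you flag as delicate, your write-up is more careful than the paper's own. The appendix takes $d_i = \lambda z.\; \lambda y.\; y\; t_{i,1}\; \ldots\; t_{i,|Q|}$ with the linear variable $z$ (the current output function) occurring free in \emph{every} branch $t_{i,q}$; read literally against the application rule, which splits the linear context disjointly between function and argument, this term is ill-typed as soon as $|Q| > 1$ (and the stated type $t_{i,q} : A$ should be $(\alpha \multimap \alpha)^{|R|} \multimap (\alpha \multimap \alpha)$). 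Your idiom --- instantiating $\Fin(|Q|)$ at an arrow type so that each branch abstracts over $G$ and the register realizers, with the selected branch alone then applied to them --- is exactly the repair: it is the standard way of case-splitting on Church-encoded data while sharing a linear resource across branches, and it is what the single-state example in \cref{sec:ealam-string} silently avoids needing. So your proof is sound, and on this step it is tighter than the text it was compared against.
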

The last part is because any term of type $A \multimap B$ in $\EAlam$ can be
type-cast into a term of type $\oc{A} \multimap
\oc{B}$~\cite[Proposition~28]{Benedetti}.

We have done the hard part in proving \cref{thm:ealam}. There remains only:
\begin{proposition}
  \label{prop:ealam-cbs}
  The expressible functions for the type $\oc\Str_\Gamma \multimap
  \oc\Str_\Sigma$ in $\EAlam$ are closed under composition by substitution.
\end{proposition}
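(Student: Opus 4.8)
The plan is to mimic, at the level of whole strings, the monoid-fold that already underlies \cref{lem:stlam-morphism} and the SST encoding of \S\ref{sec:ealam-string}. Write $I$ for the intermediate alphabet, so that composition by substitution (\cref{def:cbs}) combines some $f : \Gamma^* \to I^*$ with a family $(g_i : \Gamma^* \to \Sigma^*)_{i \in I}$. Since closure means that the components are themselves assumed to lie in the class, we are handed $\EAlam$ terms $t_f : \oc\Str_\Gamma \multimap \oc\Str_I$ and $t_{g_i} : \oc\Str_\Gamma \multimap \oc\Str_\Sigma$ computing $f$ and the $g_i$. The crucial observation is that the single input $w$ must be fed to $f$ and to each $g_i$ -- that is, $|I|+1$ times -- so \emph{duplicability of the input is essential}; this is exactly what the `$\oc$' in the type $\oc\Str_\Gamma \multimap \oc\Str_\Sigma$ provides, and it explains why composition by substitution is available for this type but not, a priori, for $\Str_\Gamma \multimap \Str_\Sigma$.

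First I would apply $t_f$ and each $t_{g_i}$ to $\oc$-copies of the duplicable input to obtain $\oc\overline{f(w)} : \oc\Str_I$ and $\oc\overline{g_i(w)} : \oc\Str_\Sigma$. It then remains to turn the string $f(w) = i_1 \ldots i_n \in I^*$ into the concatenation $g_{i_1}(w) \ldots g_{i_n}(w)$, and I would do this by folding $\overline{f(w)}$ over the monoid of endofunctions of the output carrier. Concretely, I would instantiate the polymorphic type of $\overline{f(w)} : \Str_I$ at the same type variable $\alpha$ that indexes the output type $\Str_\Sigma[\alpha]$, and supply as the $|I|$ duplicable letter-actions of type $\oc(\alpha \multimap \alpha)$ the endofunctions $\overline{g_i(w)}\,\oc h_1 \ldots \oc h_{|\Sigma|}$ obtained by evaluating each $\overline{g_i(w)} : \Str_\Sigma$ at $\alpha$ on the output-alphabet functions $h_1, \ldots, h_{|\Sigma|}$. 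Because composition of such endofunctions realises concatenation of the corresponding strings, the fold returns precisely $\oc$ of the endofunction of $g_{i_1}(w) \ldots g_{i_n}(w)$, i.e.\ the body of $\overline{\CbS(f,(g_i)_{i \in I})(w)}$ at carrier $\alpha$. Abstracting over $\alpha$ and the $h_j$ and re-boxing yields a term of type $\oc\Str_\Gamma \multimap \oc\Str_\Sigma$, whose correctness follows by $\beta$-reduction from the assumed equations for $t_f$ and the $t_{g_i}$.

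The verification that this term is well-typed in $\EAlam$ is where the only real care is needed. Linearity holds because $w$ is duplicated only through `$\oc$', the $|I|$ letter-actions fed to the $\Str_I$-fold are $\oc$-boxed (so the fold may reuse each as often as the corresponding letter occurs in $f(w)$), and each $h_j$ is $\oc$-bound (so it may be shared by all $|I|$ endofunctions). Stratification is the genuinely delicate bookkeeping: since $\alpha$ is exponential-free, the endofunctions must be manipulated at type $\oc(\alpha \multimap \alpha)$ rather than $\alpha \multimap \alpha$ (cf.\ the discussion preceding \cref{thm:ealam-string}), and one must thread $\overline{f(w)}$, the $\overline{g_i(w)}$ and the $h_j$ into the output box at the correct depths, introducing each free variable by a $\lambda\oc$ so that its occurrences sit one `$\oc$' below its binder. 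I expect this depth accounting -- together with getting the order of composition right, so that the output is $g_{i_1}(w) \ldots g_{i_n}(w)$ and not its reverse -- to be the main, though conceptually routine, obstacle: it is the same phenomenon already handled in the SST encoding, lifted from substituting single letters to substituting entire strings $g_i(w)$.
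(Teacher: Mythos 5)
Your proposal is correct and is essentially identical to the paper's own proof (\cref{sec:appendix-ealam-polyseq}): the paper's term applies $t$ and the $u_i$ to copies of the $\oc$-boxed input, unbinds the results with $\lambda\oc$-abstractions, and builds the output by folding $\overline{f(w)}$ at carrier $\alpha$ over the boxed endofunctions $\overline{g_i(w)}\;\oc f_1 \ldots \oc f_{|\Sigma|}$, exactly as you describe. Your only extra worry -- that the concatenation might come out reversed -- is unfounded (composition of the Church endofunctions realises concatenation in the correct left-to-right order, since no left-to-right state evolution is involved here, unlike in the SST encoding), but it is harmless.
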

\begin{proof}
  See \cref{sec:appendix-ealam-polyseq}.
\end{proof}

\begin{theorem}
  Any $\EAlam$ term of type $\Str_\Gamma \multimap \Str_\Sigma$ (resp.\
  $\oc\Str_\Gamma \multimap \oc\Str_\Sigma$) defines a function computable in
  linear (resp.\ polynomial) time.
\end{theorem}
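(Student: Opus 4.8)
The plan is to bound the number of $\beta$-reduction steps, together with the size of all intermediate terms, needed to normalise $t\,\overline{w}$, where $t$ is a fixed $\EAlam$ term of one of the two types and $\overline{w}$ is the Church encoding of an input word $w$ of length $n = |w|$. Since $t$ is fixed, its depth $d$ (the maximal nesting of the modality $\oc$) is a constant independent of the input, and so is its size. First I would recall the stratified, depth-by-depth normalisation strategy for $\EAlam$ inherited from elementary linear logic: thanks to the stratification invariant, one may reduce all redexes occurring at depth $0$, then all those at depth $1$, and so on, in $d+1$ successive rounds. The relevant bookkeeping quantity is $W_i$, the total number of symbols occurring at depth exactly $i$; the whole argument amounts to controlling how $W_i$ evolves round by round.

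The core of the proof is the following size-propagation invariant: performing the round that eliminates all redexes at depth $i$ leaves $W_j$ unchanged for $j < i$, and multiplies $W_{i+1}$ by at most the number of times the $\oc$-boxes sitting at depth $i$ are duplicated during that round. In full generality this duplication count can itself be as large as $W_i$, which is exactly what produces the tower-of-exponentials bound characteristic of elementary logic. The point is that the two types under consideration forbid such uncontrolled copying. For $\Str_\Gamma \multimap \Str_\Sigma$ -- equivalently $\Str_\Gamma[A] \multimap \Str_\Sigma[\alpha]$ by linearity -- the computation reduces, as in the worked example of \S\ref{sec:ealam-string}, to forming an $n$-fold composition $d_{w_1} \circ \ldots \circ d_{w_n}$ of fixed-size terms inside a single box and then applying it once to a fixed starting value. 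Linearity guarantees that each $d_{w_i}$ is consumed exactly once, so no subterm is ever duplicated along this computation: every $W_i$ stays $O(n)$, and the total number of reduction steps is therefore $O(n)$, i.e.\ linear time.

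For the type $\oc\Str_\Gamma \multimap \oc\Str_\Sigma$ the input now lives one box deeper, and this extra exponential level is precisely what allows the boxed input string to be copied. I would show that, because the only source of contraction is the fixed term $t$ acting on data driven by the length-$n$ input, the number of copies produced at each depth is bounded not by $W_i$ but by a polynomial in $n$ whose degree depends only on $t$; iterating the size-propagation invariant over the constantly many depth levels then yields a polynomial bound on $\sum_i W_i$, and hence on the reduction length. This is consistent with the fact that such terms are closed under composition by substitution (\cref{prop:ealam-cbs}) and may genuinely reach $\Omega(n^k)$ growth, so a merely linear bound would be impossible here. To make the duplication count rigorous I expect to reuse the complexity lemmas of~\cite{Benedetti} underlying their P-completeness result, together with the \emph{truncation} operation for $\EAlam$ from~\cite{ealreg}, which lets one discard the exponential-free part of a type and isolate the boxes responsible for copying.

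The main obstacle is exactly this last step: turning the informal claim \enquote{the fan-out per depth level is polynomial, not exponential} into a precise invariant. One must show that the stratification discipline of $\EAlam$, applied to the specific shape of $\Str$, caps the number of box duplications at each round by a fixed polynomial in $n$ rather than by the current size $W_i$; everything else -- the depth-by-depth scheduling and the per-round cost accounting -- is routine once this bound is in place.
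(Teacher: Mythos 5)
Your overall strategy -- stratified depth-by-depth reduction, with complexity lemmas imported from~\cite{Benedetti} and the truncation technique of~\cite{ealreg} -- is indeed the method underlying the paper's proof sketch. But your execution has two genuine gaps. The more serious one is in the linear case: you argue that the computation of any term $t : \Str_\Gamma \multimap \Str_\Sigma$ ``reduces, as in the worked example of \cref{sec:ealam-string}, to forming an $n$-fold composition $d_{w_1} \circ \ldots \circ d_{w_n}$ of fixed-size terms inside a single box and then applying it once.'' That is the shape of the terms \emph{constructed} in the paper's encoding of SSTs; the theorem quantifies over \emph{arbitrary} well-typed terms, which need not have this shape, so the argument is circular. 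Moreover the invariant you extract from it -- ``no subterm is ever duplicated'' -- is false even for the encoded SSTs: reducing $\overline{w}\;\oc{d_a}\;\oc{d_b}$ substitutes $d_a$ for every occurrence of $f_a$ in the body of $\overline{w}$, i.e.\ copies it up to $n$ times; and a general term may wrap input-dependent material in a box and consume it with a $\lambda\oc$-abstraction whose variable occurs several times. What actually forces the linear bound is not ``linearity'' of the ambient calculus but the type information: the output type is $\Str_\Sigma[\alpha]$ with $\alpha$ a free, exponential-free type variable, so after truncating the parts of the term that are too deep to reach the output, the phase of reduction that depends on $n$ involves only exponential-free types and hence admits no $\oc$-redexes at all. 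Without an argument of this kind, the naive size-propagation bound you set up gives only a polynomial (via cascading duplication with multiplicities bounded by $n$), not a linear one.

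For the polynomial case you correctly identify the crux -- bounding the per-depth fan-out by a polynomial in $n$ rather than by the current size -- but you leave it open, and this is exactly the hard part: the generic ELL invariant cannot deliver it, since a single round at a depth of size $\Theta(n)$ can in general duplicate deeper material exponentially. What you miss is that no new invariant needs to be proved at all: $\EAlam$ is a subsystem of $\muEAlam$, and~\cite{ealreg} (building on~\cite{Benedetti}) already establishes that in $\muEAlam$ the type $\oc\Str_\Gamma \multimap \oc\Str_\Sigma$ captures \emph{exactly} the polynomial-time functions. The upper bound is therefore inherited wholesale by restriction to the subsystem -- this is the paper's entire argument for that half of the statement, and it is also the cleanest way to discharge the step you flag as your main obstacle.
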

\begin{proof}[Proof sketch]
  Let us start with $\oc\Str_\Gamma \multimap \oc\Str_\Sigma$. We proved
  in~\cite{ealreg} (building on work in~\cite{Benedetti}) that, in a larger
  system called $\muEAlam$, this type corresponds exactly to polynomial time
  functions. In particular, when we restrict to the subsystem $\EAlam$, the
  polynomial time upper bounds still hold. For $\Str_\Gamma \multimap
  \Str_\Sigma$, we can routinely adapt the arguments in~\cite{ealreg,Benedetti}
  to obtain a linear time bound for $\muEAlam$. The algorithm is to perform
  $\beta$-reduction with a particular \enquote{stratified} reduction strategy.
\end{proof}

\subsection{Bottom-up ranked tree transducers and
  the two linear conjunctions}

The $\EAlam$ type of Church-encoded binary trees with node labels in $\Sigma =
\{a_1,\ldots,a_{|\Sigma|}\}$ is
\[ \BT_\Sigma = \forall \alpha.\; \BT_\Sigma[\alpha] \qquad\text{where}\
  \BT_\Sigma[\alpha] = (\oc(\alpha \multimap \alpha \multimap
  \alpha))^{|\Sigma|} \multimap \oc\alpha \multimap \oc\alpha \]
To each $T \in \BinTree(\Sigma)$ we associate $\overline{T} : \BT_\Sigma$ in the
obvious way.
\begin{theorem}
  \label{thm:ealam-tree}
  Any regular tree function can be expressed by some $t : \BT_\Gamma \multimap
  \BT_\Sigma$ in $\EAlam$.
\end{theorem}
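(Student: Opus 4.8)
The plan is to encode regular tree functions, which by the cited characterization are computed by bottom-up ranked tree transducers (a restricted form of register tree transducers with a linearity condition), as $\EAlam$ terms of type $\BT_\Gamma \multimap \BT_\Sigma$. The overall strategy mirrors the string case of \cref{thm:ealam-string}: a register tree transducer performs a single bottom-up traversal, which is precisely the computation embodied by the Church encoding $\BT_\Gamma[\alpha]$, so I would iterate over a linear type $A$ of configurations and instantiate the quantified variable $\alpha$ of the input exactly once (licensed by linearity). As in the string setting, rather than iterating over output functions in the forward direction I would propagate them backwards, so that the type $A$ need only package, for each pair of states, the register-recombination data as \emph{linear} functions of $\alpha$.

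First I would fix the type $A$ over which to iterate. Following the string construction, the configuration-transformer type should be purely linear in the register arguments: something of the shape $\Fin(|Q|) \multimap (\text{register data})$, where binary-tree registers contribute arguments of type $\alpha \multimap \alpha \multimap \alpha$ (or rather the appropriate tree-context shape) and one-hole registers contribute arguments of type $\alpha \multimap \alpha$, reflecting that a $\dBinTree$ with its hole filled is a function on trees. I would then, for each input label $c \in \Gamma$, build an $\EAlam$ term $d_c : A \multimap A \multimap A$ implementing the binary transition $\delta$ on configuration-transformers (taking the left and right subtree data, combining them via the expressions $\Compile(E)$ analogous to \cref{lem:stlam-rtt}), using the node-combinator variables $f_i : \oc(\alpha \multimap \alpha \multimap \alpha)$ non-linearly. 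The outer term $t = \lambda u.\; \lambda\oc f_1.\ldots \lambda\oc f_{|\Sigma|}.\ldots$ would unpack the exponential produced by $u\;\oc{d_{1}}\;\ldots\;\oc{d_{|\Gamma|}}$, apply the resulting fold to the initial-leaf configuration, select the initial state $q_I$, and finally erase the register arguments by supplying identities, exactly as in the worked string example.

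The main obstacle, I expect, is the linearity condition of $\EAlam$ matched against the linearity condition of bottom-up ranked tree transducers. Unlike SSTs, whose copylessness is \emph{purely multiplicative}, the tree case uses a more subtle \enquote{conflict relation} over registers, which the excerpt itself flags as closely related to the \emph{additive} conjunction of linear logic. The difficulty is that two registers may both be derived from the same subtree's data as long as they are never simultaneously used in the final output — an additive, not multiplicative, sharing discipline. Plain $\EAlam$ with only the $\multimap$-arrow cannot duplicate a linear resource, so encoding this additive sharing is the crux: I would need to exploit the additive product (the \enquote{with} connective $\&$, definable via pairing $A \multimap (A \multimap B) \multimap B$ or through second-order encodings) so that the conflict relation becomes a legal pattern of additive projections, ensuring that in each subterm $\lambda r.\; \ldots$ the bound register variable $r$ genuinely occurs at most once along every actual computation path. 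Verifying that the conflict-freeness of the transducer translates exactly into $\EAlam$-typability — in particular that stratification is respected, with all register data living at the base exponential-free depth while the duplicable combinators sit under a single $\oc$ — is where the real work lies; the remaining bookkeeping reduces, \emph{mutatis mutandis}, to the string case already carried out.
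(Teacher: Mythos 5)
Your high-level skeleton agrees with the paper's: compile a BRTT into a term of type $\BT_\Gamma[A] \multimap \BT_\Sigma[\alpha]$, fold the Church-encoded input with per-letter combinators $d_c : A \multimap A \multimap A$, and invoke the additive conjunction for the conflict relation. But the middle of your argument goes wrong. You first (correctly) say you will iterate over a type of \emph{configurations}, and then immediately import the \enquote{propagate output functions backwards} trick from the string case. That trick exists only to repair a mismatch that is specific to strings -- register transducers read left-to-right while Church-encoded strings fold right-to-left -- and the paper points out that for trees there is no mismatch: a BRTT and the Church encoding of a binary tree both proceed bottom-up, so one iterates directly over configurations. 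Worse, the backwards reading does not even type-check as stated for trees: the transition at a node consumes the configurations of \emph{two} subtrees, so an \enquote{output function} propagated to the left child would have to mention the right child's registers; making this work requires a full CPS over configurations (this is what the $\STlam$ proof does, for different reasons), which you never set up. The symptoms are visible in your details: tree-valued registers must be represented by $\alpha$ (a finished tree relative to the non-linear variables $f_i$ and $x$), not by $\alpha \multimap \alpha \multimap \alpha$; and \enquote{select $q_I$ and supply identities at the end} is the string recipe -- in the forward tree encoding, the initial state and initial register values ($x$ for $R$, identities for $R'$) are supplied at the \emph{leaves}, while at the root one cases on the \emph{final} state to apply the output expression $F(q)$. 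The paper's type for the copyless case is $A = \Fin(|Q|) \otimes \alpha^{\otimes|R|} \otimes (\alpha \multimap \alpha)^{\otimes|R'|}$.

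The second, and more serious, gap is that the heart of the theorem -- turning the conflict relation $\incoh$ into typable $\EAlam$ structure -- is exactly what you defer as \enquote{where the real work lies}; naming the connective $\with$ is a restatement of the goal, not a construction. (Also, your parenthetical $A \multimap (A \multimap B) \multimap B$ is not an encoding of pairs; the paper uses $A_1 \with \ldots \with A_m := \forall \gamma.\; (\forall \beta.\; \beta \multimap (\beta \multimap A_1) \multimap \ldots \multimap (\beta \multimap A_m) \multimap \gamma)\multimap \gamma$.) The missing construction is: take
\[ A = \Fin(|Q|) \otimes \bigwith_{P} \left( \alpha^{\otimes|P \cap R|} \otimes (\alpha \multimap \alpha)^{\otimes|P \cap R'|} \right), \]
where $P$ ranges over \emph{non-conflicting} subsets of $R \cup R'$, i.e.\ the additive conjunction is indexed by the sets of registers that may legally coexist, and each component stores their contents multiplicatively. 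For this to support the transition maps one needs the combinatorial lemma that the paper isolates: for every non-conflicting $P$, the variables occurring in the update expressions $(\varepsilon \cup \varepsilon')(y)$ for $y \in P$ are pairwise disjoint and their union $S$ is itself non-conflicting in $R_{\ttl\ttr} \cup R'_{\ttl\ttr}$; hence the new contents of $P$ are obtained by a genuinely copyless (so $\multimap$-typable) map from the single additive component indexed by $S$ of the paired subtree configurations. Similarly, at the root, $F(q)$ mentions only a non-conflicting set of registers, so one additive projection suffices to extract the output. Without this lemma and this indexing, there is no argument that the $\EAlam$ linearity constraint is actually met.
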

\begin{proof}[Proof sketch]
  We give only the main ideas here; a more detailed proof is provided in
  \cref{sec:appendix-ealam-tree}. As before, this amounts to translating any
  bottom-up ranked tree transducer (BRTT) to some term $t : \BT_\Gamma[A]
  \multimap \BT_\Sigma[\alpha]$, where $A$ may contain the type variable
  $\alpha$. Here, the natural direction of processing for a Church-encoded
  binary tree is bottom-up, and this coincides with the way a BRTT works, unlike
  the case of strings in the previous subsection.

  First, let us consider the case of a register tree transducer
  $(Q,q_I,R,R',F,\delta)$ enjoying a linearity condition directly analogous to
  streaming string transducers (SSTs). Then we take
  \[ A = \Fin(|Q|) \otimes \alpha^{\otimes|R|} \otimes (\alpha \multimap
    \alpha)^{\otimes|R'|} \qquad\text{representing \emph{configurations} of the
      BRTT}\ \]
  the use of $\otimes$ denoting the second-order encoding of the
  \emph{multiplicative conjunction}
  \[ A_1 \otimes \ldots \otimes A_m = \forall \beta.\; (A_1 \multimap \ldots
    \multimap A_m \multimap \beta) \multimap \beta \qquad B^{\otimes m} = B
    \otimes \ldots \otimes B \]

  An element of $\BinTree(\Sigma)$ (resp.\ $\dBinTree(\Sigma)$) contained in a
  register is therefore represented as a term of type $\alpha$ (resp.\ $\alpha
  \multimap \alpha$), using non-linearly the free variables $f_i : \alpha
  \multimap \alpha \multimap \alpha$ ($i \in \{1,\ldots,|\Sigma|\}$) and $x :
  \alpha$. To compare with the encoding of SSTs, a string which supports
  concatenation \emph{on both sides} can be seen as a one-hole unary tree, hence
  its type $\alpha \multimap \alpha$. (The uniqueness of the hole in
  $\dBinTree(\Sigma)$ turns out to be a linearity condition as well!) It is then
  possible to encode the transitions and output function of the BRTT.

  In general, a BRTT is a register transducer equipped with a reflexive and
  symmetric \emph{conflict relation} $\incoh$ over $R \cup R'$, and it satisfies
  a relaxed linearity condition formulated in terms of $\incoh$.
  Following~\cite{BRTT}, we say that $P \subseteq R \cup R'$ is
  \emph{non-conflicting} if $\forall x,y \in P,\, x = y \lor x \not\incoh y$. We
  take $A$ to be the following, \emph{where $P$ ranges over non-conflicting
    subsets}:
  \[ A = \Fin(|Q|) \otimes \bigwith_P \left( \alpha^{\otimes|P \cap R|} \otimes
      (\alpha \multimap \alpha)^{\otimes|P \cap R'|} \right) \]
  using the second-order encoding of the \emph{additive conjunction}
  \[ A_1 \with \ldots \with A_m := \forall \gamma.\; (\forall \beta.\; \beta
    \multimap (\beta \multimap A_1) \multimap \ldots \multimap (\beta \multimap
    A_m) \multimap \gamma)\multimap \gamma \]
  Further explanations of this choice and the role of $\incoh$ are given in
  \cref{sec:appendix-ealam-tree}.
\end{proof}

\section{Conclusion}

We exhibited some relationships between the functions between Church-encoded
strings (or trees) in two typed $\lambda$-calculi and those computed by variants
of finite-state transducers. On the automata-theoretic side, we showed that the
closure under composition of HDT0L transductions is a superclass of many
pre-existing transduction classes. By showing that this large transduction class
is included in the $\lambda$-definable string functions, we advanced our
understanding of the latter. As for $\EAlam$, the results here are still
preliminary; hopefully, the sequel to this paper should prove the converse
inclusions to Theorems~\ref{thm:ealam-string} and~\ref{thm:ealam-tree}, giving a
characterization of regular (tree) functions quite different from the already
existing ones.

Aside from that, there are many imaginable perspectives around the theme
\enquote{implicit complexity for automata}. For instance, is it possible to
characterize star-free languages in some $\lambda$-calculus, analogously to
their algebraic characterization by aperiodic monoids?

\bibliography{workshops/bi}

\begin{thebibliography}{10}

\bibitem{BRTT}
Rajeev Alur and Loris D'Antoni.
\newblock Streaming {Tree} {Transducers}.
\newblock {\em Journal of the ACM}, 64(5):1--55, August 2017.
\newblock \href {http://dx.doi.org/10.1145/3092842}
  {\path{doi:10.1145/3092842}}.

\bibitem{regularcombinators}
Rajeev Alur, Adam Freilich, and Mukund Raghothaman.
\newblock Regular combinators for string transformations.
\newblock In {\em Proceedings of the {Joint} {Meeting} of the {Twenty}-{Third}
  {EACSL} {Annual} {Conference} on {Computer} {Science} {Logic} ({CSL}) and the
  {Twenty}-{Ninth} {Annual} {ACM}/{IEEE} {Symposium} on {Logic} in {Computer}
  {Science} ({LICS}) - {CSL}-{LICS} '14}, pages 1--10, Vienna, Austria, 2014.
  ACM Press.
\newblock \href {http://dx.doi.org/10.1145/2603088.2603151}
  {\path{doi:10.1145/2603088.2603151}}.

\bibitem{SST}
Rajeev Alur and Pavol Černý.
\newblock Expressiveness of streaming string transducers.
\newblock In {\em {IARCS} {Annual} {Conference} on {Foundations} of {Software}
  {Technology} and {Theoretical} {Computer} {Science} ({FSTTCS} 2010)}, pages
  1--12, 2010.
\newblock \href {http://dx.doi.org/10.4230/LIPIcs.FSTTCS.2010.1}
  {\path{doi:10.4230/LIPIcs.FSTTCS.2010.1}}.

\bibitem{Benedetti}
Patrick Baillot, Erika De~Benedetti, and Simona Ronchi Della~Rocca.
\newblock Characterizing polynomial and exponential complexity classes in
  elementary lambda-calculus.
\newblock {\em Information and Computation}, 261:55--77, August 2018.
\newblock \href {http://dx.doi.org/10.1016/j.ic.2018.05.005}
  {\path{doi:10.1016/j.ic.2018.05.005}}.

\bibitem{polyregular}
Mikołaj Bojańczyk.
\newblock Polyregular {Functions}.
\newblock {\em CoRR}, abs/1810.08760, October 2018.
\newblock \href {http://arxiv.org/abs/1810.08760} {\path{arXiv:1810.08760}}.

\bibitem{Daviaud}
Mikołaj Bojańczyk, Laure Daviaud, and Shankara~Narayanan Krishna.
\newblock Regular and {First}-{Order} {List} {Functions}.
\newblock In {\em Proceedings of the 33rd {Annual} {ACM}/{IEEE} {Symposium} on
  {Logic} in {Computer} {Science} - {LICS} '18}, pages 125--134, Oxford, United
  Kingdom, 2018. ACM Press.
\newblock \href {http://dx.doi.org/10.1145/3209108.3209163}
  {\path{doi:10.1145/3209108.3209163}}.

\bibitem{polyregularMSO}
Mikołaj Bojańczyk, Sandra Kiefer, and Nathan Lhote.
\newblock String-to-{String} {Interpretations} with {Polynomial}-{Size}
  {Output}.
\newblock {\em CoRR}, abs/1905.13190, May 2019.
\newblock \href {http://arxiv.org/abs/1905.13190} {\path{arXiv:1905.13190}}.

\bibitem{RTE}
Vrunda Dave, Paul Gastin, and Shankara~Narayanan Krishna.
\newblock Regular {Transducer} {Expressions} for {Regular} {Transformations}.
\newblock In {\em Proceedings of the 33rd {Annual} {ACM}/{IEEE} {Symposium} on
  {Logic} in {Computer} {Science} - {LICS} '18}, pages 315--324, Oxford, United
  Kingdom, 2018. ACM Press.
\newblock \href {http://dx.doi.org/10.1145/3209108.3209182}
  {\path{doi:10.1145/3209108.3209182}}.

\bibitem{EngelfrietHoogeboom}
Joost Engelfriet and Hendrik~Jan Hoogeboom.
\newblock {MSO} definable string transductions and two-way finite-state
  transducers.
\newblock {\em ACM Transactions on Computational Logic}, 2(2):216--254, April
  2001.
\newblock \href {http://dx.doi.org/10.1145/371316.371512}
  {\path{doi:10.1145/371316.371512}}.

\bibitem{siglog}
Emmanuel Filiot and Pierre-Alain Reynier.
\newblock Transducers, {Logic} and {Algebra} for {Functions} of {Finite}
  {Words}.
\newblock {\em ACM SIGLOG News}, 3(3):4--19, August 2016.
\newblock \href {http://dx.doi.org/10.1145/2984450.2984453}
  {\path{doi:10.1145/2984450.2984453}}.

\bibitem{FiliotReynier}
Emmanuel Filiot and Pierre-Alain Reynier.
\newblock Copyful {Streaming} {String} {Transducers}.
\newblock In Matthew Hague and Igor Potapov, editors, {\em Reachability
  {Problems}}, volume 10506, pages 75--86. Cham, 2017.
\newblock \href {http://dx.doi.org/10.1007/978-3-319-67089-8_6}
  {\path{doi:10.1007/978-3-319-67089-8_6}}.

\bibitem{FortuneLeivant}
Steven Fortune, Daniel Leivant, and Michael O'Donnell.
\newblock The {Expressiveness} of {Simple} and {Second}-{Order} {Type}
  {Structures}.
\newblock {\em Journal of the ACM}, 30(1):151--185, January 1983.
\newblock \href {http://dx.doi.org/10.1145/322358.322370}
  {\path{doi:10.1145/322358.322370}}.

\bibitem{Fratani}
Séverine Fratani and Géraud Sénizergues.
\newblock Iterated pushdown automata and sequences of rational numbers.
\newblock {\em Annals of Pure and Applied Logic}, 141(3):363--411, September
  2006.
\newblock \href {http://dx.doi.org/10.1016/j.apal.2005.12.004}
  {\path{doi:10.1016/j.apal.2005.12.004}}.

\bibitem{girardLL}
Jean-Yves Girard.
\newblock Linear logic.
\newblock {\em Theoretical Computer Science}, 50(1):1--101, January 1987.
\newblock \href {http://dx.doi.org/10.1016/0304-3975(87)90045-4}
  {\path{doi:10.1016/0304-3975(87)90045-4}}.

\bibitem{girardELL}
Jean-Yves Girard.
\newblock Light {Linear} {Logic}.
\newblock {\em Information and Computation}, 143(2):175--204, June 1998.
\newblock \href {http://dx.doi.org/10.1006/inco.1998.2700}
  {\path{doi:10.1006/inco.1998.2700}}.

\bibitem{grellois}
Charles Grellois.
\newblock {\em Semantics of linear logic and higher-order model-checking}.
\newblock PhD thesis, Université Denis Diderot Paris 7, April 2016.
\newblock URL: \url{https://tel.archives-ouvertes.fr/tel-01311150/}.

\bibitem{GrelloisMellies}
Charles Grellois and Paul{-}Andr{é} Melli{è}s.
\newblock Finitary semantics of linear logic and higher-order model-checking.
\newblock In {\em Mathematical Foundations of Computer Science 2015 - 40th
  International Symposium, {MFCS} 2015}, pages 256--268, 2015.
\newblock \href {http://dx.doi.org/10.1007/978-3-662-48057-1_20}
  {\path{doi:10.1007/978-3-662-48057-1_20}}.

\bibitem{HillebrandKanellakis}
Gerd~G. Hillebrand and Paris~C. Kanellakis.
\newblock On the {Expressive} {Power} of {Simply} {Typed} and
  {Let}-{Polymorphic} {Lambda} {Calculi}.
\newblock In {\em Proceedings of the 11th {Annual} {IEEE} {Symposium} on
  {Logic} in {Computer} {Science}}, pages 253--263. IEEE Computer Society,
  1996.
\newblock \href {http://dx.doi.org/10.1109/LICS.1996.561337}
  {\path{doi:10.1109/LICS.1996.561337}}.

\bibitem{Joly}
Thierry Joly.
\newblock Constant time parallel computations in $\lambda$-calculus.
\newblock {\em Theoretical Computer Science}, 266(1):975--985, September 2001.
\newblock \href {http://dx.doi.org/10.1016/S0304-3975(00)00380-7}
  {\path{doi:10.1016/S0304-3975(00)00380-7}}.

\bibitem{LeivantFA}
Daniel Leivant.
\newblock Functions over free algebras definable in the simply typed lambda
  calculus.
\newblock {\em Theoretical Computer Science}, 121(1):309--321, December 1993.
\newblock \href {http://dx.doi.org/10.1016/0304-3975(93)90092-8}
  {\path{doi:10.1016/0304-3975(93)90092-8}}.

\bibitem{Lindenmayer}
Aristid Lindenmayer.
\newblock Mathematical models for cellular interactions in development {II}.
  {Simple} and branching filaments with two-sided inputs.
\newblock {\em Journal of Theoretical Biology}, 18(3):300--315, March 1968.
\newblock \href {http://dx.doi.org/10.1016/0022-5193(68)90080-5}
  {\path{doi:10.1016/0022-5193(68)90080-5}}.

\bibitem{HOParity}
Paul-André Melliès.
\newblock Higher-order parity automata.
\newblock In {\em 2017 32nd {Annual} {ACM}/{IEEE} {Symposium} on {Logic} in
  {Computer} {Science} ({LICS})}, pages 1--12, Reykjavik, Iceland, June 2017.
  IEEE.
\newblock \href {http://dx.doi.org/10.1109/LICS.2017.8005077}
  {\path{doi:10.1109/LICS.2017.8005077}}.

\bibitem{MuschollPuppis}
Anca Muscholl and Gabriele Puppis.
\newblock The {Many} {Facets} of {String} {Transducers}.
\newblock In Rolf Niedermeier and Christophe Paul, editors, {\em 36th
  {International} {Symposium} on {Theoretical} {Aspects} of {Computer}
  {Science} ({STACS} 2019)}, volume 126 of {\em Leibniz {International}
  {Proceedings} in {Informatics} ({LIPIcs})}, pages 2:1--2:21, Dagstuhl,
  Germany, 2019. Schloss Dagstuhl–Leibniz-Zentrum fuer Informatik.
\newblock \href {http://dx.doi.org/10.4230/LIPIcs.STACS.2019.2}
  {\path{doi:10.4230/LIPIcs.STACS.2019.2}}.

\bibitem{ealreg}
Lê Thành D\~ung Nguy\~{ê}n.
\newblock On the elementary affine $\lambda$-calculus with and without type
  fixpoints.
\newblock Submitted, 2019.
\newblock URL: \url{https://hal.archives-ouvertes.fr/hal-02153709}.

\bibitem{LMSO}
Pierre Pradic and Colin Riba.
\newblock {LMSO}: {A} {Curry}-{Howard} {Approach} to {Church}'s {Synthesis} via
  {Linear} {Logic}.
\newblock In {\em Proceedings of the 33rd {Annual} {ACM}/{IEEE} {Symposium} on
  {Logic} in {Computer} {Science}}, {LICS} '18, pages 849--858, New York, NY,
  USA, 2018. ACM.
\newblock \href {http://dx.doi.org/10.1145/3209108.3209195}
  {\path{doi:10.1145/3209108.3209195}}.

\bibitem{Schwichtenberg}
Helmut Schwichtenberg.
\newblock Definierbare {Funktionen} im $\lambda$-{Kalkül} mit {Typen}.
\newblock {\em Archiv für mathematische Logik und Grundlagenforschung},
  17(3):113--114, September 1975.
\newblock \href {http://dx.doi.org/10.1007/BF02276799}
  {\path{doi:10.1007/BF02276799}}.

\bibitem{Senizergues}
Géraud Sénizergues.
\newblock Sequences of {Level} 1, 2, 3,..., k,...
\newblock In Volker Diekert, Mikhail~V. Volkov, and Andrei Voronkov, editors,
  {\em Computer {Science} – {Theory} and {Applications}}, volume 4649, pages
  24--32. Berlin, Heidelberg, 2007.
\newblock \href {http://dx.doi.org/10.1007/978-3-540-74510-5_6}
  {\path{doi:10.1007/978-3-540-74510-5_6}}.

\bibitem{Zaionc}
Marek Zaionc.
\newblock Word operation definable in the typed $\lambda$-calculus.
\newblock {\em Theoretical Computer Science}, 52(1):1--14, January 1987.
\newblock \href {http://dx.doi.org/10.1016/0304-3975(87)90077-6}
  {\path{doi:10.1016/0304-3975(87)90077-6}}.

\end{thebibliography}

\appendix

\section{Register tree transducers in $\STlam$}
\label{sec:appendix-stlam-rtt}

This section is dedicated to the proof of \cref{thm:stlam-rtt}.

First, let us sketch the proof of \cref{lem:stlam-rtt}. Let $\Sigma = \{a_1,
\ldots, a_n\}$. We define $\Compile$ over $\dBinTree(\Sigma)$ by induction:
\begin{itemize}
\item $\Compile(\square) = \lambda z.\;z$,
\item $\Compile(a_i\langle T',U \rangle) = \lambda z.\; \lambda f_1.\; \ldots
  \; \lambda f_n.\; \lambda x.\;
  f_i\;(\Compile(T')\;z\;f_1\;\ldots\;f_n\;x)\;
  (\overline{U}\;f_1\;\ldots\;f_n\;x)$,
\item $\Compile(a_i\langle U, T' \rangle) = \lambda z.\; \lambda f_1.\; \ldots
  \; \lambda f_n.\; \lambda x.\; f_i\; (\overline{U}\;f_1\;\ldots\;f_n\;x)\;
  (\Compile(T')\;z\;f_1\;\ldots\;f_n\;x)$.
\end{itemize}
The compilation of expressions follows a similar scheme, with more cases. In
particular function application plays the main role in the translation of
$E[E']$ and $E'[F']$ to $\lambda$-terms.

Next, let $(Q,R,R',F,\delta)$ be a register tree transducer. We may assume
without loss of generality that $Q = \{1,\ldots,|Q|\}$. Our goal is to encode
this transducer into a simply typed $\lambda$-term of type $\BT_\Gamma[A] \to
\BT_\Sigma$. We take
\[ A = B^{|Q|} \to \BT_\Sigma \quad\text{where}\ B = \BT_\Sigma^{|R|} \to
  (\BT_\Sigma \to \BT_\Sigma)^{|R'|} \to \BT_\Sigma \] (Recall that $C^m \to D$
is merely an abbreviation for $C \to \ldots \to C \to D$.)

A down-to-earth explanation\footnote{For the reader familiar with programming
  language theory, a more conceptual explanation is that this type is isomorphic
  to $\lnot'\lnot'((1 + \ldots + 1) \times\BT_\Sigma^{|R|} \times (\BT_\Sigma
  \to \BT_\Sigma)^{|R'|})$, where $\lnot' D = D \to \BT_\Sigma$. This
  relativized double negation is used to eliminate the $\times$ and $+$ type
  constructors, which do not exist in our version of $\STlam$. As stated before,
  we are indeed using a continuation-passing-style transformation.} of these
types is as follows. The functions of $B$ take as input the contents of the
registers, and uses this to produce a result of type $\BT_\Sigma$. In
particular, recall that when the transducer has finished visiting the entire
tree, an \emph{output function} (depending on the final state) is called to
determine the result from the final contents of the registers; this function can
be expressed as a $\lambda$-term $u$ of type $B$.

As for $A$, the terms of type $A$ include (among others) all the terms of the form
(for $q \in Q$, $T_k \in \BinTree(\Sigma)$ and $T'_l \in \dBinTree(\Sigma)$)
\[ \mathtt{Conf}(q, (T_k)_{k \in R}, (T'_l)_{l \in R'}) = \lambda f_1.\;
  \ldots\; \lambda f_{|Q|}.\; f_q\; \overline{T_1} \;\ldots\; \overline{T_{|R|}}
  \; \Compile(T'_1) \;\ldots\; \Compile(T'_{|R'|}) \] Thanks to this, we can use
$A$ to represent $Q \times \BinTree(\Sigma)^R \times \dBinTree(\Sigma)^{R'}$,
that is, the set of \emph{configurations} of the register tree transducer
(assuming that we are in the middle of a computation whose final result will be
of type $\BT_\Sigma$). When, at some point, the transducer is at state $q \in Q
= \{1,\ldots,|Q|\}$, and its registers contain $(T_k)_{k \in R}$ and $(T'_l)_{l
  \in R'}$, the $\lambda$-term associated to its current configuration takes the
$q$-th input function and gives it as arguments these register contents. Of
course, the encoding depends of a fixed enumeration of the registers: $R =
\{\hat{r}_1,\ldots,\hat{r}_{|R|}\}$ and $R' =
\{\hat{r}'_1,\ldots,\hat{r}'_{|R'|}\}$.

The above discussion suggests that our register tree transducer be translated to
a $\lambda$-term of the following form, for some $L : A$ and
$N_1,\ldots,N_{|\Gamma|} : A \to A \to A$:
\[ \lambda z.\; (z\;N_1\;\ldots\;N_{|\Gamma|}\;L)\;u_1\;\ldots\;u_{|Q|} :
  \BT_\Gamma[A] \to \BT_\Sigma \]
where $u_q$ encodes the output function at the state $q \in Q$, in such a way
that for all $T \in \BinTree(\Sigma)$,
$\overline{T}\;L\;N_1\;\ldots\;N_{|\Gamma|} : A$ is (up to $=_\beta$) the
representation of the final configuration reached by the transducer when it
reads $T$.

The remaining task is to define $L$ and $N_1,\ldots,N_{|\Gamma|}$. Obviously $L$
should represent the initial configuration: writing $q_I$ for the initial state,
$L = \mathtt{Conf}(q_I,(\langle \rangle)_{k \in R},(\square)_{l \in R'})$.
Concerning $N_i$ for $i \in \{1,\ldots,|\Gamma|\}$, the property we want is that
\[ N_i\;\mathtt{Conf}(q_{\ttl}, (T_k)_{k \in R}, (T'_l)_{l \in
    R'})\;\mathtt{Conf}(q_{\ttr}, (U_k)_{k \in R}, (U'_l)_{l \in R'}) =_\beta
  \mathtt{Conf}(q, (V_k)_{k \in R}, (V'_l)_{l \in R'}) \]
for some $q, (V_k), (V'_l)$ determined by the variable-update and state-update
rules of the transducer for the $i$-th letter $g_i$ of $\Gamma = \{g_1, \ldots,
g_{|\Gamma|}\}$.

To define these configuration-update terms, we first define the terms
$M_{i,q_{\ttl},q_\ttr} : A$ containing the free variables
$r_{k,\ttl},r_{k,\ttr}$ of type $\BT_\Sigma$ for $k \in \{1,\ldots,R\}$, and
$r'_{l,\ttl},r'_{l,\ttr}$ of type $\BT_\Sigma \to \BT_\Sigma$ for $l \in R'$.
For $q_\ttl, q_\ttr \in Q$ and $g_i \in \Gamma$, if $\delta(q_\ttl,q_\ttr,g_i) =
(q,\psi,\psi')$ then
\[ M_{i,q_{\ttl},q_{\ttr}} = \lambda f_1.\; \ldots\; \lambda
  f_{|Q|}.\; f_q\;
  (\Compile(\psi(\hat{r}_1))\;r_{1,\ttl}\;\ldots\;r'_{|R'|,\ttr})\;
  \ldots\;
  (\Compile(\psi'(\hat{r}'_{|R'|}))\;r_{1,\ttl}\;\ldots\;r'_{|R'|,\ttr})\]
with $|R|+|R'|$ arguments passed to $f_q$.

Then the following choice for $N_i$ works: $N_i = \lambda c_{\ttl}.\; \lambda
c_{\ttr}.\; c_\ttl\; H_{i,1}\; \ldots \; H_{i,|Q|}$ where
\begin{itemize}
\item $H_{i,q_\ttl} = \lambda \vec{r_{\ttl}}\vec{r'_{\ttl}}.\; c_{\ttr}\;
  (\lambda \vec{r_{\ttr}}\vec{r'_{\ttr}}.\; M_{i,q_\ttl,1}) \;\ldots\; (\lambda
  \vec{r_{\ttr}}\vec{r'_{\ttr}}.\; M_{i,q_\ttl,|Q|})$;
\item for any term $t$, $\lambda \vec{r_{\ttl}}\vec{r'_{\ttl}}.\; t$ is an
  abbreviation for $\lambda r_{1,\ttl}.\;\ldots\;\lambda r_{|R|,\ttl}.\; \lambda
  r'_{1,\ttl}.\; \ldots\; \lambda r'_{|R'|,\ttl}.\; t$, and similarly for
  $\lambda \vec{r_{\ttr}}\vec{r'_{\ttr}}.\; t$.
\end{itemize}





\section{Details on transductions in $\EAlam$ (\cref{sec:ealam})}

\subsection{The type system of $\EAlam$}
\label{sec:appendix-ealam-type-system}

The following is mostly copied from our previous work~\cite{ealreg}.

The grammar of types for $\EAlam$ is
\[ A ::= \alpha \mid S \qquad S ::= \sigma \multimap \tau \mid \forall
  \alpha.\;S \qquad \sigma, \tau ::= A \mid \oc \sigma \]

The two first classes of types are called respectively \emph{linear} and
\emph{strictly linear}. (We follow the terminology of~\cite{Benedetti};
\enquote{linear} does not mean exponential-free, it merely means that the head
connective is not an exponential.)

The typing judgements involve a context split into three parts: they are of the
form $\Gamma \mid \Delta \mid \Theta \vdash t : \sigma$. The idea is that the
partial assignements $\Gamma$, $\Delta$ and $\Theta$ of variables to types
correspond respectively to linear, non-linear and \enquote{temporary} variables;
accordingly, $\Gamma$ maps variables to linear types (denoted $A$ above),
$\Delta$ maps variables to types of the form $\oc\sigma$, while $\Theta$ maps
variables to arbitrary types. The domains of $\Gamma$, $\Delta$ and $\Theta$ are
required to be pairwise disjoint. The derivation rules for $\EAlam$ are:
\[\text{variable rules}\qquad \frac{}{\Gamma, x : A \mid \Delta \mid \Theta
    \vdash x : A} \qquad \frac{}{\Gamma \mid \Delta \mid \Theta, x : \sigma \vdash x
    : \sigma}\]
\[\text{abstraction rules}\qquad \frac{\Gamma, x : A \mid \Delta \mid \Theta
    \vdash t : \tau}{\Gamma \mid \Delta \mid \Theta \vdash \lambda x.\; t : A
    \multimap \tau} \qquad \frac{\Gamma \mid \Delta, x : \oc\sigma \mid \Theta
    \vdash t : \tau}{\Gamma \mid \Delta \mid \Theta \vdash \lambda\oc x.\; t :
    \oc\sigma \multimap \tau} \]
\[\text{application rule\footnotemark}\qquad \frac{\Gamma \mid \Delta \mid \Theta
    \vdash t : \sigma \multimap \tau \quad \Gamma' \mid \Delta \mid \Theta
    \vdash u : \sigma}{\Gamma \uplus \Gamma' \mid \Delta \mid \Theta \vdash t\;u
    : \tau} \]
\footnotetext{$\Gamma \uplus \Gamma'$ means $\Gamma \cup \Gamma'$ with the
  assumption that the domains of $\Gamma$ and $\Gamma'$ are disjoint.}
\[\text{quantifier rules\footnotemark}\qquad \frac{\Gamma \mid \Delta \mid \Theta
    \vdash t : S}{\Gamma \mid \Delta \mid \Theta \vdash t : \forall \alpha.\;S}
  \qquad \frac{\Gamma \mid \Delta \mid \Theta \vdash t : \forall \alpha.\;
    S}{\Gamma \mid \Delta \mid \Theta \vdash t : S\{\alpha := A\}} \]
\footnotetext{In the introduction rule (left), $\alpha$ must not appear as a
  free variable in $\Gamma$, $\Delta$ and $\Theta$.}
\[\text{functorial promotion rule}\qquad \frac{\varnothing \mid \varnothing \mid
    \Theta \vdash t : \sigma}{\Gamma \mid \oc\Theta, \Delta \mid \Theta' \vdash
    \oc t : \oc\sigma} \]

In these rules, following the conventions established above, $A$ stands for a
linear type, $S$ stands for a strictly linear type and $\sigma$ and $\tau$ stand
for arbitrary types. In particular, in the quantifier elimination rule, $\alpha$
can only be instantiated by a linear type. So, for instance, one cannot give the
type $\oc\beta \multimap \oc\beta$ to $\lambda x.\; x$ through a quantifier
introduction followed by a quantifier elimination; indeed, as one would expect,
the only normal term of this type is $\lambda\oc x.\; \oc{x}$. (Despite this,
the polymorphism is still impredicative.)

\subsection{Encoding regular functions (\cref{thm:ealam-string})}
\label{sec:appendix-ealam-regular}

We fix an input alphabet $\Gamma = \{g_1, \ldots, g_{|\Gamma|}\}$ and an output
alphabet $\Sigma = \{s_1, \ldots, s_{|\Sigma|}\}$.

A first important remark is that the Church encoding in $\EAlam$ consists of an
exponential packaging around an exponential-free term with non-linear free
variables.
\begin{notation}
  We write $t ::_{\Sigma,\alpha} \sigma$, where $\sigma$ is a type which may
  contain the type variable $\alpha$, when the term $t$ uses non-linearly the
  variables $f_i : \alpha \multimap \alpha$ for $i \in \{1,\ldots,|\Sigma|\}$
  and then has type $A$. Formally, using the typing judgment introduced in the
  previous subsection:
  \[t ::_{\Sigma,\alpha} \sigma \iff \varnothing \mid \varnothing \mid f_1 : \alpha
    \multimap \alpha,\, \ldots,\, f_{|\Sigma|} : \alpha \multimap \alpha \vdash
    t : \sigma \]
\end{notation}
Note that $t ::_{\Sigma,\alpha} \sigma \iff \lambda\oc f_1.\; \ldots\;
\lambda\oc f_{|\Sigma|}.\; \oc{t} : (!(\alpha \multimap \alpha))^{|\Sigma|}
\multimap \oc\sigma$.
\begin{definition}
For $w = a_{i_1} \ldots a_{i_n} \in \Sigma^*$, we define $\widetilde{w}$ as follows:
\[ \widetilde{w} = \lambda x.\; f_{i_1}\; (\ldots \;(f_{i_n} \; x)\; \ldots)
  ::_{\Sigma,\alpha} \alpha \multimap \alpha\quad \text{so that}\ \overline{w} =
  \lambda\oc f_1.\; \ldots\; \lambda\oc f_{|\Sigma|}.\; \oc\widetilde{w} :
  \Str_\Sigma \]
\end{definition}

$\widetilde{w}$ is a sort of Church encoding of $w$ \emph{relatively to
  representations of letters provided by the context in the form of non-linear
  variables $f_i : \alpha \multimap \alpha$}. This makes $\alpha \multimap
\alpha$ a kind of \emph{relative type of strings in $\Sigma^*$}, whose advantage
over $\Str_\Sigma$ is that it contains no exponential.

Let us fix a streaming string transducer $(Q,q_I,R,\delta,F)$. We assume without
loss of generality that $Q = \{1, \ldots, |Q|\}$. Recall that $\Fin(|Q|) =
\forall \beta.\; \beta^{|Q|} \multimap \beta$ represents the set of states: the
state $q$ corresponds to the $q$-th projection function $\pi_q = \lambda x_1.\;
\ldots\; \lambda x_{|Q|}.\; x_q$.

According to the discussion in \cref{sec:ealam-string}, $A = \Fin(|Q|) \multimap
(\alpha \multimap \alpha)^{|R|} \multimap (\alpha \multimap \alpha)$ (where
$\alpha$ is a free type variable) should be seen as a type of \emph{linear
  output functions}, that is, of maps $G : Q \to (\Sigma \cup R)^*$ such that
for all $q \in Q$ and $r \in R$, $G(q)$ contains $r$ at most once. (Again, this
is a \emph{relative} type depending on the use of non-linear variables $f_i :
\alpha \multimap \alpha$ given externally.) To formalize this we generalize the
operation $w \rightsquigarrow \widetilde{w}$ to words over $\Sigma \cup R$,
given a fixed enumeration $R = \{r_1, \ldots, r_{|R|}\}$:
\begin{definition}
  For $\omega = c_1 \ldots c_n \in (\Sigma \cup R)^*$, we define
  \[ \widetilde{\omega} = \lambda x.\; \chi(c_1)\; (\ldots \;(\chi(c_n) \; x)\;
    \ldots)\quad\text{where}\ \chi(c) =
    \begin{cases}
      f_i\ \text{for}\ c = a_i \in \Sigma\\
      p_j\ \text{for}\ c = r_j \in R
    \end{cases}\]
  Note that then this is consistent with the previous definition of
  $\widetilde{\omega}$ for $\omega \in \Sigma^*$.

  We also set $\widehat{\omega} = \lambda p_1.\; \ldots \lambda p_{|R|}.\;
  \widetilde{\omega}$.

  Given an output function $G : Q \to (\Sigma \cup R)^*$, we define $\widehat{G}
  = \lambda x.\; x\; \widehat{G(1)} \; \ldots \; \widehat{G(|Q|)}$, so that
  $\widehat{G}$ applied to the $i$-th projection (representing the $i$-th state)
  yields $\widehat{G(i)}$.
\end{definition}
\begin{proposition}
  Because of the linearity constraint for well-typed terms:
  \begin{itemize}
  \item for $\omega \in (\Sigma \cup R)^*$, each register name in $R$ occurs
    \emph{at most once} in $\omega$ if and only if\\
    $p_1 : \alpha \multimap
    \alpha,\, \ldots,\, p_{|R|} : \alpha \multimap \alpha \mid \varnothing \mid
    f_1 : \alpha \multimap \alpha,\, \ldots\,, f_{|\Sigma|} : \alpha \multimap
    \alpha \vdash \widetilde{\omega} : \alpha \multimap \alpha$\\
    or equivalently $\widehat{\omega} ::_{\Sigma,\alpha} (\alpha \multimap \alpha)^{|R|}
      \multimap \alpha \multimap \alpha$
    \item for $G : Q \to (\Sigma \cup R)^*$, $\widehat{G} ::_{\Sigma,\alpha} A$
      iff $G$ is a \emph{linear (i.e.\ copyless)} output function.
  \end{itemize}
\end{proposition}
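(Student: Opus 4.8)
The plan is to prove the first item by a direct analysis of $\EAlam$'s typing discipline and then derive the second from it. The central observation is that the three-zone context of the $::_{\Sigma,\alpha}$ judgment mirrors the copyless condition exactly: the register representatives $p_j$ are destined to live in the \emph{linear} zone, so linearity forces each of them to be used at most once, whereas the letter representatives $f_i$ live in the \emph{temporary} zone, which the application rule shares freely between the two premises and hence constrains in no way. Note also that the two formulations in the first item are interconvertible by the (linear) abstraction rule applied to each $p_j$, so it suffices to reason about $\widehat{\omega}$.

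For the forward direction of the first item, suppose $\widehat{\omega} ::_{\Sigma,\alpha} (\alpha \multimap \alpha)^{|R|} \multimap \alpha \multimap \alpha$. Since $\widehat{\omega} = \lambda p_1.\ldots \lambda p_{|R|}.\; \widetilde{\omega}$ genuinely binds each $p_j$, the linearity property of well-typed $\EAlam$ terms recalled in \cref{sec:ealam} (in any subterm $\lambda x.\, t$, the variable $x$ occurs at most once in $t$) applies verbatim and yields that each $p_j$ occurs at most once in $\widetilde{\omega}$. As $p_j = \chi(r_j)$ occurs in $\widetilde{\omega}$ exactly as many times as $r_j$ occurs in $\omega$, each register name occurs at most once. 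For the converse, assume each register name occurs at most once. Then $\widetilde{\omega}$ is exponential-free, $x$ occurs exactly once, and each $p_j$ occurs at most once; thus linearity holds and stratification holds trivially (there are no $\oc$'s, so every occurrence has depth $0$). I would then exhibit a derivation by induction on $|\omega|$: the body is a right-nested chain $\chi(c_1)\;(\ldots(\chi(c_n)\;x)\ldots)$ of maps of type $\alpha \multimap \alpha$ ending in $x : \alpha$, and at each application one splits the linear context so that the unique occurrence of each $p_j$ (and of $x$) lands in the appropriate branch, while the shared temporary zone supplies every $f_i$; abstracting the $p_j$ then produces the announced judgment for $\widehat{\omega}$.

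For the second item, I would reduce to the first. Write $B = (\alpha \multimap \alpha)^{|R|} \multimap (\alpha \multimap \alpha)$, which is a (strictly) linear type, so the polymorphic $\Fin(|Q|) = \forall \beta.\; \beta^{|Q|} \multimap \beta$ may legally be instantiated at $\beta := B$ (recall that the quantifier-elimination rule permits instantiation only at linear types). Under this instantiation the head variable $x$ of $\widehat{G} = \lambda x.\; x\; \widehat{G(1)}\;\ldots\;\widehat{G(|Q|)}$ receives the type $B^{|Q|} \multimap B$, so typing $\widehat{G} : A$ amounts to typing each $\widehat{G(q)} : B$. Crucially, each $\widehat{G(q)} = \lambda p_1.\ldots\lambda p_{|R|}.\;\widetilde{G(q)}$ binds all of its own $p_j$, so the subterms of the application share no free linear variable and the context splits are trivial (only $x$ is linear in the head branch), with the temporary $f_i$ shared across all branches. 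Hence, by the first item, $\widehat{G} ::_{\Sigma,\alpha} A$ holds iff every $\widehat{G(q)} ::_{\Sigma,\alpha} B$ holds iff each $G(q)$ contains every register at most once, i.e.\ iff $G$ is copyless.

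The only genuinely delicate point is the sufficiency half of the first item: the stated linearity property gives necessity at once, but for sufficiency one must actually produce the derivation, and the crux is to check that $\EAlam$'s context bookkeeping — linear contexts combined by disjoint union in the application rule, temporary contexts shared — does let the application chain be typed precisely when each linear variable occurs at most once. Once this routine verification is carried out, the polymorphic instantiation and the reduction of the second item to the first follow without further difficulty.
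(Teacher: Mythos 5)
Your proof is correct and takes essentially the same route as the paper, which in fact states this proposition without any proof, presenting it as an immediate consequence of the linearity constraint (``Because of the linearity constraint for well-typed terms''). Your elaboration --- necessity read off from the syntactic linearity property of well-typed terms, sufficiency by explicitly constructing the derivation with disjoint splitting of the linear zone and sharing of the temporary zone, and the instantiation $\beta := B$ of $\Fin(|Q|)$ to reduce the second item to the first --- is exactly the routine verification the paper elides.
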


What we are seeking is an $\EAlam$ term of type $\Str_\Gamma[A] \multimap
\Str_\Sigma[\alpha]$ (with the above type $A$ of linear output functions)
computing the same string function as the SST. We will restrict our search to
terms of the form
\[ \lambda z.\; \lambda\oc f_1\; \ldots\ \lambda\oc f_{|\Sigma|}.\; (\lambda\oc
  h.\; \oc{u})\; (z\; \oc{d_1}\; \ldots\; \oc{d_{|\Gamma|}})\]
where $d_i ::_{\Sigma,\alpha} A \multimap A$ for all $i \in
\{1,\ldots,|\Gamma|\}$, and $h : A \multimap A \mid \varnothing \mid \varnothing
\vdash u : \alpha \multimap \alpha$. It is thanks to the presence of these outer
$\lambda\oc f_i$ binding non-linearly the $f_i : \alpha \multimap \alpha$ that
the various relative representations we are manipulate are meaningful. The idea
is that, since
\[ (\lambda\oc h.\; \oc{u})\; (\overline{w}\; \oc{d_1}\; \ldots\;
  \oc{d_{|\Gamma|}}) =_\beta \oc{}u\{h:=(\lambda x.\; d_{i_1}\; (\ldots(d_{i_n}\;
  x)\ldots))\}\quad \text{for $w = g_{i_1} \ldots g_{i_n} \in \Gamma^*$,}
\]
we can take:
\begin{itemize}
\item $d_i$ to be the representation of the action of the transition for the
  letter $g_i \in \Gamma$ on the output function, that is, what we called
  $\delta^O(g_i, -)$ (\cref{def:delta-o});
\item $u$ to be a term applying $h$ to a representation of the SST's original
  output function $F$, and then using the result to extract the image of the
  input word, following the recipe of \cref{prop:delta-o}.
\end{itemize}

Formally, what we want for $d_i$ ($i \in \{1,\ldots,|\Gamma|\}$) is $d_i
::_{\Sigma,\alpha} A \multimap A$ and $d_i\;\widehat{G} =_\beta
\widehat{\delta^O(g_i,G)}$. The typing condition tells us to look for a term of
the form $d_i = \lambda z.\; \lambda y.\; y\; t_{i,1}\; \ldots\; t_{i,q}$, where
for all $q \in Q$, we have $z : A,\, y : \Fin(|Q|) \mid \varnothing \mid f_1 :
\alpha \multimap \alpha,\,\ldots,\, f_{|\Sigma|} : \alpha \multimap \alpha
\vdash t_{i,q} : A$.

Let $q \in Q$. We also want $t_{i,q}\{z := \widehat{G}\}$ to somehow represent
$\delta^O(a,G)(q)$ for $a = g_i \in \Gamma$; by definition, this equals
$s_{a,q}^*(G(q'_{a,q}))$, where $(q'_{a,q},s_{a,q}) = \delta(q,a)$ (cf.\
\cref{def:delta-o}). We use a technique analogous to the proof of
\cref{lem:stlam-morphism} to express the application of a monoid morphism on a
Church-encoded string (here, the concerned string is $z\;\pi_{q'_{q,a}}$):
\[ t_{i,q} = \lambda p_1.\; \ldots\; \lambda p_{|R|}.\;
  z\;\pi_{q'_{q,a}}\;\widetilde{s_{a,q}(r_1)}\;\ldots\;\widetilde{s_{a,q}(r_{|R|})} \]
In order for this to be well-typed, various linearity conditions must be
satisfied. In particular, each $p_j$ ($j \in \{1,\ldots,|R|\}$) must occur at
most once in all $\widetilde{s_{a,q}(r_1)},\ldots,\widetilde{s_{a,q}(r_{|R|})}$.
By definition of the encoding $\widetilde{(\;\cdot\;)}$, this is the case iff
each $r_j$ occurs at most once in all $s_{a,q}(r)$ for $r \in R$. This none
other than the copyless assignment condition for transitions.

This concludes the definition of $d_i$. As for $u$, it is set to $u = h\;
\widehat{F}\; \pi_{q_I}\; (\lambda x.\; x)\; \ldots\; (\lambda x.\; x)$, with
$|R|$ times $(\lambda x.\; x)$. Using \cref{prop:delta-o}, one can check that
the term we get in the end -- that is, $\lambda z.\; \lambda\oc f_1\; \ldots\
\lambda\oc f_{|\Sigma|}.\; (\lambda\oc h.\; \oc{u})\; (z\; \oc{d_1}\; \ldots\;
\oc{d_{|\Gamma|}})$ -- computes the right function.

\subsection{Encoding composition by substitution
  (\cref{prop:ealam-cbs})}
\label{sec:appendix-ealam-polyseq}

We must show that if $f : \Gamma^* \to I^*$ and $g_i : \Gamma^* \to \Sigma^*$
are defined by some respective $\EAlam$ terms $t : \oc\Str_\Gamma \multimap
\oc\Str_I$ and $u_i : \oc\Str_\Gamma \multimap \oc\Str_\Sigma$ (for $i \in I$,
assuming w.l.o.g.\ that $I = \{1,\ldots,|I|\}$), then their composition by
substitution $\CbS(f,(g_i)_{i \in I})$ is also definable as a term of type
$\oc\Str_\Gamma \multimap \oc\Str_\Sigma$. The term we use for that purpose is
\[ \lambda\oc s.\; (\lambda\oc x.\; \lambda\oc y_1.\; \ldots \lambda\oc
  y_{|I|}.\; \oc{s'})\;
  (t\;\oc{s})\;(u_1\;\oc{s})\;\ldots\;(u_{|I|}\;\oc{s})\; \]
\[ \text{where}\ s' = \lambda\oc f_1.\; \ldots\; \lambda\oc f_{|\Sigma|}.\; x\;
  (y_1\; \oc{f_1}\; \ldots\; \oc{f_{|\Sigma|}})\; \ldots\; (y_{|I|}\; \oc{f_1}\;
  \ldots\; \oc{f_{|\Sigma|}}) \]

\subsection{Encoding regular tree functions (\cref{thm:ealam-tree})}
\label{sec:appendix-ealam-tree}

Regular tree functions are the functions computed by bottom-up ranked tree
transducers, whose definition we now give in its entirety.

\begin{definition}[{\cite{BRTT}}]
  A \emph{conflict relation} is a binary reflexive and symmetric relation.

  Let $\incoh$ be a conflict relation over $V \cup V'$, and $E \in
  \ExprBT(\Sigma,V,V')$. The expression $E$ is \emph{consistent with~$\incoh$}
  when
  \begin{itemize}
  \item each variable in $V \cup V'$ appears at most once in $E$;
  \item for all $x,y \in V \cup V'$, if $x \neq y$ and $x \incoh y$, then $E$
    does not contain both $x$ and $y$.
  \end{itemize}
  Consistency with $\incoh$ is defined in the same way for expressions in
  $\ExprdBT(\Sigma,V,V')$.

  A \emph{bottom-up ranked tree transducer (BRTT)} is a register tree transducer
  $(Q,q_I,R,R',F,\delta)$ endowed with a conflict relation $\incoh$ on $R \cup
  R'$, such that:
  \begin{itemize}
  \item for all $q \in Q$, the expression $F(q)$ is consistent with $\incoh$;
  \item for all $\varepsilon : R \to \ExprBT(\Sigma,R_{\ttl\ttr},R'_{\ttl\ttr})$
    and $\varepsilon' : R' \to \ExprdBT(\Sigma,R_{\ttl\ttr},R'_{\ttl\ttr})$, if
    there exist $q, q_\ttl, q_\ttr \in Q$ and $a \in \Gamma$ such that $(q,
    \varepsilon, \varepsilon') = \delta(q_\ttl,q_\ttr,a)$, then
    \begin{itemize}
    \item all $\varepsilon(r)$ for $r \in R$ and all $\varepsilon'(r')$ for $r'
      \in R'$ are consistent with $\incoh$;
    \item if $x_1, x_2, y_1, y_2 \in R \cup R'$, $x_1 \incoh x_2$ and, for some
      $z \in \{\ttl,\ttr\}$, $(x_1, z)$ appears in\footnote{By $\varepsilon \cup
        \varepsilon'$ we mean the map $R \cup R' \to
        \ExprBT(\Sigma,R_{\ttl\ttr},R'_{\ttl\ttr}) \cup
        \ExprdBT(\Sigma,R_{\ttl\ttr},R'_{\ttl\ttr})$ induced in the obvious way
        by $\varepsilon$ and $\varepsilon'$ -- recall that $R \cup R'$ is a
        disjoint union.} $(\varepsilon \cup \varepsilon')(y_1)$ and $(x_2, z)$
      appears in $(\varepsilon \cup \varepsilon')(y_2)$, then $y_1 \incoh y_2$.
    \end{itemize}
  \end{itemize}
\end{definition}

This is indeed a kind of generalized linearity condition: when $\incoh \;=
\{(x,x) \mid x \in R \cup R'\}$, we recover the notion of copyless assignment
used for streaming string transducers. Before we start translating BRTTs into
$\EAlam$ terms, we first reformulate this relaxed linearity using
non-conflicting subsets.
\begin{definition}
  Let $X$ be a set endowed with a conflict relation $\incoh$. A subset $P
  \subseteq X$ is said to be \emph{non-conflicting} if $\forall x,y \in P,\, x =
  y \lor x \not\incoh y$. We write $P \sqsubseteq X$.
\end{definition}
\begin{remark}
  As the reader might have noticed, the notations are meant to draw parallels to
  the structure of coherence spaces (a simple semantics of linear logic).
\end{remark}
\begin{notation}
  For $E \in \ExprBT(\Sigma,V,V') \cup \ExprdBT(\Sigma,V,V')$, we write
  $\mathcal{V}(E)$ for the set of variables occurring in $E$, so that
  $\mathcal{V}(E) \subseteq V \cup V'$.
\end{notation}
\begin{proposition}
  An expression $E \in \ExprBT(\Sigma,V,V') \cup \ExprdBT(\Sigma,V,V')$ is
  consistent with a conflict relation over $V \cup V'$ if and only if it is
  \emph{linear} (each variable appears at most once) and $\mathcal{V}(E)
  \sqsubseteq V \cup V'$ (that is, $\mathcal{V}(E)$ is \emph{non-conflicting}).

  A register tree transducer $(Q,q_I,R,R',F,\delta)$ endowed with a conflict
  relation $\incoh$ on $R \cup R'$ is a BRTT if and only if (recall that
  $R_{\ttl\ttr} = R \times \{\ttl,\ttr\}$):
  \begin{itemize}
  \item for all $q \in Q$, $F(q)$ is linear and $\mathcal{V}(F(q))$ is
    non-conflicting;
  \item for all $\varepsilon : R \to \ExprBT(\Sigma,R_{\ttl\ttr},R'_{\ttl\ttr})$
    and $\varepsilon' : R' \to \ExprdBT(\Sigma,R_{\ttl\ttr},R'_{\ttl\ttr})$, if
    there exist $q, q_\ttl, q_\ttr \in Q$ and $a \in \Gamma$ such that $(q,
    \varepsilon, \varepsilon') = \delta(q_\ttl,q_\ttr,a)$, then
    \begin{itemize}
    \item all $\varepsilon(r)$ for $r \in R$ and all $\varepsilon'(r')$ for $r'
      \in R'$ are linear;
    \item for all non-conflicting $P \sqsubseteq R \cup R'$, the sets
      $\mathcal{V}((\varepsilon \cup \varepsilon')(y))$ for $y \in P$ are
      pairwise disjoint, and their union $\bigcup_{y\in P}
      \mathcal{V}((\varepsilon \cup \varepsilon')(y))$ is non-conflicting in
      $R_{\ttl\ttr} \cup R'_{\ttl\ttr}$ -- where the conflict relation of the
      latter is defined so that\footnote{Pursuing the analogy with coherence
        spaces, we have, morally, $R_{\ttl\ttr} \cong (R \otimes \{\ttl\}) \with
        (R \otimes \{\ttr\})$.} there is never a conflict between $(x_1,\ttl)$
      and $(x_2,\ttr)$ for $x_1,x_2 \in R \cup R'$.
    \end{itemize}
  \end{itemize}
\end{proposition}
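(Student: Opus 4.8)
The plan is to prove both equivalences by unwinding the definitions, the only genuine content being a case analysis on the sizes of the witnessing sets. Part~1 is immediate: the first clause of \enquote{consistent with $\incoh$} is verbatim the linearity requirement, while its second clause --- for distinct $x \incoh y$, $E$ does not contain both --- is exactly the contrapositive of the assertion that $\mathcal{V}(E)$ is non-conflicting. So I would dispatch Part~1 in one or two sentences and then use it freely.

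For Part~2 I would first make explicit the conflict relation tacitly carried by $R_{\ttl\ttr} \cup R'_{\ttl\ttr}$, namely $(x_1,z_1) \incoh (x_2,z_2)$ iff $z_1 = z_2$ and $x_1 \incoh x_2$ (the relation for which $(x_1,\ttl)$ and $(x_2,\ttr)$ never conflict, as in the footnote); it inherits reflexivity from $\incoh$. With Part~1 in hand, the clause \enquote{$F(q)$ is linear and $\mathcal{V}(F(q))$ non-conflicting} is literally \enquote{$F(q)$ consistent with $\incoh$}, and the linearity of each $\varepsilon(r),\varepsilon'(r')$ transfers directly. The crux then reduces to a single equivalence: for fixed $\varepsilon,\varepsilon'$, writing $W_y := \mathcal{V}((\varepsilon \cup \varepsilon')(y))$, the conjunction of \enquote{each $W_y$ is non-conflicting} with the original cross-expression clause is equivalent to the reformulated clause \enquote{for every non-conflicting $P \sqsubseteq R \cup R'$, the $W_y$ ($y \in P$) are pairwise disjoint and $\bigcup_{y \in P} W_y$ is non-conflicting}.

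I would establish this equivalence by probing the reformulated clause on small $P$ and, conversely, reducing arbitrary $P$ to pairs. From right to left, instantiating $P = \{y\}$ (always non-conflicting) recovers that each $W_y$ is non-conflicting; and given $x_1 \incoh x_2$ with $(x_1,z) \in W_{y_1}$, $(x_2,z) \in W_{y_2}$, if $y_1 \neq y_2$ and $y_1 \not\incoh y_2$ then $P = \{y_1,y_2\}$ is non-conflicting, so the pair $(x_1,z),(x_2,z)$ violates either disjointness or non-conflictingness of $W_{y_1} \cup W_{y_2}$ --- a contradiction forcing $y_1 \incoh y_2$ (the case $y_1 = y_2$ holding by reflexivity). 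From left to right, for any non-conflicting $P$: disjointness of $W_{y_1}, W_{y_2}$ for $y_1 \neq y_2$ follows because a shared $(x,z)$ would, via the original clause with $x_1 = x_2 = x$ and reflexivity, give $y_1 \incoh y_2$, contradicting $P \sqsubseteq R \cup R'$; and any two distinct conflicting elements $(x_1,z),(x_2,z)$ of $\bigcup_{y \in P} W_y$ would lie in $W_{y_1}, W_{y_2}$ with $y_1 \incoh y_2$ by the original clause, hence $y_1 = y_2$ since $P$ is non-conflicting, contradicting that $W_{y_1}$ is non-conflicting.

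The main obstacle is not any single step but the quantifier bookkeeping: one must keep the lifted relation (in particular its \enquote{same side $z$} condition) separate from the relation on $R \cup R'$, and carefully track the reflexive/equality corner cases ($x_1 = x_2$, $y_1 = y_2$) that make the original clause vacuously true yet are exactly where reflexivity of $\incoh$ is used. Once these are pinned down, the statement is a routine translation with no automata-theoretic content.
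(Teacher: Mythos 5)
Your proposal is correct. The paper offers no proof of this proposition at all---it is stated as a direct reformulation of the BRTT definition---and your unwinding (Part~1 as a literal matching of clauses; then, for the crux, the singleton and pair instantiations of $P$ in one direction, and the reduction of arbitrary non-conflicting $P$ to pairs in the other, with the reflexivity/equality corner cases and the explicit same-side lifting of the conflict relation to $R_{\ttl\ttr} \cup R'_{\ttl\ttr}$ handled correctly) is precisely the routine verification the paper leaves implicit.
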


The moral of the story until now is that, while it is not true that the
transition function performs copyless assignments, one can say instead that:
\begin{itemize}
\item for every non-conflicting set of register names $P \sqsubseteq R \cup R'$,
  the contents of the registers in $P$ after a transition are obtained linearly
  (by copyless assignment) from the contents of a non-conflicting subset of
  $R_{\ttl\ttr} \cup R'_{\ttl\ttr}$;
\item in the end, depending on the final state, one such $P \sqsubseteq R \cup
  R'$ is used linearly to produce the output.
\end{itemize}
We must now show that $\EAlam$ is expressive enough to accomodate this
variation on linearity.

Let $(Q,q_I,R,R',F,\delta, \incoh)$ be a BRTT. Analogously to the encoding of
SSTs in $\EAlam$, we encode this as a term of the form $\lambda z.\; \lambda\oc
f_1.\; \ldots\; \lambda\oc f_{|\Gamma|}.\; \lambda\oc{}x.\; (\ldots)$. Thus, we
will be able to manipulate representations of data relatively to non-linear
variables $f_i : \alpha \multimap \alpha \multimap \alpha$ (representing
$g_i\langle -,- \rangle$ for $g_i \in \Gamma$) and $x : \alpha$ (representing
$\langle \rangle$): abbreviating $\vec{f} : \alpha \multimap \alpha \multimap
\alpha$ for $f_1 : \alpha \multimap \alpha \multimap \alpha,\, \ldots,\,
f_{|\Sigma|} : \alpha \multimap \alpha \multimap \alpha$, there are natural
encodings
\begin{align*}
T \in \BinTree(\Sigma) &\rightsquigarrow \varnothing \mid \varnothing \mid
  \vec{f} : \alpha \multimap \alpha \multimap \alpha,\, x : \alpha \vdash
  \widetilde{T} : \alpha\\
T' \in \dBinTree(\Sigma) &\rightsquigarrow \varnothing \mid \varnothing
  \mid \vec{f} : \alpha \multimap \alpha \multimap \alpha,\, x : \alpha \vdash
  \widetilde{T'} : \alpha \multimap \alpha\\
E \in \ExprBT(\Sigma,V,V') &\rightsquigarrow \varnothing \mid \varnothing
  \mid \vec{f} : \alpha \multimap \alpha \multimap \alpha,\, x : \alpha \vdash
  \widetilde{E} : \alpha^{|V|} \multimap (\alpha \multimap \alpha)^{|V'|}
  \multimap \alpha\\
E' \in \ExprdBT(\Sigma,V,V') &\rightsquigarrow \varnothing \mid \varnothing
  \mid \ldots \vdash
  \widetilde{E'} : \alpha^{|V|} \multimap (\alpha \multimap \alpha)^{|V'|}
  \multimap (\alpha \multimap \alpha)
\end{align*}
For a BRTT with $\incoh \;= \{(x,x) \mid x \in R \cup R'\}$), i.e.\ with truly
copyless assignments, the relative type of configurations would be
\[ A = \Fin(|Q|) \otimes \alpha^{\otimes|R|} \otimes (\alpha \multimap
  \alpha)^{\otimes|R'|} \qquad\text{where}\ B^{\otimes m} = B
  \otimes \ldots \otimes B \]
The transition after reading some label $a \in \Sigma$ in a node must be of the
type $A \multimap A \multimap A$. Morally, this is isomorphic to $(A \otimes A)
\multimap A$, and since $|R_{\ttl\ttr}| = 2|R|$,
\[ A \otimes A \cong \Fin(|Q|) \otimes \Fin(|Q|) \otimes
  \alpha^{\otimes|R_{\ttl\ttr}|} \otimes (\alpha \multimap
  \alpha)^{|R'_{\ttl\ttr}|} \]
These isomorphisms of linear logic are not quite reflected as actual type
isomorphisms in $\EAlam$, since the multiplicative conjunction $\otimes$ does
not exist as a primitive, and we use instead a second-order encoding already
exploited in~\cite{Benedetti,ealreg}. But they illustrate the reason why
$\delta(-,-,a)$ ($a \in \Gamma$) can be turned into a term of type $A \multimap
A \multimap A$; in particular the type
\[ \alpha^{\otimes|R_{\ttl\ttr}|} \otimes (\alpha \multimap
  \alpha)^{\otimes|R'_{\ttl\ttr}|} \multimap \alpha^{\otimes|R|} \otimes (\alpha
  \multimap \alpha)^{\otimes|R'|} \]
corresponds to the $(\varepsilon \cup \varepsilon') : R \cup R' \to
\ExprBT(\Sigma,R_{\ttl\ttr},R'_{\ttl\ttr}) \cup
\ExprdBT(\Sigma,R_{\ttl\ttr},R'_{\ttl\ttr})$ that was mentioned in the
definition of BRTTs. And the function arrow can be linear because the
register update $(\varepsilon \cup \varepsilon')$ is copyless.

We now come to the case of a BRTT with an arbitrary conflict relation. The
relaxed linearity of $(\varepsilon \cup \varepsilon')$ is then manifested as the
fact that for all non-conflicting $P \sqsubseteq R \cup R'$, one can represent
(relatively to $f_i$ and $x$) its action to produce the new contents of $P$ as
an $\EAlam$ term of type
\[ \alpha^{\otimes|S \cap R_{\ttl\ttr}|} \otimes (\alpha \multimap
  \alpha)^{\otimes|S \cap R'_{\ttl\ttr}|} \multimap \alpha^{\otimes|P \cap R|}
  \otimes (\alpha \multimap \alpha)^{\otimes|P \cap R'|}\quad S =
  \bigcup_{y\in P} \mathcal{V}((\varepsilon \cup \varepsilon')(y)) \]
It is important to observe that $S$ is also non-conflicting, thanks to a
previous proposition. The entirety of $(\varepsilon \cup \varepsilon')$ can
therefore be faithfully represented by an $\EAlam$ term of type
\[ \bigwith_{S \sqsubseteq R_{\ttl\ttr} \cup R'_{\ttl\ttr}} \left(
    \alpha^{\otimes|S \cap R_{\ttl\ttr}|} \otimes (\alpha \multimap
    \alpha)^{\otimes|S \cap R'_{\ttl\ttr}|} \right) \multimap \bigwith_{P
    \sqsubseteq R \cup R'} \left( \alpha^{\otimes|P \cap R|} \otimes (\alpha
    \multimap \alpha)^{\otimes|P \cap R'|} \right) \]
using the encoding of the additive conjunction in $\EAlam$
\[ A_1 \with \ldots \with A_m := \forall \gamma.\; (\forall \beta.\; \beta
  \multimap (\beta \multimap A_1) \multimap \ldots \multimap (\beta \multimap
  A_m) \multimap \gamma)\multimap \gamma \]
This explains the use of the type of configurations
\[ A = \Fin(|Q|) \otimes \bigwith_{P \sqsubseteq R \cup R'}
  \left( \alpha^{\otimes|P \cap R|} \otimes
      (\alpha \multimap \alpha)^{\otimes|P \cap R'|} \right) \]
for general BRTTs. (To recover the left side of the previous type from $A
\otimes A$, use the canonical function
$\displaystyle (A_1 \with \ldots \with A_m) \otimes (B_1 \with \ldots \with
B_m) \multimap \bigwith_{1 \leq i,j \leq m} (A_i \otimes B_j)$.)

At the end, one must extract the output from the final configuration.
Fortunately, for any state, the corresponding output expression in
$\ExprBT(\Sigma,R,R')$ only involves a non-conflicting set $P \sqsubset R \cup
R'$ of variables. Thus, one can project the final configuration to retrieve a
datum of type $\alpha^{\otimes|P \cap R|} \otimes (\alpha \multimap
\alpha)^{\otimes|P \cap R'|}$ for this specific $P$; this is sufficient to
determine the output tree (represented by an element of type $\alpha$).

\end{document}